\documentclass[a4paper,11pt,reqno,english]{amsart}

\usepackage{amsmath,amssymb,amsthm,mathtools}
\usepackage[a4paper,margin=1in]{geometry}
\usepackage[main=english]{babel}
\usepackage{xcolor}
\usepackage[colorlinks=true,linkcolor=blue,citecolor=blue,urlcolor=blue]{hyperref}

\usepackage[T1]{fontenc}
\usepackage{mathpazo}

\theoremstyle{plain}
    \newtheorem{thm}{Theorem}[section]
    \newtheorem{lem}[thm]{Lemma}
    \newtheorem{prop}[thm]{Proposition}
    \newtheorem{cor}[thm]{Corollary}
    \newtheorem{introthm}{Theorem}

\theoremstyle{definition}
    \newtheorem{defn}[thm]{Definition}

\theoremstyle{remark}
    \newtheorem{rmk}[thm]{Remark}
    \newtheorem{example}[thm]{Example}

\numberwithin{equation}{section}

\newcommand{\RR}{\mathbb{R}}
\newcommand{\CC}{\mathbb{C}}
\newcommand{\ZZ}{\mathbb{Z}}
\newcommand{\ii}{\sqrt{-1}}
\newcommand{\PP}{\mathbb{P}}
\DeclareMathOperator{\Ric}{Ric}

\title{Reduced Solutions of Toda Systems for Arbitrary Complex Simple Lie Algebras}
\date{\today}
\author{Yiqian Shi}
\address{School of Mathematical Sciences, University of Science and Technology of China \newline \indent  Hefei, 230026, People's Republic of China}
\email{yqshi@ustc.edu.cn}

\author{Chunhui Wei$^\dagger$}
\address{School of Mathematics, Zhejiang University, Hangzhou 310058, People's Republic of China}
\email{chunhuiwei@zju.edu.cn}

\thanks{$^\dagger$C.W. is the corresponding author.}

\author{Bin Xu}
\address{School of Mathematical Sciences, University of Science and Technology of China \newline \indent  Hefei, 230026, People's Republic of China}
\email{bxu@ustc.edu.cn}
\begin{document}

\keywords{Toda systems; principal $\operatorname{SL}_2$-subgroups; flag manifolds; monodromy; spherical metrics}

\begin{abstract}
We study reduced solutions of Toda systems generated by spherical metrics for an arbitrary complex simple Lie algebra. The main result is an intrinsic centralizer parametrization: for a spherical metric with compact monodromy closure $G_v\subset\operatorname{PSU}(2)$, the reduced family is the positive Iwasawa slice cut out by the centralizer of the principal image of $G_v$. As an application, we transfer existing spherical-metric existence results to the diagonal Toda source tuple $(D,\ldots,D)$.
\end{abstract}
\maketitle
\tableofcontents

\section*{Introduction}

Toda systems associated with Cartan matrices arise in integrable systems, complex geometry, and representation theory. We begin by specifying the equation and the geometric objects used in the statements of the main results. Let $\mathfrak g$ be a complex simple Lie algebra of rank $n$, let $C=(a_{ij})$ be its Cartan matrix, and let $S$ be a Riemann surface. A $\mathfrak g$-Toda solution is an $n$-tuple of positive $(1,1)$-forms $\boldsymbol\varphi=(\varphi_1,\ldots,\varphi_n)$ satisfying
\[
\Ric(\boldsymbol\varphi)=2\boldsymbol\varphi C.
\]
If locally $\varphi_i=\frac{\sqrt{-1}}2e^{u_i}\,dz\wedge d\bar z$, this means
\[
\frac{\partial^2u_i}{\partial z\,\partial\bar z}
=-\sum_{j=1}^n a_{ji}e^{u_j},\qquad i=1,\ldots,n.
\]
For $\mathfrak g=\mathfrak{sl}_2(\CC)$ this is the equation for metrics of constant Gaussian curvature $4$. We call its solutions $\operatorname{SU}(2)$ Toda solutions, or spherical metrics.

Let $G$ be the simply connected complex group with Lie algebra $\mathfrak g$, let $K\subset G$ be a compact form, and fix a Borel subgroup $B_0\subset G$. The adjoint form of $G$ is
\[
G_\mathrm{ad}:=\mathrm{Ad}(G)\simeq G/Z(G)=\mathrm{Int}(\mathfrak{g}).
\]
Denote $K_{\mathrm{ad}}:=\operatorname{Ad}(K)$ and $B_{\mathrm{ad}}:=\operatorname{Ad}(B_0)$ for the corresponding images. The complete flag variety is $F=G/B_0=G_{\mathrm{ad}}/B_{\mathrm{ad}}$. Its standard holomorphic distribution is the $G$-invariant distribution generated at the base flag by the positive simple-root directions. A holomorphic curve tangent to it is called \emph{integral}, and is \emph{nondegenerate} when all its simple-root derivative components are nonzero.

For every fundamental representation $V_i$, there is a natural projective map $\pi_i:F\to\PP(V_i)$. We equip $\PP(V_i)$ with its $K$-invariant Fubini--Study form $\varphi_{\mathrm{FS},i}$. The generalized Pl\"ucker formulas say that a nondegenerate integral curve $f$ produces the Toda solution
\[
\bigl(f^*\pi_1^*\varphi_{\mathrm{FS},1},\ldots,
f^*\pi_n^*\varphi_{\mathrm{FS},n}\bigr)
\]
\cite{Positselskii1991,RazumovSaveliev1994,Nie2017}. On a non-simply connected surface, such a curve is an equivariant map $\widetilde f:\widetilde S\to F$ from the universal cover, together with its monodromy representation of $\pi_1(S)$ in $G_{\mathrm{ad}}$. We call it \emph{unitary} when its monodromy lies in $K_{\mathrm{ad}}$.

The Toda--flag correspondence rests on classical constructions. Positselskii established the local generalized Pl\"ucker formulas \cite{Positselskii1991}, while Razumov--Saveliev developed the flat-connection and reality framework for general simple groups \cite{RazumovSaveliev1994}; Nie related the same equations to integral curves of standard differential systems \cite{Nie2017}. Explicit holomorphic/projective-curve realizations for the classical types can be found in Gervais--Saveliev \cite{GervaisSaveliev1996} and Doliwa \cite{Doliwa1997}. Sijbrandij relates $\tau$-holomorphic flag curves to local open Toda frames \cite[Sections~4.2--4.3]{Sijbrandij2000}; his weak congruence theorem shows that their individual Toda invariants determine such curves up to a constant compact-group translation on an arbitrary Riemann surface \cite[Theorem~7.1]{Sijbrandij2000}. The nondegenerate holomorphic integral curves considered here belong to this $\tau$-holomorphic class.

For type $A_n$, the holomorphic-curve approach also yields classifications under polynomial energy growth: Eremenko treats the plane \mbox{\cite{Eremenko2007}}, and Zhao treats the punctured plane, relating the associated unitary curves to linear ordinary differential equations with Laurent-polynomial coefficients \mbox{\cite{Zhao2025}}.

Shi, Wei, and Xu \cite[Theorems~1.1 and~1.3]{ShiWeiXu2026} give the type-$A_n$ model: the rational normal (Veronese) curve composed with a spherical developing map produces the reduced solutions, and the parameter spaces are controlled by the spherical monodromy closure and its commutant. Building on this model, we study reduced solutions for every complex simple Lie algebra by combining the classical Toda--flag correspondence with the principal-$\operatorname{SL}_2$ centralizer calculation.

The earlier preprint of Shi--Wei--Xu \mbox{\cite{ShiWeiXu2022}} constructs the type-$A_n$ family from a locally univalent meromorphic function on a simply connected domain; the later work cited above treats its dependence on spherical monodromy.

The group-theoretic integration and zero-curvature viewpoint goes back to Leznov--Saveliev \cite{LeznovSaveliev1979}; see also their systematic treatment \cite[Chs.~3--4]{LeznovSaveliev1992}. The relation with primitive maps and Toda frames is developed further by Bolton--Pedit--Woodward \cite{BoltonPeditWoodward1995}.

The open Toda system uses a finite-type Cartan matrix; its untwisted affine extension adds the negative highest-root interaction. Cui, Wei, Yang, and Zhang \cite{CuiWeiYangZhang2023} study blow-up masses for affine $B_2^{(1)}$, while Cui, Nie, and Yang \cite{CuiNieYang2023} treat affine $A$ and $C^t$ types, relating local masses to affine Weyl groups. A direct link with open Toda theory occurs in bubbling: surviving components of rescaled affine $A_n^{(1)}$ solutions can converge to finite-energy open $A_\ell$ systems \cite[Proposition~2.2]{CuiNieYang2023}. Open Toda solutions thus provide local profiles in affine blow-up analysis. For open Toda systems, Cui, Gui, Jevnikar, Lin, and Yang \cite{CuiGuiJevnikarLinYang} describe the $D_n$ and $F_4$ Weyl-group mass sets used in blow-up analysis. Nie \mbox{\cite{Nie2025}} constructs explicit Toda solutions whose blow-up masses correspond to Weyl-group elements.

Finite-energy solutions on the plane with one singular source were classified in type $A_n$ by Lin--Wei--Ye \mbox{\cite{LinWeiYe2012}}, in types $B_n,C_n$ by Nie \mbox{\cite{Nie2016}}, and in type $G_2$ by Ao--Lin--Wei \mbox{\cite{AoLinWei2015}}. These classifications already include the reduced solutions satisfying that planar setup. In particular, Nie describes the parameter spaces as $AN_\Gamma$, with the allowed unipotent parameters determined by the source strengths and single-valuedness. Here we classify principal-curve families on arbitrary Riemann surfaces in terms of their full compact monodromy closure.

We first record the global, equivariant form of the classical Toda--flag correspondence. The local Toda connections glue to a flat $K_{\mathrm{ad}}$-bundle; the standard developing-map and holonomy construction then yields a nondegenerate integral curve on the universal cover with compact monodromy \cite[Ch.~II, \S\S3--4]{KobayashiNomizu1963}. The generalized Pl\"ucker formulas give the converse, and weak congruence gives uniqueness up to a constant $K_{\mathrm{ad}}$-translation.

\begin{introthm}
\label{introthm:global-correspondence}
Let $S$ be a Riemann surface. Every smooth $\mathfrak g$-Toda solution on $S$ is induced by a nondegenerate equivariant integral curve $\widetilde f:\widetilde S\to G_{\mathrm{ad}}/B_{\mathrm{ad}}$ whose monodromy lies in $K_{\mathrm{ad}}$. Two such curves induce the same Toda solution if and only if they differ by a constant left translation in $K_{\mathrm{ad}}$.
\end{introthm}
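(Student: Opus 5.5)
We prove the two assertions separately: existence through a flat $K_{\mathrm{ad}}$-connection built from the Toda data, uniqueness through the weak congruence theorem.

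\emph{Existence.} Fix a Chevalley basis $\{H_i,E_{\pm\alpha_i}\}$ of $\mathfrak g$, adapted so that $\mathfrak k=\mathrm{Lie}(K)$ is the fixed-point set of the Cartan involution $\theta$ sending $E_{\alpha}\mapsto -E_{-\alpha}$ and $H\mapsto -\bar H$. On a coordinate chart with $\varphi_i=\frac{\ii}{2}e^{u_i}dz\wedge d\bar z$, define the standard Leznov--Saveliev connection form
\[
A=\Bigl(\tfrac12\textstyle\sum_i \partial_z w_i\,H_i+\sum_i e^{\lambda_i}E_{\alpha_i}\Bigr)dz
-\Bigl(\tfrac12\textstyle\sum_i \partial_{\bar z} w_i\,H_i+\sum_i e^{\lambda_i}E_{-\alpha_i}\Bigr)d\bar z .
\]
Here the $w_i$ are the linear combinations of the $u_j$ obtained via the inverse Cartan matrix (the $\log$ of the fundamental-weight Hermitian norms), and the $\lambda_i$ are chosen so that $[E_{\alpha_i},E_{-\alpha_i}]$ terms reproduce $e^{u_i}$. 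A direct computation shows that $dA+\tfrac12[A,A]=0$ is equivalent to $\Ric(\boldsymbol\varphi)=2\boldsymbol\varphi C$. Moreover $A$ takes values in $\mathfrak k$ by the choice of $\theta$.

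Under a change of coordinate $z\mapsto z'$, each $u_i$ shifts by $-\log|dz'/dz|^2$. Hence $A$ changes by a gauge transformation $\exp\bigl(\tfrac12\log(dz'/dz)\rho^\vee-\text{c.c.}\bigr)$, where $\rho^\vee=\sum_i H_{\varpi_i}$. This transformation lies in the maximal torus of $K_{\mathrm{ad}}$ (it is $\mathrm{Ad}$ of a unimodular phase), which is exactly why one must pass to the adjoint group. Thus the local forms glue to a flat $K_{\mathrm{ad}}$-bundle $P\to S$ with connection.

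Pull $P$ back to $\widetilde S$, where it is trivial with flat section $\Phi:\widetilde S\to K_{\mathrm{ad}}$ solving $\Phi^{-1}d\Phi=A$ in local gauges (Kobayashi--Nomizu). Deck transformations act by $\Phi\circ\gamma=\rho(\gamma)\Phi\cdot(\text{gauge})$, with $\rho:\pi_1(S)\to K_{\mathrm{ad}}$ the holonomy. Set $\widetilde f=\Phi\cdot B_{\mathrm{ad}}$. Since the $(1,0)$-part of $A$ lies in $\mathfrak b^+$ plus the simple root spaces $E_{\alpha_i}$ with nonzero coefficients, and the gauge factors lie in the torus $\subset B_{\mathrm{ad}}$, three properties follow:
\begin{itemize}
\item $\widetilde f$ is well defined and $\rho$-equivariant;
\item $\widetilde f$ is holomorphic, because the $(0,1)$-part of $A$ lies in $\mathfrak b$ modulo conventions, so one takes the Borel opposite to the one generated by the $E_{-\alpha_i}$;
\item $\widetilde f$ is integral and nondegenerate, since its derivative has only simple-root components and each has coefficient $e^{\lambda_i}\neq0$.
\end{itemize}
Finally, applying the generalized Pl\"ucker formulas to highest weight vectors $v_{\varpi_i}$, one checks $\widetilde f^*\pi_i^*\varphi_{\mathrm{FS},i}=\varphi_i$. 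Indeed, $\log\|\Phi v_{\varpi_i}\|$ is controlled by $w_i$, so the pulled-back Fubini--Study form equals $\frac{\ii}2 e^{u_i}dz\wedge d\bar z$. This fixes the normalization of $w_i,\lambda_i$ above.

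\emph{Uniqueness.} If $\widetilde f'=k\widetilde f$ with $k\in K_{\mathrm{ad}}$, the Fubini--Study forms are $K$-invariant, so the induced solutions coincide; the monodromy is conjugated by $k$ and stays in $K_{\mathrm{ad}}$. Conversely, suppose two nondegenerate integral curves induce the same $\boldsymbol\varphi$. They are $\tau$-holomorphic, so Sijbrandij's weak congruence theorem \cite[Theorem~7.1]{Sijbrandij2000}, applied on the simply connected $\widetilde S$, gives $\widetilde f'=k\widetilde f$ for a constant $k\in K_{\mathrm{ad}}$. The Toda invariants are exactly the $\varphi_i$, via the Pl\"ucker formulas.

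The main obstacle is the gluing step. One must verify that the coordinate-change gauge transformations lie in $K_{\mathrm{ad}}$ rather than in $K$, as $\rho^\vee/2$ need not exponentiate in $K$ for half-integral phases, and that they lie in $B_{\mathrm{ad}}$ so that $\widetilde f$ is chart-independent. We would handle this first, by computing $\operatorname{Ad}\exp(\ii t\rho^\vee)$ on root spaces: it acts by $e^{\ii t\,\mathrm{ht}(\alpha)}$, which is well defined for $t\in\RR/2\pi$.
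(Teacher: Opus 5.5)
Your existence argument is the paper's. You use the same connection form, with $w_j=\sum_i a^{ij}u_i$ and $\lambda_i=u_i/2$. You use the same torus gauge $t_{zw}=\exp(\ii\vartheta\rho^\vee)$, which is independent of the branch of $\vartheta=\arg(dz/dw)$ only in $K_{\mathrm{ad}}$, and the same holonomy/developing-map construction. The paper also checks the cocycle condition on triple overlaps, which follows from the chain rule.

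The real difference is uniqueness. You cite Sijbrandij's weak congruence theorem; the paper proves it directly.
\begin{itemize}
\item It takes an arbitrary local $K_{\mathrm{ad}}$-lift of a nondegenerate integral curve. Compact reality puts $\beta_z$ in $\mathfrak h\oplus\mathfrak n^+$, and integrality leaves only simple-root terms $c_ie_i$.
\item The computation $\lVert e_iv_i\rVert=\lVert v_i\rVert$ for the lowest-weight vector gives $|c_i|^2=e^{u_i}$.
\item A unique torus gauge in $T^c_{\mathrm{ad}}$ makes every $c_i$ positive. Flatness then forces the Maurer--Cartan form to equal the same $\Theta$ used for existence.
\item So two curves with equal Pl\"ucker forms have frames differing by a constant, and these constants patch because both frames transform by the same $t_{zw}$.
\end{itemize}
Citing Sijbrandij is shorter, and it is the reference the paper itself points to. However, your sentence ``the Toda invariants are exactly the $\varphi_i$'' is precisely what the paper's normalization supplies. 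At minimum you should note that $|c_i|^2$ and the density of the pulled-back Fubini--Study form on $\mathcal N_i$ differ by a $K$-invariant, hence constant, factor, so equal $\varphi_i$ give equal invariants. The direct route buys self-containedness and shows that existence and uniqueness are one computation.

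Two slips need fixing.
\begin{itemize}
\item Since $A_{\bar z}\in\mathfrak h\oplus\mathfrak n^-$, holomorphy of $\Phi\cdot B$ requires $\operatorname{Lie}(B)=\mathfrak h\oplus\mathfrak n^-$. That is the Borel containing the $E_{-\alpha_i}$, not its opposite: with $\mathfrak b^+$ you get $A_z\in\mathfrak b^+$ and an antiholomorphic curve. Correspondingly, $\pi_i$ of the base flag is a lowest-weight line, not a highest-weight one.
\item $\Phi$ is unitary, so $\lVert\Phi v\rVert$ is constant and cannot carry $w_i$. The $w_i$ enter through the Chern connection form $d\omega_i(\Theta_{\mathfrak h})$ in the unitary gauge, which is how the paper verifies $\widetilde f^*\pi_i^*\varphi_{\mathrm{FS}}=\varphi_i$. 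Equivalently, use the norm of a holomorphic lift, or compute the Fubini--Study length of $d(\pi_i\circ\widetilde f)(\partial_z)$ directly: it is $e^{\lambda_i}\lVert E_{\alpha_i}v_i\rVert/\lVert v_i\rVert=e^{u_i/2}$.
\end{itemize}
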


Theorem~\ref{introthm:global-correspondence} combines these established ingredients in the form needed for the reduced-solution classification. We include a proof that makes the coordinate cocycle, equivariance, and compact-monodromy quotient explicit.

The principal curve provides the analogue of the Veronese curve. A compact-compatible principal homomorphism
\[
\Phi_{\mathfrak g}:\operatorname{SL}_2(\CC)\longrightarrow G
\]
maps $\operatorname{SU}(2)$ into $K$ and defines an orbit curve $r_{\mathfrak g}:\PP^1\to F$, called the \emph{principal curve}. It also induces
\[
\overline\Phi_{\mathfrak g}:\operatorname{PSL}_2(\CC)\to G_{\mathrm{ad}},
\qquad
\sigma_{\mathfrak g}:
\operatorname{PSU}(2)\to K_{\mathrm{ad}}.
\]
An $\operatorname{SU}(2)$ Toda solution $\omega$ has a locally univalent unitary developing pair $v=(\widetilde v,\rho_v)$, where $\widetilde v:\widetilde S\to\PP^1$ is equivariant with respect to $\rho_v:\pi_1(S)\to\operatorname{PSU}(2)$ and $\widetilde v^*\varphi_{\mathrm{FS}}$ is the pullback of $\omega$ to $\widetilde S$. Throughout the Introduction and Sections~1--2, $v=(\widetilde v,\rho_v)$ denotes the developing pair, whereas $\widetilde v:\widetilde S\to\PP^1$ denotes its developing map.
\begin{introthm}
\label{introthm:principal-solution}
Let $\omega$ be an $\operatorname{SU}(2)$ Toda solution on a Riemann surface $S$, represented by a unitary developing pair $v=(\widetilde v,\rho_v)$. Then $r_{\mathfrak g}\circ\widetilde v$ has monodromy in $K_{\mathrm{ad}}$ and induces the $\mathfrak g$-Toda solution
\[
\boldsymbol\omega_{\mathfrak g}
=(m_1\omega,\ldots,m_n\omega),
\qquad
m_i=2\sum_{j=1}^n a^{ji},
\]
where $(a^{ij})$ is the inverse Cartan matrix.
\end{introthm}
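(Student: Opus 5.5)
We prove Theorem~\ref{introthm:principal-solution} in three steps: show that the composite curve is an equivariant nondegenerate integral curve with compact monodromy, reduce the computation of the pulled-back forms to $\PP^1$, and identify the resulting constants with $m_i$.

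\emph{Step 1: equivariance and compact monodromy.} Since $r_{\mathfrak g}$ is the orbit map $[A]\mapsto \overline\Phi_{\mathfrak g}(A)\cdot x_0$ of the principal homomorphism, it is $\operatorname{PSL}_2(\CC)$-equivariant:
\[
r_{\mathfrak g}(A\cdot p)=\overline\Phi_{\mathfrak g}(A)\, r_{\mathfrak g}(p).
\]
Here $x_0$ is the base flag, whose stabilizer contains the image of the upper-triangular Borel subgroup. Applying this to $\widetilde v(\gamma\cdot\tilde z)=\rho_v(\gamma)\widetilde v(\tilde z)$ shows that $r_{\mathfrak g}\circ\widetilde v$ is equivariant with monodromy $\sigma_{\mathfrak g}\circ\rho_v$. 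This monodromy lies in $K_{\mathrm{ad}}$ because $\Phi_{\mathfrak g}(\operatorname{SU}(2))\subset K$.

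\emph{Step 2: integrality and nondegeneracy.} The derivative of $r_{\mathfrak g}$ at the base point is $d\Phi_{\mathfrak g}(E)$, where $E$ is the raising operator. For a principal $\mathfrak{sl}_2$-triple, $d\Phi_{\mathfrak g}(E)=\sum_i c_i e_{\alpha_i}$ with every $c_i\neq0$. Hence the tangent direction lies in the standard distribution and has all simple-root components nonzero. By equivariance this holds at every point of $r_{\mathfrak g}(\PP^1)$. Since $\widetilde v$ is locally univalent, the composite $r_{\mathfrak g}\circ\widetilde v$ is a nondegenerate integral curve. By the generalized Plücker formulas it induces a $\mathfrak g$-Toda solution, and the forms descend to $S$ because the monodromy is unitary and the $\varphi_{\mathrm{FS},i}$ are $K$-invariant.

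\emph{Step 3: the forms.} Since $(r_{\mathfrak g}\circ\widetilde v)^*\pi_i^*\varphi_{\mathrm{FS},i}=\widetilde v^*(\pi_i\circ r_{\mathfrak g})^*\varphi_{\mathrm{FS},i}$, it suffices to compute the form $(\pi_i\circ r_{\mathfrak g})^*\varphi_{\mathrm{FS},i}$ on $\PP^1$. This form is $\operatorname{SU}(2)$-invariant, because $\sigma_{\mathfrak g}$ maps $\operatorname{PSU}(2)$ into $K_{\mathrm{ad}}$, which preserves $\varphi_{\mathrm{FS},i}$. It is therefore a constant multiple $c_i\varphi_{\mathrm{FS}}$ of the spherical form. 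Pulling back by $\widetilde v$ gives $c_i\omega$.

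To identify $c_i$, note that $\pi_i\circ r_{\mathfrak g}$ is the orbit of the highest weight line of $V_i$ under the principal $\operatorname{SL}_2$. The principal semisimple element $H=2\rho^\vee$ acts on the highest weight vector $v_{\omega_i}$ by the eigenvalue
\[
\omega_i(2\rho^\vee)=2\sum_j (C^{-1})_{ij}=: d_i,
\]
where the indexing convention must be checked against $a^{ji}$. Thus $v_{\omega_i}$ is a highest weight vector of $\mathfrak{sl}_2$-weight $d_i$, so $\pi_i\circ r_{\mathfrak g}$ is the degree-$d_i$ Veronese curve into a sub-projective space, isometrically embedded for Fubini--Study. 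The Fubini--Study pullback of the degree-$d$ Veronese curve is $d\,\varphi_{\mathrm{FS}}$; one can see this from $\log\|(1+|z|^2)^{d/2}\|$-type potentials, or by citing the type-$A$ case in \cite{ShiWeiXu2026}. Hence $c_i=d_i=m_i$.

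One can also cross-check the Toda equation directly. Writing $\varphi_{\mathrm{FS}}=\omega$, its curvature $4$ gives $\Ric(\omega)=2\omega$, so the equation $\Ric(m\omega)=2m\omega C$ reduces to $mC=\mathbf 1$ (all ones). This forces $m=\mathbf 1\,C^{-1}$ up to the paper's normalization, which confirms the formula $m_i=2\sum_j a^{ji}$ once the normalization of $\varphi_{\mathrm{FS}}$ is matched.

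\emph{Main obstacle.} The principal difficulty is the normalization bookkeeping in Step 3: the factor $2$, the index convention $a^{ji}$ versus $a^{ij}$ (Cartan matrix versus its transpose), and the normalization of $\varphi_{\mathrm{FS}}$. I would fix these by checking the case $\mathfrak{sl}_2$, where $m_1=1$.
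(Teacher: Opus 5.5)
Your proposal is correct and is essentially the paper's proof. Equivariance of $r_{\mathfrak g}$ gives monodromy $\sigma_{\mathfrak g}\circ\rho_v$ in $K_{\mathrm{ad}}$, and the extremal weight line $\pi_i(B_0)$ generates an irreducible principal $\mathfrak{sl}_2$-module of highest weight $\varpi_i(h)=m_i$, whose orbit is the degree-$m_i$ Veronese curve, so $r_{\mathfrak g}^*\pi_i^*\varphi_{\mathrm{FS}}=m_i\varphi_{\mathrm{FS}}$.

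On the bookkeeping you flagged, two corrections are needed:
\begin{itemize}
\item $\varpi_i=\sum_j a^{ji}\alpha_j$ and $\alpha_j(\rho^\vee)=1$ give $\varpi_i(2\rho^\vee)=2\sum_j a^{ji}$, which is the transpose of what you wrote. An $\mathfrak{sl}_2$ check cannot detect this; only the non-simply-laced types can.
\item In your cross-check, curvature $4$ means $\Ric(\omega)=4\omega$ in the paper's convention, not $2\omega$. So $mC=(2,\ldots,2)$ holds exactly. Together with your invariance step and Theorem~\ref{conj:Givental}, this alone already determines the constants without the Veronese computation.
\end{itemize}
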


We call a $\mathfrak g$-Toda solution \emph{reduced and generated by $\omega$} if one of its associated unitary flag curves is of the form $g\cdot r_{\mathfrak g}\circ\widetilde v$ for some $g\in G_{\mathrm{ad}}$. The classification problem is therefore to determine which transformations $g$ preserve unitary monodromy and when two of them induce the same Toda solution.

The global Toda--flag correspondence is then used to classify reduced solutions.

We next fix the developing pair $v$ and denote
\[
\Gamma_v=\rho_v(\pi_1(S)),
\qquad G_v=\overline{\Gamma_v}\subset\operatorname{PSU}(2).
\]
Let $G_{\mathrm{ad}}=K_{\mathrm{ad}}AN$ be an Iwasawa decomposition, with $A$ its real split torus and $N$ its unipotent factor. Let $\tau$ be the antiholomorphic involution of $G_{\mathrm{ad}}$ whose fixed group is $K_{\mathrm{ad}}$, denote $g^\dagger=\tau(g)^{-1}$, and denote $Z_{G_{\mathrm{ad}}}(H)$ for the centralizer of a subgroup $H\subset G_{\mathrm{ad}}$.

\begin{introthm}
\label{introthm:classification}
With the notation above, $G_v=\overline{\rho_v(\pi_1(S))}$ is a closed subgroup of $\operatorname{PSU}(2)\simeq\operatorname{SO}(3)$. The reduced $\mathfrak g$-Toda solutions generated by $\omega$ are parametrized bijectively by
{
\[
\Delta_v^{\mathfrak g}
=\left\{\delta\in AN:
\delta^\dagger\delta\in
Z_{G_{\mathrm{ad}}}\bigl(\sigma_{\mathfrak g}(G_v)\bigr)
\right\}.
\]}
The parameter $\delta$ corresponds to the curve $\delta\cdot r_{\mathfrak g}\circ\widetilde v$. If
$\mathcal E_{\mathfrak g}=\{d_1,\ldots,d_n\}$ is the exponent multiset of $\mathfrak g$, with multiplicities, then
{
\[
\dim_{\RR}\Delta_v^{\mathfrak g}
=\sum_{d\in\mathcal E_{\mathfrak g}}
\dim_{\CC}\left(\operatorname{Sym}^{2d}(\CC^2)\right)^{G_v}.
\]}
The explicit cases for all closed subgroups of $\operatorname{PSU}(2)$ are collected in Section~3.
\end{introthm}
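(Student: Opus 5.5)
The plan is to reduce the classification to linear algebra in $G_{\mathrm{ad}}$, using Theorems~\ref{introthm:global-correspondence} and~\ref{introthm:principal-solution}. That $G_v$ is a closed subgroup of $\operatorname{PSU}(2)\simeq\operatorname{SO}(3)$ is immediate, since it is the closure of a subgroup of a compact group.

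\emph{Monodromy condition.} For $g\in G_{\mathrm{ad}}$, the curve $g\cdot r_{\mathfrak g}\circ\widetilde v$ is equivariant with monodromy
\[
\gamma\mapsto g\,\overline\Phi_{\mathfrak g}(\rho_v(\gamma))\,g^{-1},
\]
because $r_{\mathfrak g}$ is $\overline\Phi_{\mathfrak g}$-equivariant. So the curve is unitary if and only if $g\sigma_{\mathfrak g}(\Gamma_v)g^{-1}\subset K_{\mathrm{ad}}$.

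Write $g=k\delta$ with $k\in K_{\mathrm{ad}}$ and $\delta\in AN$ (Iwasawa). The condition then becomes $\delta h\delta^{-1}\in K_{\mathrm{ad}}$ for $h\in\sigma_{\mathfrak g}(\Gamma_v)$. For such $h$ we have $h^\dagger=h^{-1}$, so this is equivalent to
\[
(\delta h\delta^{-1})^\dagger(\delta h\delta^{-1})=1,
\qquad\text{that is,}\qquad
h^{-1}(\delta^\dagger\delta)h=\delta^\dagger\delta .
\]
Hence $\delta^\dagger\delta$ commutes with $\sigma_{\mathfrak g}(\Gamma_v)$. Since centralizers are closed, this is the same as commuting with $\sigma_{\mathfrak g}(G_v)$.

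\emph{Uniqueness.} By Theorem~\ref{introthm:global-correspondence}, two such curves induce the same solution if and only if they differ by left translation in $K_{\mathrm{ad}}$. We must also rule out the possibility that $g\cdot r_{\mathfrak g}\circ\widetilde v=k\,g'\cdot r_{\mathfrak g}\circ\widetilde v$ with $g\neq kg'$. Since $\widetilde v$ is locally univalent, the image of $r_{\mathfrak g}\circ\widetilde v$ contains an open piece of the principal curve. So $g^{-1}kg'$ fixes an open arc of $r_{\mathfrak g}(\PP^1)$, hence by analyticity fixes all of $r_{\mathfrak g}(\PP^1)$. I would then show that the pointwise stabilizer of the principal curve in $G_{\mathrm{ad}}$ is trivial. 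It lies in the intersection of the Borels $\overline\Phi_{\mathfrak g}(x)B_{\mathrm{ad}}\overline\Phi_{\mathfrak g}(x)^{-1}$ over all $x$; taking $x$ at $0$ and $\infty$ gives two opposite Borels, so the element lies in the torus $T$. Fixing a generic point of the principal curve then forces it to act trivially on the principal nilpotent $e$, because the regular element lies in a unique Borel. Since $Z_T(e)$ is the center, which is trivial in $G_{\mathrm{ad}}$, the stabilizer is trivial.

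Thus reduced solutions correspond to $K_{\mathrm{ad}}\backslash\{g\}$, and uniqueness of the Iwasawa decomposition gives the bijection with $\Delta_v^{\mathfrak g}$.

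\emph{Dimension.} The map $\delta\mapsto P=\delta^\dagger\delta$ is a diffeomorphism from $AN$ onto the positive-definite Hermitian elements $P=\exp X$ with $X\in\ii\,\mathfrak k$. The condition that $P$ commutes with the compact group $H=\sigma_{\mathfrak g}(G_v)$ is equivalent to $X\in(\ii\,\mathfrak k)^{H}$, by uniqueness of logarithms of positive elements. So $\Delta_v^{\mathfrak g}$ is a manifold with
\[
\dim_{\RR}\Delta_v^{\mathfrak g}=\dim_{\RR}(\ii\mathfrak k)^H=\dim_{\CC}\mathfrak g^H,
\]
the last equality because $\mathfrak g^H=\mathfrak k^H\oplus\ii\mathfrak k^H$. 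Kostant's theorem decomposes $\mathfrak g$ under the principal $\mathfrak{sl}_2$ as $\bigoplus_{d\in\mathcal E_{\mathfrak g}}\operatorname{Sym}^{2d}(\CC^2)$. These odd-dimensional representations factor through $\operatorname{PSL}_2$, so their $G_v$-invariants are well defined, and summing gives the stated formula.

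The main obstacle I expect is the uniqueness step, namely the triviality of the pointwise stabilizer of the principal curve in $G_{\mathrm{ad}}$ and the careful use of local univalence. The rest is formal.
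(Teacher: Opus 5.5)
Your proposal is correct and has the same skeleton as the paper's proof. That skeleton consists of the centralizer criterion for unitary monodromy (Lemma~\ref{lem:unitary-transformations}), the Iwasawa reduction to $\delta^\dagger\delta$ (Lemma~\ref{lem:Mv-iwasawa}), the Cholesky diffeomorphism $AN\to\exp(\ii\mathfrak k_{\mathrm{ad}})$ with $H_v$-fixed locus $\exp\bigl(\ii\,\mathfrak z_{\mathfrak k_{\mathrm{ad}}}(H_v)\bigr)$ (Lemma~\ref{lem:dimension-delta-v}), and Kostant's decomposition.

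The genuine difference is the trivial-stabilizer step. The paper's Lemma~\ref{lem:principal-curve-stabilizer} treats an arbitrary nondegenerate integral curve on $\widetilde S$. There the infinitesimal stabilizer vanishes by descending induction in the principal grading, and a group element is then handled through its Jordan decomposition. That argument needs neither analytic continuation nor the orbit structure, and it applies beyond principal curves. You instead use that $r_{\mathfrak g}$ is a $\overline\Phi_{\mathfrak g}(\operatorname{PSL}_2(\CC))$-orbit. You continue the fixed arc to all of $r_{\mathfrak g}(\PP^1)$. You then note that the Borels at $0$ and $\infty$ are opposite, because the Weyl element of $\operatorname{SL}_2$ reverses the $h$-grading and $h$ is regular. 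So the element lies in the Cartan subgroup $T\subset B_{\mathrm{ad}}$, and you kill $T$ at one more point. This is more elementary and entirely adequate for the principal curve.

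Two small repairs are needed.

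\emph{The last stabilizer step.} The fact ``a regular element lies in a unique Borel'' is not what does the work. Argue instead as follows. For $t\in T$ and $s\neq0$, the condition $t\in\exp(sp_1)B_{\mathrm{ad}}\exp(-sp_1)$ is equivalent to
$\exp(-s\operatorname{Ad}_{t^{-1}}p_1)\exp(sp_1)\in\exp(\mathfrak n^+)\cap B_{\mathrm{ad}}=\{e\}$.
Hence $\operatorname{Ad}_tp_1=p_1$, so $\alpha_i(t)=1$ for every $i$. Equivalently, $t$ must fix the tangent vector $p_1\bmod\mathfrak b_0$ at $B_0$.

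\emph{The monodromy step.} Your ``if and only if'' tacitly identifies the monodromy of $g\cdot r_{\mathfrak g}\circ\widetilde v$ with the natural representation $\eta_g(\gamma)=g\sigma_{\mathfrak g}(\rho_v(\gamma))g^{-1}$. Being unitary only means that \emph{some} equivariance representation $\rho$ lands in $K_{\mathrm{ad}}$. So you need the stabilizer result already at this point, exactly as the paper uses it. The element $g^{-1}\rho(\gamma)^{-1}\eta_g(\gamma)g$ fixes $r_{\mathfrak g}\circ\widetilde v$ pointwise, hence is trivial, so $\rho=\eta_g$.
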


Here the exponents are characterized by the decomposition of the adjoint representation under the principal $\mathfrak{sl}_2$ into irreducibles of dimensions $2d+1$; Kostant's theorem gives the displayed sum \cite{Kostant1959}.

The dimension formula above yields the following comparisons. If $G_v=\operatorname{PSU}(2)$, only the principal solution occurs (Theorem~\ref{thm:classification-reduced-solutions}(i)); a maximal torus yields a family of dimension $\operatorname{rank}\mathfrak g$, and the torus normalizer has a one-dimensional fixed space precisely in the even-exponent summands (Proposition~\ref{prop:fixed-space-dictionary}). The torus-normalizer dimension is $\lfloor n/2\rfloor$ in type $A_n$, zero in types $B_n,C_n,E_7,E_8,F_4,G_2$, equal to $2$ in type $E_6$, and equal to $1$ in type $D_n$ precisely when $n$ is odd (Corollary~\ref{cor:all-types-dimension-table}).

Section~3 evaluates the finite-group terms and concludes with a table covering all simple types, including $A_n$. The equality of the exponent lists for $B_n$ and $C_n$ shows that their dimensions agree for every monodromy closure. For polyhedral monodromy, the dimensions in the exceptional types are:
\[
\begin{array}{c|ccccc}
G_v&E_6&E_7&E_8&F_4&G_2\\ \hline
\mathsf T&4&8&16&2&0\\
\mathsf O&2&3&6&0&0\\
\mathsf I&0&0&0&0&0.
\end{array}
\]
In particular, icosahedral monodromy is rigid in every exceptional type; octahedral monodromy is also rigid in types $F_4$ and $G_2$. Applying the character calculation to the classical families yields explicit quasipolynomials in the rank: type $A_n$ uses the full exponent list $1,\ldots,n$, types $B_n,C_n$ use $1,3,\ldots,2n-1$, and type $D_n$ has the additional exponent $n-1$.

The construction also transfers existing existence results for spherical metrics with conical points to singular Toda systems. This generalizes the type-$A$ application in \cite{ShiWeiXu2026}, but the transfer is not presented as a new PDE existence theory.

\begin{introthm}
\label{introthm:cone-application}
{
Let $X$ be a compact Riemann surface carrying a cone spherical metric $\omega$ with divisor $D=\sum_{\nu=1}^m\gamma_\nu[P_\nu]$, where $\gamma_\nu>-1$. For every complex simple Lie algebra $\mathfrak g$ of rank $n$, the $\mathfrak g$-Toda system on $X$ with cone-divisor tuple $(D,\ldots,D)$ admits a reduced family parametrized by $\Delta_v^{\mathfrak g}$ and containing
\[
(m_1\omega,\ldots,m_n\omega),
\qquad
m_i=2\sum_{j=1}^n a^{ji}.
\]
In particular, if $g>0$ and $\beta_1,\ldots,\beta_m>0$ satisfy $\sum_{\nu=1}^m\beta_\nu>2g-2+m$, such a reduced family exists on some genus-$g$ compact Riemann surface with distinct points $P_1,\ldots,P_m$ and $D=\sum_{\nu=1}^m(\beta_\nu-1)[P_\nu]$.}
\end{introthm}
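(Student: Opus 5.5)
The proof has two parts: a local statement about the singular Toda system, and an appeal to existing existence results for cone spherical metrics.

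\emph{Local part.} Let $X^*=X\setminus\{P_1,\ldots,P_m\}$. Then $\omega|_{X^*}$ is a smooth $\operatorname{SU}(2)$ Toda solution, so it has a unitary developing pair $v=(\widetilde v,\rho_v)$ on $X^*$. Theorem~\ref{introthm:principal-solution} applied on $X^*$ shows that $r_{\mathfrak g}\circ\widetilde v$ has monodromy in $K_{\mathrm{ad}}$ and induces $(m_1\omega,\ldots,m_n\omega)$. Theorem~\ref{introthm:classification} then gives the reduced family $\delta\cdot r_{\mathfrak g}\circ\widetilde v$, $\delta\in\Delta_v^{\mathfrak g}$, of smooth $\mathfrak g$-Toda solutions on $X^*$. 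Each $\delta\cdot r_{\mathfrak g}\circ\widetilde v$ has monodromy in $K_{\mathrm{ad}}$ by the defining condition of $\Delta_v^{\mathfrak g}$. Hence the pulled-back forms $\varphi_i=(\delta\, r_{\mathfrak g}\circ\widetilde v)^*\pi_i^*\varphi_{\mathrm{FS},i}$ are invariant under deck transformations and descend to $X^*$.

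\emph{Cone behaviour.} We must check that each $\varphi_i$ has a cone singularity at $P_\nu$ with the prescribed angle, that is, that $\varphi_i$ solves the Toda system with source $2\pi\gamma_\nu\delta_{P_\nu}$ (in the paper's normalization) in every component. Near $P_\nu$, choose a coordinate $z$ and normalize the developing map by a $\operatorname{PSU}(2)$ change, using the standard local model of cone spherical metrics. In the non-integer-angle case $\widetilde v=z^{\gamma_\nu+1}$, with local monodromy diagonal. In the integer-angle case ($\gamma_\nu\in\ZZ$), $\widetilde v$ is a meromorphic function with ramification index $\gamma_\nu+1$ at $P_\nu$.

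The compact group $K_{\mathrm{ad}}$ acts isometrically on each $\PP(V_i)$. So the pulled-back metric changes only through the smooth factor coming from $\delta$, and at the pole the key computation is
\[
\delta\cdot r_{\mathfrak g}(w)\qquad\text{with } w=\widetilde v(z).
\]
I would argue as follows.
\begin{itemize}
\item The map $\pi_i\circ\delta\circ r_{\mathfrak g}$ is a holomorphic immersion $\PP^1\to\PP(V_i)$, since $\delta$ is a biholomorphism of $F$ and $\pi_i\circ r_{\mathfrak g}$ is the principal orbit curve.
\item The pullback $(\pi_i\circ\delta\circ r_{\mathfrak g})^*\varphi_{\mathrm{FS},i}$ is therefore a smooth positive form on $\PP^1$.
\item Composing with $w=\widetilde v(z)$, which is locally $z^{\gamma_\nu+1}$ up to a unit, multiplies the form by $|z|^{2\gamma_\nu}$ times a smooth positive function, where smoothness comes from single-valuedness.
\end{itemize}
In local coordinates this gives $u_i=2\gamma_\nu\log|z|+O(1)$, which is exactly a cone point of divisor $D$ in each component. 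The divisor tuple is therefore $(D,\ldots,D)$, and taking $\delta=1$ recovers $(m_1\omega,\ldots,m_n\omega)$.

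The main obstacle I anticipate is the non-integer-angle case. There the individual map $z\mapsto\delta\, r_{\mathfrak g}(z^{\gamma_\nu+1})$ is multivalued, so the smoothness of the coefficient must come from the monodromy invariance of the forms rather than from holomorphy. The plan is to write $\delta\, r_{\mathfrak g}(w)$ near $w=0$ (choosing the normalization so that $P_\nu\mapsto 0$). The density of $(\pi_i\circ\delta\circ r_{\mathfrak g})^*\varphi_{\mathrm{FS},i}$ at $w$ is then a real-analytic positive function $h_i(w)$. Since $\varphi_i$ is single-valued, $h_i(z^{\gamma_\nu+1})$ is a continuous positive function of $z$ near $0$, which suffices for the cone-angle statement. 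If the paper's notion of cone metric requires more regularity, I would instead invoke the known asymptotic description of singular Toda solutions via the local monodromy, which is conjugate into the torus.

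\emph{Existence.} The ``in particular'' clause follows by quoting the existence result for cone spherical metrics on a genus-$g$ surface with prescribed angles $2\pi\beta_\nu$ satisfying $\sum\beta_\nu>2g-2+m$, for instance Mondello--Panov type results on some Riemann surface. Applying the first part to that metric gives the reduced family.
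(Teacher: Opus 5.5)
Your proposal is correct. Its skeleton matches the paper: restrict to $X^\circ=X\setminus\{P_1,\ldots,P_m\}$, apply Theorems~\ref{introthm:principal-solution} and~\ref{introthm:classification} there, verify the cone orders for every $\delta\in\Delta_v^{\mathfrak g}$, and quote Mondello--Panov.

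The genuine difference is how you verify the cone orders. The paper (Lemma~\ref{lem:cone-orders}) never examines $\widetilde v$ near $P_\nu$. Compactness of $\PP(V_i)$ gives $c_i\varphi_{\mathrm{FS}}\le\delta_i^*\varphi_{\mathrm{FS}}\le C_i\varphi_{\mathrm{FS}}$. Pulling back by $\pi_i\circ r_{\mathfrak g}\circ\widetilde v$ then yields $c_im_i\omega\le\varphi_i^\delta\le C_im_i\omega$, hence $u_i=2\gamma_\nu\log|z|+O(1)$. You instead combine the local normal form $\widetilde v=U\circ z^{\gamma_\nu+1}$ with the positivity of $(\pi_i\circ\delta\circ r_{\mathfrak g})^*\varphi_{\mathrm{FS}}$ on $\PP^1$.

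Your first two bullets already suffice without the local model. A smooth positive form on the compact $\PP^1$ lies between two constant multiples of $\varphi_{\mathrm{FS}}$, and pulling back by $\widetilde v$ gives exactly the paper's two-sided bound. What the local model buys you is a slightly sharper conclusion. Since $|\widetilde v|\to0$ along every branch, $|z|^{-2\gamma_\nu}$ times the density of $\varphi_i^\delta$ tends to $(\gamma_\nu+1)^2h_i(0)>0$. So the factor is positive and continuous at $P_\nu$, which is the form of the cone-divisor definition used in the Introduction. The paper's lemma yields only the $O(1)$ bound and recovers continuity afterwards via the $W^{2,q}$ removable-singularity argument in the remark after Theorem~\ref{thm:cone-transfer}. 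The cost of your route is an external input, the classification of local developing maps at cone points, which the paper's argument avoids entirely.

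A few small corrections:
\begin{itemize}
\item In your third bullet, single-valuedness does not make the factor smooth when $\gamma_\nu\notin\ZZ$. Already for $\delta=e$ and $\widetilde v=z^{\gamma_\nu+1}$ the factor is a constant multiple of $\bigl(1+|z|^{2(\gamma_\nu+1)}\bigr)^{-2}$, which is not smooth at $0$. As you note later, continuity and positivity are all that is needed.
\item In the paper's normalization the distributional source is $\pi\gamma_\nu\delta_0$ for $\partial_z\partial_{\bar z}u_i$, not $2\pi\gamma_\nu\delta_{P_\nu}$.
\item The Mondello--Panov metric has curvature one, so you must rescale it to $\omega=\frac14h$ before applying your first part. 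This leaves $D$ unchanged.
\end{itemize}
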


Here the cone divisor $D=\sum_\nu\gamma_\nu[P_\nu]$ records the local behavior of the metric: in a coordinate centered at $P_\nu$, its density is $|z|^{2\gamma_\nu}$ times a positive continuous function. Theorem~D concerns the diagonal cone-divisor tuple $(D,\ldots,D)$, with the same marked points and orders in every component; it does not address arbitrary source matrices. The resulting reduced family may be zero-dimensional, and its dimension is controlled by the monodromy closure through Theorem~C. The numbers $\beta_\nu=\gamma_\nu+1$ are the corresponding cone-angle parameters.

The positive-genus existence statement combines the transfer theorem with the spherical-metric existence theorem of Mondello and Panov \cite[Theorem~A]{MondelloPanov2019}, after rescaling their curvature-one metric by the factor $1/4$ to match our curvature-four normalization. When the spherical metric is the pullback of $\varphi_{\mathrm{FS}}$ by a holomorphic map to $\PP^1$, its monodromy is trivial and the parameter space is the entire group $AN$, of real dimension $\dim_{\CC}\mathfrak g$.

The paper is organized as follows. Section~\ref{sec:basic} establishes the global correspondence between Toda solutions and nondegenerate unitary integral curves. Section~\ref{sec:reduced} introduces reduced solutions, computes the Toda forms of the principal curve, proves the monodromy and rigidity lemmas, and derives the general parameterization and dimension formula. Section~\ref{sec:classification} records the resulting type-by-type dimension formulas, the finite-monodromy character calculation, and a comparison table for all simple types, including $A_n$. Section~\ref{sec:application} applies the master classification to conical Toda systems on compact Riemann surfaces, including positive-genus examples. Section~\ref{sec:counterexample} considers a converse question and provides a counterexample.

\section{Basic Correspondence}\label{sec:basic}
We establish the global correspondence between solutions of the Toda system and nondegenerate integral curves with compact monodromy. Its local form is standard; see Positselskii \cite{Positselskii1991}, Razumov--Saveliev \cite{RazumovSaveliev1994}, and Nie \cite{Nie2017}; the related congruence principle is discussed by Sijbrandij \cite[Theorem~7.1]{Sijbrandij2000}.

\subsection{Review of Lie Theory}\label{sec:Lie theory}
Let $G$ be a complex simply connected simple Lie group of rank $n$, $K$ a compact form, and $\mathfrak g,\mathfrak k$ their Lie algebras. Fix a triangular decomposition
\[
\mathfrak g=\mathfrak h\oplus\mathfrak n^+\oplus\mathfrak n^-,
\]
and let $B_0$ be the Borel subgroup with Lie algebra $\mathfrak b_0=\mathfrak h\oplus\mathfrak n^-$. We also denote $\mathfrak b=\mathfrak b_0$; this negative-Borel convention is used throughout the paper. $B_{\mathrm{ad}}$ denotes the image of $B_0$ in the adjoint group $G_{\mathrm{ad}}=G/Z(G)$. Let $F$ be the variety of Borel subgroups of $G$, with base point $B_0$. A character $\chi:B_0\to\mathbb C^*$ determines the $G$-equivariant line bundle
\[
\mathcal L(\chi)=G\times_{B_0}\mathbb C,
\qquad (gb,v)\sim(g,\chi(b)v),\quad b\in B_0,
\]
and every $G$-equivariant line bundle on $F$ arises in this way.

Define the $G$-action on $\Gamma(F,\mathcal{L}(\chi))$ by
\[
(g\cdot\sigma)(B)=g\cdot\sigma(g^{-1}Bg).
\]
\begin{thm}[Borel--Weil Theorem {\cite[\S 23]{FultonHarris1991}}]
Let $\chi:B_0\to\mathbb C^*$ be a character whose differential $\lambda=d\chi\in\mathfrak h^*$ is dominant, and let $V_\lambda$ denote the irreducible representation of highest weight $\lambda$. For our negative-Borel and associated-character conventions,
\[
\Gamma(F,\mathcal L(\chi))\simeq V_\lambda,
\qquad
\Gamma(F,\mathcal L(\chi))^*\simeq V_\lambda^*\simeq V_{-w_0\lambda},
\]
where $w_0$ is the longest Weyl group element.
\end{thm}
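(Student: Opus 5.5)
The statement is the Borel--Weil theorem with the paper's conventions: the Borel $B_0$ has Lie algebra $\mathfrak h\oplus\mathfrak n^-$, and sections are identified with $B_0$-equivariant functions. The plan is to take the standard result from \cite[\S23]{FultonHarris1991} and check carefully how these conventions turn it into the displayed isomorphisms. First I will write $\Gamma(F,\mathcal L(\chi))$ as the space of holomorphic functions $s:G\to\CC$ with $s(gb)=\chi(b)^{-1}s(g)$. The relation $(gb,v)\sim(g,\chi(b)v)$ forces exactly this transformation law. The action $(g\cdot\sigma)(B)=g\cdot\sigma(g^{-1}Bg)$ becomes left translation, $(g\cdot s)(x)=s(g^{-1}x)$. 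This makes $\Gamma$ a holomorphic representation of $G$. It is finite dimensional because $F$ is compact.

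Next, to identify the representation, I will use the big cell $N^+B_0$, which is open and dense since $B_0$ is the negative Borel. A section is determined by its restriction to $N^+$, so it suffices to study these restrictions.

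\begin{itemize}
\item \emph{Uniqueness of the lowest vector.} A $B_0$-eigenvector for the left action, i.e.\ a lowest-weight vector of $\Gamma$, must be constant on $N^-$-orbits. This makes it determined by its value at $e$, so it is unique up to scalar. Hence $\Gamma$ is irreducible or zero.
\item \emph{Weight of the lowest vector.} For $h\in H$ the computation gives $s(h^{-1})=\chi(h)s(e)$. So the lowest weight is $\pm\lambda$ up to the sign convention, and I will track this sign explicitly.
\item \emph{Nonvanishing.} A matrix coefficient $g\mapsto\langle\xi,g\,v\rangle$ has the required transformation law when $v$ is an extremal weight vector of the appropriate representation and $\xi$ ranges over its dual. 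This gives an explicit nonzero section and realizes an injective $G$-map from the relevant irreducible into $\Gamma$. Dominance of $\lambda$ guarantees that this irreducible exists.
\end{itemize}

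Combining these steps gives $\Gamma(F,\mathcal L(\chi))\simeq V_\lambda$ under the stated associated-character convention. Dualizing then gives $\Gamma^*\simeq V_\lambda^*$. Finally, $V_\lambda^*\simeq V_{-w_0\lambda}$ holds because the lowest weight of $V_\lambda$ is $w_0\lambda$, so the highest weight of the dual is $-w_0\lambda$.

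The main obstacle is the sign and duality bookkeeping. The combination of the negative Borel, the relation $(gb,v)\sim(g,\chi(b)v)$, and the pullback action decides whether one gets $V_\lambda$ or $V_\lambda^*$. I will settle this by the explicit matrix-coefficient computation at the base point, checking that the $H$-weight of the evaluation functional at $B_0$ is the one that gives the first isomorphism as stated. If it instead comes out as $-\lambda$, the statement's ``associated-character convention'' should be read as absorbing the inverse, and I would adjust the identification accordingly.
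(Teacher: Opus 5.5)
Your setup is correct: sections as functions with $s(gb)=\chi(b)^{-1}s(g)$, the left-translation action, and the big cell $N^+B_0$. But the uniqueness and weight bullets contain a genuine error, and it sits exactly where the sign you flag is decided. A $B_0$-eigenvector is left $N^-$-invariant. However, left $N^-$ and right $B_0$ sweep $e$ only over $N^-B_0=B_0$, which is closed and nowhere dense. So such a section is not determined by $s(e)$, and for $\lambda\neq0$ it actually vanishes at $e$. Concretely, take $\operatorname{SL}_2(\CC)$ with lower-triangular $B_0$ and $\lambda=m\varpi$. Then $\Gamma$ consists of degree-$m$ polynomials in the second column $(g_{12},g_{22})$, and the $B_0$-eigenvector is $g_{12}^m$, of weight $-m$, which vanishes at $e$. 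Hence the identity $s(h^{-1})=\chi(h)s(e)$ reads $0=0$ for the lowest vector and says nothing about its weight. Your conclusion ``lowest weight $\pm\lambda$'' is false: the lowest weight is $w_0\lambda$. Your fallback of reinterpreting the convention if the sign comes out wrong is not a proof step. The nonvanishing bullet likewise leaves ``the appropriate representation'' unspecified, which is the same bookkeeping left open.

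The repair uses the cell you already introduced, but for highest-weight vectors instead.
\begin{itemize}
\item \emph{Irreducibility.} If $s\in\Gamma$ is left $N^+$-invariant, then $s(nb)=\chi(b)^{-1}s(e)$ on the dense set $N^+B_0$. So $\Gamma^{N^+}$ is at most a line, and $\Gamma$ is irreducible or zero.
\item \emph{Weight.} For nonzero such $s$ one has $s(e)\neq0$, and $(h\cdot s)(e)=s(h^{-1})=\chi(h)s(e)$ shows its weight is exactly $\lambda$.
\item \emph{Nonvanishing.} Take $v$ a lowest-weight vector of $V_\lambda^*$. It has weight $-\lambda$, so $bv=\chi(b)^{-1}v$ for $b\in B_0=HN^-$. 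Then $\xi\mapsto\bigl(g\mapsto\langle\xi,gv\rangle\bigr)$ is a nonzero $G$-map $V_\lambda\to\Gamma$.
\end{itemize}
This yields $\Gamma\simeq V_\lambda$ with no residual sign ambiguity. It also checks the convention the paper uses later: $\operatorname{ev}_{B_0}$ transforms under $B_0$ by $\chi^{-1}$, so it spans the lowest-weight line, of weight $-\lambda$, in $\Gamma^*\simeq V_\lambda^*$. The paper gives no argument beyond the citation to Fulton--Harris. Once repaired, your proof is a self-contained verification of exactly the conventions the paper depends on.
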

Let $\omega_i:B_0\to\mathbb{C}^*$, $i=1,\ldots,n$, be the fundamental characters, and let $\varpi_i=d\omega_i$ be their differentials, viewed as the fundamental weights. Denote $\mathcal L_i=\mathcal L(\omega_i)$ and
\[
V_i:=\Gamma(F,\mathcal L_i)^*\simeq V_{-w_0\varpi_i}.
\]
Thus the $V_i$ are the opposition-indexed fundamental representations. Although $V_i$ need not be a linear representation of $G_{\mathrm{ad}}$, the center of $G$ acts on the irreducible module $V_i$ by scalars. Hence the projective action on $\mathbb P(V_i)$ descends to $G_{\mathrm{ad}}$.

Let $V$ be a finite-dimensional irreducible representation of $G$. A $K$-invariant Hermitian form on $V$ is unique up to multiplication by a positive constant. Since such a rescaling does not change the induced projective metric, this determines a unique $K$-invariant Fubini--Study form $\varphi_{\mathrm{FS}}$ on $\mathbb{P}(V)$; see \cite[\S9.3, p.~130]{FultonHarris1991} and Schur's Lemma.

Let $\pi_i:F\to\mathbb{P}(V_i)$ be the projective morphism defined by evaluation. For each $B\in F$, the evaluation quotient
    \[
    \operatorname{ev}_B:\Gamma(F,\mathcal L_i)
    \longrightarrow \mathcal L_i|_B
    \]
determines a line in $\Gamma(F,\mathcal L_i)^*=V_i$ after dualizing. We denote $\pi_i(B)=[\operatorname{ev}_B]$; changing a trivialization of the one-dimensional fiber $\mathcal L_i|_B$ only rescales the representative.
    \subsection{Local Pl\"ucker Formulas}
The local Pl\"ucker formulas for semisimple groups were proved by Positsel'skii \cite{Positselskii1991}; related differential-geometric formulations and extensions appear in \cite{RazumovSaveliev1994,BoltonWoodward2003}. Let $f:S\to F$ be a holomorphic curve and $\varphi_i=f^*\pi_i^*\varphi_{\mathrm{FS}}$. Work near a point where all these forms are positive, and denote $\Ric(\boldsymbol\varphi)=(\Ric(\varphi_1),\ldots,\Ric(\varphi_n))$.

The tangent space $T_BF$ is isomorphic to $\mathfrak{g}/\mathfrak{b}$, where $\mathfrak{b}$ is the Lie algebra of the group $B$. Define a natural $n$-dimensional distribution $\mathcal{N}$ on the flag variety $F$ by
\[
\mathcal{N}(B)=\{x\in\mathfrak{g}|[x,[\mathfrak{b},\mathfrak{b}]]\subset\mathfrak{b}\}/\mathfrak{b}.
\]
For each simple root, let $P_i(B)$ be the corresponding minimal parabolic subgroup containing $B$, with Lie algebra $\mathfrak p_i(B)$, and denote
\[
\mathcal N_i(B):=\mathfrak p_i(B)/\mathfrak b.
\]
Then $\mathcal N=\bigoplus_{i=1}^n\mathcal N_i$; denote the associated projections by $pr_i:\mathcal N\to\mathcal N_i$. We will show the details of this decomposition in the proof of Theorem \ref{conj:Givental}.
\begin{defn}
    An integral curve $f:S\to F$ of the distribution $\mathcal{N}$ is called \emph{nondegenerate} if, for every simple-root summand $\mathcal N_i$,
\[
pr_i\circ df:TS\longrightarrow f^*\mathcal N_i
\]
is nowhere vanishing.
\end{defn}
In the following text, when we refer to an integral curve $f:S\to F$, we assume by default that it is tangent to the distribution $\mathcal{N}$.

Only nondegenerate integral curves induce an $n$-tuple of positive K\"ahler forms. The following local Pl\"ucker formula was conjectured by Givental \cite{Givental1989} and proved by Positsel'skii \cite{Positselskii1991}.
    \begin{thm}[Givental's conjecture]\label{conj:Givental}
If $f:S\to F$ is a nondegenerate integral curve on the flag manifold $F$, then
    \[
    \Ric(\boldsymbol\varphi)=2\boldsymbol\varphi C.
    \]
    \end{thm}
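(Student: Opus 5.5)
Locally, I will write $f=[g]$ with $g:U\to G$ holomorphic, and apply Borel--Weil to the pulled-back Fubini--Study forms. Near a point, lift $f$ to a holomorphic map $g:U\to G$ with $f(z)=gB_0g^{-1}$. The first step is to make the distribution $\mathcal N$ explicit at the base flag. Since $\mathfrak b=\mathfrak h\oplus\mathfrak n^-$ and $[\mathfrak b,\mathfrak b]=\mathfrak n^-$, an element $x$ with $[x,\mathfrak n^-]\subset\mathfrak b$ has $\mathfrak n^+$-component in $\bigoplus_i\mathfrak g_{\alpha_i}$. Hence $\mathcal N(B_0)=\bigoplus_i\mathfrak g_{\alpha_i}$ modulo $\mathfrak b$, and $\mathcal N_i(B_0)=\mathfrak g_{\alpha_i}$, because $\mathfrak p_i=\mathfrak b\oplus\mathfrak g_{\alpha_i}$. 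Translating by $G$ gives $\mathcal N=\bigoplus\mathcal N_i$ everywhere. The integrality and nondegeneracy conditions then say that
\[
g^{-1}\partial_z g\equiv \sum_i c_i(z)E_i \pmod{\mathfrak b},\qquad c_i\neq 0 .
\]
Right-multiplying $g$ by a holomorphic $B_0$-valued map, I can normalize further so that $g^{-1}\partial_zg=\sum_i E_i+h(z)$ with $h$ taking values in $\mathfrak h$, or even $h=0$ after rescaling coordinates by the torus. This is the standard Leznov--Saveliev gauge.

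Next I identify the forms. Let $v_i\in V_i$ be a lowest (extremal) weight vector representing $\pi_i(B_0)$; its line is $B_0$-stable. Then $\pi_i\circ f=[g\cdot v_i]$, and
\[
\varphi_i=\tfrac{\sqrt{-1}}{2}\,\partial\bar\partial\log\|g v_i\|^2
\]
up to the normalization of $\varphi_{\mathrm{FS}}$, with the norm $K$-invariant. To pass to a unitary frame, I write $g=k\,b$ using the Iwasawa-type decomposition $G=K\cdot B_0$, which is valid locally and smooth. Then $\|gv_i\|^2=|\chi_i(b)|^2$ for the character $\chi_i$ attached to $v_i$. Set $b=\exp(H)\,n$ with $H$ in $\mathfrak h$ (real part) and $n$ in the unipotent radical, and let $u_i=\varpi_i'(2\operatorname{Re}H)$ be the log-densities. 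Then the potentials of the $\varphi_i$ are linear in the Cartan coordinates.

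The computation proper is the Maurer--Cartan structure equation. Form $\alpha=k^{-1}dk$, which is $\mathfrak k$-valued and flat; equivalently, $\alpha=b\,(g^{-1}dg)\,b^{-1}-db\,b^{-1}$. Its $(1,0)$ part is
\[
\alpha^{1,0}=\sum_i e^{\alpha_i(H)}E_i\,dz+(\mathfrak b\text{-part}),
\]
and the reality condition $\alpha\in\mathfrak k$ forces the $(0,1)$ part to be $-\tau$ of this. Imposing $d\alpha+\tfrac12[\alpha,\alpha]=0$ and projecting to $\mathfrak h$, using $[E_i,\tau(E_j)]=-\delta_{ij}H_i$, gives
\[
\partial\bar\partial H\sim\sum_i e^{2\operatorname{Re}\alpha_i(H)}H_i .
\]
Pairing with the fundamental weights converts this to $\partial\bar\partial u_i=-\sum_j a_{ji}e^{u_j}$, after checking that $e^{u_i}$ equals $\|pr_i\,df\|^2$, which is itself expressed through the $\alpha_i(H)$. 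Finally I convert this into $\Ric(\boldsymbol\varphi)=2\boldsymbol\varphi C$ by noting that $\Ric(\varphi_i)=-\sqrt{-1}\partial\bar\partial\log(\text{density of }\varphi_i)$, and that the density of $\varphi_i$ is $|\partial\text{-component}|^2$ of the curve in $\mathbb P(V_i)$. Under the lowest-weight convention, the latter is $e^{-\alpha_i}$-weighted, so the log densities are related to the weights by $C$.

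I expect the main obstacle to be the bookkeeping of conventions, not the structure of the argument. The conventions to track are the negative Borel, the opposition-indexed $V_i=V_{-w_0\varpi_i}$, the fact that $B_0$ stabilizes a lowest weight line, the factor $2$ in the curvature normalization, and the transpose $a_{ji}$ versus $a_{ij}$. I would fix these by testing the final identity on $\mathfrak{sl}_2$, where it must reduce to curvature $4$ for the Fubini--Study metric pulled back by the identity, and on $\mathfrak{sl}_3$ with the Veronese-type curve. The local lift and the Iwasawa factorization are routine, and the smoothness of $k$ and $b$ follows from the decomposition $G=KB_0$ being a diffeomorphism $K\times_{T}B_0\to G$.
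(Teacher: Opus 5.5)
Your argument is correct in substance, but it takes a different route from the paper.

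\textbf{The paper's route.} The paper's proof, its review of Positsel'skii, never chooses a frame. The root--weight identity $\alpha_i=\sum_j a_{ji}\varpi_j$ gives an isomorphism of homogeneous line bundles $\mathcal N_i\simeq\bigotimes_j\mathcal L_j^{\otimes a_{ji}}$. The facts $d\pi_i(\mathcal N_j)=0$ for $j\neq i$ and $d\pi_i|_{\mathcal N_i}\neq 0$ show that $pr_i\circ df$ carries the metric of $\varphi_i$ on $TS$ to a $K$-invariant metric on $\mathcal N_i$. Transitivity of $K$ then makes that metric a constant multiple of the tensor-product metric, and additivity of curvature finishes. Constants never matter, so no normalization of root vectors or Hermitian forms is needed.

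\textbf{How your proposal relates to it.} Your last paragraph is the coordinate version of exactly this argument. Take a holomorphic lift with $g^{-1}\partial_z g\equiv\sum_i c_iE_i \pmod{\mathfrak b}$, and write $g=kb$ with $b=\exp(H)n$. The relations $\mathfrak b\,v_i\subset\CC v_i$ and $E_jv_i=0$ for $j\neq i$ give the density of $\varphi_i$ as $\kappa_i|c_i|^2e^{2\operatorname{Re}\alpha_i(H)}=\kappa_i|c_i|^2\prod_j\|gv_j\|^{-2a_{ji}}$, for some constant $\kappa_i>0$. Apply $-\ii\partial\bar\partial\log$ and use $\varphi_j=\frac{\ii}{2}\partial\bar\partial\log\|gv_j\|^2$. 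Since $\kappa_i$ and the harmonic term $\log|c_i|^2$ drop out, this gives $\Ric(\varphi_i)=2\sum_j a_{ji}\varphi_j$ directly.

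So the Maurer--Cartan step is not needed for this theorem. What it buys is the explicit Toda connection of the unitary frame $k$, which the paper obtains separately in its local zero-curvature construction and in Proposition~\ref{prop:unitary-curve-rigidity}. On that route, however, constants do matter. A density $\kappa_i|c_ie^{\alpha_i(H)}|^2$ would yield $\partial\bar\partial u_i=-\sum_j a_{ji}\kappa_j^{-1}e^{u_j}$. You would therefore have to prove $\kappa_i=1$ from $E_i^*=F_i$, $\|E_iv_i\|=\|v_i\|$ and $E_iv_i\perp v_i$, as the paper does in that proposition.

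\textbf{Slips to fix.}

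\emph{The claim ``or even $h=0$''.} This is false if you keep $c_i\equiv 1$. The condition $g^{-1}\partial_z g=\sum_iE_i$ forces $f$ to be an orbit of the one-parameter group $\exp(z\sum_iE_i)$; for $\mathfrak{sl}_2$, $f$ would be a M\"obius function of $z$. What is true is weaker. The big-cell lift $g=g_0n(z)$ with $n(z)\in\exp\mathfrak n^+$ satisfies $g^{-1}\partial_zg=\sum_ic_i(z)E_i$ exactly, because $\mathfrak n^+\cap\mathfrak b=0$. A holomorphic torus factor $t$ with $\alpha_i(t)=c_i$ then turns this into $\sum_iE_i+t^{-1}\partial_zt$. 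You can have $h=0$ or $c_i\equiv1$, but not both, and neither normalization is needed.

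\emph{Potentials versus densities.} The functions $u_i=\varpi_i'(2\operatorname{Re}H)$ that you call log-densities are, up to sign, the logarithms of the potentials $\|gv_i\|^2$. The actual log-densities are $\log|c_i|^2+2\operatorname{Re}\alpha_i(H)$, and $C$ enters precisely through the relation between these two families.

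\emph{The reality sign.} Take $\tau$ to be the compact conjugation, which is the one for which your identity $[E_i,\tau(E_j)]=-\delta_{ij}H_i$ holds. Then reality of $\alpha$ gives $\alpha_{\bar z}=\tau(\alpha_z)$, not $-\tau(\alpha_z)$. This sign is what produces the term $-\sum_ja_{ji}e^{u_j}$ rather than the opposite-curvature equation.
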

    \begin{proof}[Review of Positsel'skii's Proof]\quad\\
At $B_0$, the condition $[x,\mathfrak n^-]\subset\mathfrak b_0$ selects exactly the simple positive-root directions. Indeed, if $\beta$ is a nonsimple positive root, then $\beta-\alpha_k$ is positive for some simple root $\alpha_k$, and $[\mathfrak g_\beta,\mathfrak g_{-\alpha_k}]\not\subset\mathfrak b_0$; whereas $[\mathfrak g_{\alpha_i},\mathfrak n^-]\subset\mathfrak b_0$. Thus
\[
\mathcal N_i(B_0)=\mathfrak g_{\alpha_i}\bmod\mathfrak b_0,
\qquad
\mathcal N=\bigoplus_i\mathcal N_i,
\]
and the $B_0$-character of the fiber $\mathcal N_i(B_0)$ is $\alpha_i$.

Let $[v_i]=\pi_i(B_0)$. The evaluation construction makes this a lowest-weight line of weight $-\varpi_i$ in $V_i$. Choose nonzero simple-root vectors $e_j\in\mathfrak g_{\alpha_j}$. The corresponding $\mathfrak{sl}_2$-representations give $e_jv_i=0$ for $j\ne i$ and $e_iv_i\ne0$. Since the differential of the projective orbit map sends $X$ to $Xv_i\bmod\mathbb C v_i$, equivariance yields
\[
d\pi_i(\mathcal N_j)=0\quad(j\ne i),
\qquad
d\pi_i|_{\mathcal N_i}\ne0
\quad\text{everywhere on }F.
\]

We use the curvature convention
\[
\theta(\mathcal L,h)
=-\sqrt{-1}\,\partial\bar\partial\log\|s\|_h^2
\]
for a local nonvanishing holomorphic section $s$. Evaluation identifies $\mathcal L_i\simeq\pi_i^*\mathcal O_{\mathbb P(V_i)}(1)$. Put $h_i=\pi_i^*h_{\mathrm{FS}}$, where $\theta(\mathcal O(1),h_{\mathrm{FS}})=2\varphi_{\mathrm{FS}}$. Then
\[
\theta(\mathcal L_i,h_i)=2\pi_i^*\varphi_{\mathrm{FS}}.
\]
The root--weight identity $\alpha_i=\sum_j a_{ji}\varpi_j$ and the fiber characters give
\[
\mathcal N_i\simeq\bigotimes_j\mathcal L_j^{\otimes a_{ji}}.
\]
Equip this bundle with the tensor-product metric $h_{\mathcal N_i}=\bigotimes_j h_j^{a_{ji}}$, using dual metrics for negative powers.

Nondegeneracy makes $pr_i\circ df:TS\to f^*\mathcal N_i$ a holomorphic line-bundle isomorphism. The tensor-product metric on $\mathcal N_i$ and the metric induced through $d\pi_i$ are $K$-invariant. Since $K$ acts transitively on $F$, their ratio is a positive constant. By the differential identities above, the latter pulls back under $pr_i\circ df$ to the tangent metric defined by $\varphi_i$. Constant rescaling leaves curvature unchanged; hence additivity of curvature under tensor products gives
\[
\Ric(\varphi_i)
=f^*\theta(\mathcal N_i,h_{\mathcal N_i})
=2\sum_j a_{ji}\varphi_j.
\]
This is the asserted formula in each component.
    \end{proof}
\subsection{From Solutions to Integral Curves}
We now realize the Toda/integral-curve correspondence through the standard zero-curvature construction, as in Razumov--Saveliev \cite{RazumovSaveliev1994} and Nie \cite{Nie2017}; see also the group-theoretic treatment of Leznov--Saveliev \cite[Chs.~3--4]{LeznovSaveliev1992}. Assume $u=(u_1,\ldots,u_n)$ is a solution of the Toda system of type $\mathfrak g$ on a simply connected domain $U\subset\CC$, and use the triangular decomposition fixed in Section~\ref{sec:Lie theory}. Let $\tau$ be the antiholomorphic involution whose fixed-point set is the chosen compact real form. Choose the Chevalley generators compatibly with this compact form, so that
    \[
    \tau(e_i)=-f_i,\qquad \tau(f_i)=-e_i,
    \qquad \tau(h_i)=-h_i.
    \]
Thus $(e_i,h_i,f_i)$ is an $\mathfrak{sl}_2$-triple satisfying
    \[
    [h_i,e_i]=2e_i,[h_i,f_i]=-2f_i,[e_i,f_i]=h_i.
    \]
Define a $\mathfrak{g}$-valued $1$-form
    \[
    \Theta=\frac{1}{2}\sum_{i,j}a^{ij}h_j
    (\partial_z u_i\,dz-\partial_{\bar z}u_i\,d\bar z)
    +\sum_i e^{u_i/2}(e_i\,dz-f_i\,d\bar z),
    \]
where $(a^{ij})=C^{-1}$. For every real tangent vector $X$, the compact compatibility above and the reality of the functions $u_i$ give $\tau(\Theta(X))=\Theta(X)$. Hence $\Theta$ is valued in the compact real form as a real one-form.
\begin{lem}
$\Theta$ satisfies the Maurer--Cartan equation:
    \[
        d\Theta+\frac{1}{2}[\Theta,\Theta]=0.
    \]
\end{lem}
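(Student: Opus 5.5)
The plan is a direct computation, organizing $\Theta$ by type and checking each root-space component of the curvature. Write
\[
\Theta=A\,dz+B\,d\bar z,\qquad
A=H_A+\sum_k e^{u_k/2}e_k,\quad
B=H_B-\sum_k e^{u_k/2}f_k,
\]
where
\[
H_A=\tfrac12\sum_{i,j}a^{ij}\,\partial_z u_i\,h_j,\qquad
H_B=-\tfrac12\sum_{i,j}a^{ij}\,\partial_{\bar z}u_i\,h_j .
\]
Then $d\Theta+\frac12[\Theta,\Theta]=\bigl(\partial_zB-\partial_{\bar z}A+[A,B]\bigr)\,dz\wedge d\bar z$. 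The lemma therefore reduces to the single identity
\[
\partial_zB-\partial_{\bar z}A+[A,B]=0 .
\]
I will verify it separately in the components $\mathfrak g_{\alpha_k}$, $\mathfrak g_{-\alpha_k}$ and $\mathfrak h$. No other root spaces can appear, because brackets of $e_k$ with $f_l$ lie in $\mathfrak h$ (they vanish for $k\ne l$).

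The first step is to fix the Cartan convention $a_{jk}=\alpha_k(h_j)$, so that $[h_j,e_k]=a_{jk}e_k$ and $[h_j,f_k]=-a_{jk}f_k$. I will check this against the root--weight identity $\alpha_i=\sum_j a_{ji}\varpi_j$ used in the proof of Theorem~\ref{conj:Givental}, so that the final equation has exactly the index placement $a_{ji}$ of the Toda system. Getting this index bookkeeping consistent with $(a^{ij})=C^{-1}$ is the only real obstacle: the computation is otherwise mechanical. The key relation I will use is $\sum_j a^{ij}a_{jk}=\delta_{ik}$.

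\emph{Root components.} In $\mathfrak g_{-\alpha_k}$, the term $\partial_zB$ contributes $-\frac12\,\partial_zu_k\,e^{u_k/2}f_k$. The bracket $[H_A,-e^{u_k/2}f_k]$ contributes
\[
\tfrac12\sum_{i,j}a^{ij}a_{jk}\,\partial_zu_i\,e^{u_k/2}f_k=\tfrac12\,\partial_zu_k\,e^{u_k/2}f_k ,
\]
so the two cancel. Symmetrically, in $\mathfrak g_{\alpha_k}$ the term $-\partial_{\bar z}A$ contributes $-\frac12\,\partial_{\bar z}u_k\,e^{u_k/2}e_k$. It is cancelled by $[e^{u_k/2}e_k,H_B]=\frac12\,\partial_{\bar z}u_k\,e^{u_k/2}e_k$.

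\emph{Cartan component.} Here $\partial_zB-\partial_{\bar z}A$ gives $-\sum_{i,j}a^{ij}\,\partial_z\partial_{\bar z}u_i\,h_j$. The bracket $[H_A,H_B]$ vanishes since $\mathfrak h$ is abelian. The brackets $[e_k,-f_l]e^{(u_k+u_l)/2}$ give $-\sum_k e^{u_k}h_k$, since $[e_k,f_l]=\delta_{kl}h_k$. Linear independence of the $h_j$ reduces vanishing to
\[
\sum_i a^{ij}\,\partial_z\partial_{\bar z}u_i=-e^{u_j}\qquad\text{for all } j .
\]
Applying $C$ to both sides, this is equivalent to $\partial_z\partial_{\bar z}u_i=-\sum_j a_{ji}e^{u_j}$, which is precisely the Toda system. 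This completes the verification.
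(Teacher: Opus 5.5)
Your proposal is correct and is the same direct computation as the paper's proof. The paper reduces $\partial_z\Theta_{\bar z}-\partial_{\bar z}\Theta_z+[\Theta_z,\Theta_{\bar z}]$, using $[h_j,e_i]=a_{ji}e_i$, $[h_j,f_i]=-a_{ji}f_i$ and $[e_i,f_j]=\delta_{ij}h_i$, to the Cartan term $-\sum_j\bigl(\sum_i a^{ij}\partial_z\partial_{\bar z}u_i+e^{u_j}\bigr)h_j$, which vanishes by the Toda equation; you just write out explicitly the cancellations in the components $\mathfrak g_{\pm\alpha_k}$ that the paper leaves implicit.
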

\begin{proof}
Define $\Theta=\Theta_z\,dz+\Theta_{\bar z}\,d\bar z$. Using $[h_j,e_i]=a_{ji}e_i$, $[h_j,f_i]=-a_{ji}f_i$, and $[e_i,f_j]=\delta_{ij}h_i$, a direct calculation gives
\[
\partial_z\Theta_{\bar z}-\partial_{\bar z}\Theta_z
+[\Theta_z,\Theta_{\bar z}]
=-\sum_j\left(\sum_i a^{ij}\partial_z\partial_{\bar z}u_i
+e^{u_j}\right)h_j.
\]
The Toda equation implies $\sum_i a^{ij}\partial_z\partial_{\bar z}u_i=-e^{u_j}$, so the right-hand side vanishes.
\end{proof}
Fix $z_0\in U$ and an initial value $\phi_U(z_0)\in K$. There exists a unique $\phi_U:U\to G$ such that $\Theta|_U=\phi_U^{-1}d\phi_U$ with this initial value. Since $\Theta$ is compact-real-valued, uniqueness for this differential equation gives $\phi_U(U)\subset K$. Define $f_U:U\to F$ by composing $\phi_U$ with $pr:G\to F$, $g\mapsto gB_0g^{-1}$. For the remainder of the local argument we simply denote $\phi$ and $f$ for $\phi_U$ and $f_U$.
\begin{lem}
$f$ is a holomorphic integral curve of the distribution $\mathcal{N}$.
\end{lem}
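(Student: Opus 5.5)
The plan is to compute the differential of $f=pr\circ\phi$ by left-translating everything back to the base flag $B_0$, where both the complex structure of $F$ and the distribution $\mathcal N$ are explicit. We have $T_{B_0}F\simeq\mathfrak g/\mathfrak b_0$ as complex vector spaces. The map $pr:G\to F$, $g\mapsto gB_0g^{-1}$, is a holomorphic $G$-equivariant submersion with $d\,pr_e(X)=X\bmod\mathfrak b_0$. By $G$-equivariance, for $g\in G$ we get
\[
d\,pr_g(g\cdot X)=dL_g\bigl(X\bmod\mathfrak b_0\bigr),\qquad X\in\mathfrak g,
\]
where $L_g$ is the left action of $g$ on $F$, a biholomorphism. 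Moreover $\mathcal N(gB_0g^{-1})=dL_g(\mathcal N(B_0))$ and likewise for each $\mathcal N_i$, because the defining conditions $[x,[\mathfrak b,\mathfrak b]]\subset\mathfrak b$ and $\mathfrak p_i(B)/\mathfrak b$ are $\operatorname{Ad}$-natural.

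Next I use the equation $\phi^{-1}d\phi=\Theta$, viewing $\phi:U\to K\subset G$ as a smooth map into the complex group $G$. Splitting $d\phi$ into types, with $\Theta_z,\Theta_{\bar z}\in\mathfrak g$ as complex components, gives
\[
\phi^{-1}\partial_z\phi=\Theta_z,\qquad \phi^{-1}\partial_{\bar z}\phi=\Theta_{\bar z}.
\]
Since $pr$ and $L_{\phi(z)}$ are holomorphic, the $(1,0)$-part of $df$ is obtained by applying the complex-linear map $d\,pr$ to these components. Hence
\[
\bar\partial f\;\longleftrightarrow\; dL_{\phi}\bigl(\Theta_{\bar z}\bmod\mathfrak b_0\bigr),\qquad
\partial f\;\longleftrightarrow\; dL_{\phi}\bigl(\Theta_z\bmod\mathfrak b_0\bigr).
\]
From the formula for $\Theta$ we have
\[
\Theta_{\bar z}=-\tfrac12\sum_{i,j}a^{ij}h_j\,\partial_{\bar z}u_i-\sum_i e^{u_i/2}f_i\in\mathfrak h\oplus\mathfrak n^-=\mathfrak b_0 .
\]
So $\bar\partial f=0$ and $f$ is holomorphic.

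Similarly, $\Theta_z\in\mathfrak h\oplus\bigoplus_i\mathfrak g_{\alpha_i}$. Therefore
\[
\Theta_z\bmod\mathfrak b_0=\sum_i e^{u_i/2}e_i\bmod\mathfrak b_0\in\bigoplus_i\mathcal N_i(B_0)=\mathcal N(B_0),
\]
using the identification $\mathcal N_i(B_0)=\mathfrak g_{\alpha_i}\bmod\mathfrak b_0$ from the review of Positsel'skii's proof. Translating by $dL_{\phi(z)}$ shows $df(T_zU)\subset\mathcal N(f(z))$, so $f$ is an integral curve. As a byproduct, $pr_i\circ df(\partial_z)$ corresponds to $e^{u_i/2}e_i\neq0$, so $f$ is in fact nondegenerate.

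The only delicate point is the first step: justifying that the $(0,1)$-derivative of $f$ is computed by projecting $\phi^{-1}\partial_{\bar z}\phi$ to $\mathfrak g/\mathfrak b_0$, even though $\phi$ itself is not holomorphic. I would handle this by writing $df=d\,pr_\phi\circ d\phi$ and complexifying. Because $d\,pr_g:T_gG\to T_{pr(g)}F$ is complex linear for the holomorphic structures, its complexification maps the $\partial_{\bar z}$-component of $d\phi$, read in $T^{1,0}G\simeq\mathfrak g$, to the $T^{1,0}F$-component of $df(\partial_{\bar z})$. Holomorphy of $f$ is exactly the vanishing of that component. Everything else is a direct inspection of which root spaces occur in $\Theta_z$ and $\Theta_{\bar z}$.
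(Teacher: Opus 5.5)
Your proposal is correct and follows essentially the same route as the paper: you identify $df$ with $dL_{\phi(z)}$ (the paper's $\operatorname{Ad}_{\phi(z)}$) applied to $\Theta \bmod \mathfrak b_0$. You then use $\Theta_{\bar z}\in\mathfrak h\oplus\mathfrak n^-=\mathfrak b_0$ for holomorphy and $\Theta_z\equiv\sum_i e^{u_i/2}e_i \pmod{\mathfrak b_0}$ for integrality and nondegeneracy, exactly as the paper does. Your additional complexification argument makes explicit a point the paper's identification $df(X)=\operatorname{Ad}_{\phi(z)}(\Theta(X)\bmod\mathfrak b)$ leaves implicit: the $T^{1,0}$-component of $df(\partial_{\bar z})$ is obtained by applying the complex-linear $d\,pr$ to $\Theta_{\bar z}$, even though $\phi$ itself is not holomorphic.
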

\begin{proof}
Under the natural identification
\[
    T_{f(z)}F\simeq\operatorname{Ad}_{\phi(z)}(\mathfrak g/\mathfrak b),
\]
the differential is
\[
df(X)=\operatorname{Ad}_{\phi(z)}\left(\Theta(X)\mod \mathfrak b\right).
\]
Because $\mathfrak b=\mathfrak h\oplus\mathfrak n^-$,
\[
\Theta(\partial_{\bar z})\in\mathfrak b,
\qquad
\Theta(\partial_z)\equiv\sum_i e^{u_i/2}e_i\pmod{\mathfrak b}.
\]
Thus $df(\partial_{\bar z})=0$, while $df(\partial_z)$ lies in the $G$-translate of $\bigoplus_i\mathfrak g_{\alpha_i}\bmod\mathfrak b=\mathcal N(B_0)$. Therefore $f$ is holomorphic and integral. Since every coefficient $e^{u_i/2}$ is positive, it is also nondegenerate.
\end{proof}
\begin{lem}
    \[
    f^*\pi_i^*\varphi_{\mathrm{FS}}
    =\frac{\sqrt{-1}}2e^{u_i} dz \wedge d\bar{z}.
    \]
\end{lem}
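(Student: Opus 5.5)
We compute the pulled-back Fubini--Study form directly from the frame $\phi$. Since $\phi$ takes values in $K$ and the Hermitian form on $V_i$ is $K$-invariant, we may transport everything back to the base point. Write $\pi_i\circ f=[\phi\, v_i]$, where $v_i$ is the lowest-weight vector of weight $-\varpi_i$ spanning $\pi_i(B_0)$, normalized by $\|v_i\|=1$. For a curve $z\mapsto[s(z)]$ with $s$ smooth (not necessarily holomorphic), the Fubini--Study form is determined by the component of $\partial_z s$ orthogonal to $s$. With the paper's normalization $\theta(\mathcal O(1),h_{\mathrm{FS}})=2\varphi_{\mathrm{FS}}$, this gives
\[
\varphi_{\mathrm{FS}}
=\frac{\sqrt{-1}}2\,\frac{\|s\|^2\|\partial_z s\|^2-|\langle\partial_z s,s\rangle|^2}{\|s\|^4}\,dz\wedge d\bar z
\]
up to the fixed constant that the normalization determines. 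The constant is checked against $\mathbb P^1$ with curvature $4$, and I would verify it at this stage.

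Take $s=\phi v_i$. Then $\partial_z s=\phi\,\Theta_z v_i$, and by unitarity the expression reduces to $\|\Theta_z v_i\|^2-|\langle\Theta_z v_i,v_i\rangle|^2$. Now expand $\Theta_z=\tfrac12\sum_{j,k}a^{kj}h_j\partial_z u_k+\sum_j e^{u_j/2}e_j$.

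\begin{itemize}
\item The Cartan part multiplies $v_i$ by a scalar, so it only contributes along $v_i$.
\item As in the review of Positsel'skii's proof, $e_jv_i=0$ for $j\ne i$. Hence the orthogonal part is $e^{u_i/2}e_iv_i$; this vector has weight $-\varpi_i+\alpha_i$ and is therefore orthogonal to $v_i$.
\end{itemize}

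The form is thus $\frac{\sqrt{-1}}2 e^{u_i}\|e_iv_i\|^2\,dz\wedge d\bar z$, multiplied by the normalization constant.

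It remains to show $\|e_iv_i\|^2=1$, using the $\mathfrak{sl}_2$-triple $(e_i,h_i,f_i)$. The compact compatibility $\tau(e_i)=-f_i$ means the $K$-invariant Hermitian form satisfies $e_i^*=f_i$: the adjoint of $X\in\mathfrak k$ is $-X$, and $e_i-f_i,\ \sqrt{-1}(e_i+f_i)\in\mathfrak k$. Using $f_iv_i=0$, since $v_i$ is lowest weight, we get
\[
\|e_iv_i\|^2=\langle f_ie_iv_i,v_i\rangle=\langle [f_i,e_i]v_i,v_i\rangle=-\langle h_iv_i,v_i\rangle=\varpi_i(h_i)=1.
\]

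The main obstacle is the consistency of the normalization constant. With $\theta(\mathcal O(1))=2\varphi_{\mathrm{FS}}$, one gets $\varphi_{\mathrm{FS}}=\frac{\sqrt{-1}}2\partial\bar\partial\log\|s\|^2$ for holomorphic $s$. The quotient formula then carries exactly the factor $\frac{\sqrt{-1}}2$, so the calibration works out.

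I would double-check this in the $\mathfrak{sl}_2$ case. The equation $\partial\bar\partial u=-2e^u$ must correspond to curvature $4$ for $\frac{\sqrt{-1}}2e^u dz\wedge d\bar z$. That matches the Fubini--Study normalization with $\|e_1v_1\|=1$.

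A secondary point is that $s=\phi v_i$ is not holomorphic. The formula above uses only $\partial_z$ together with the orthogonal projection, which is valid for any smooth lift. Alternatively, one can note that $\pi_i\circ f$ is holomorphic by the preceding lemma and change to a holomorphic lift by a scalar gauge, which leaves the formula unchanged.
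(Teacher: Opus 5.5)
Your proof is correct, but it takes a different route from the paper's. The paper argues through curvature. It realizes $\mathcal L_i\simeq K\times_{T^c}\mathbb C_{\omega_i}$ and checks that applying $d\omega_i$ to the Cartan part of the Maurer--Cartan form gives the Chern connection of $h_i$. In the unitary gauge $\phi$ the connection form is therefore $A_i=d\omega_i(\Theta_{\mathfrak h})$. The paper then computes $f^*\pi_i^*\varphi_{\mathrm{FS}}=\frac{\sqrt{-1}}2\,dA_i$, using the Toda equation a second time to turn $-\sum_r a^{ri}\partial_z\partial_{\bar z}u_r$ into $e^{u_i}$.

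You instead compute the first-order quantity $\|(\partial_z s)^\perp\|^2/\|s\|^2$ for the lift $s=\phi v_i$. This isolates the simple-root coefficient $e^{u_i/2}$ times $\|e_iv_i\|$, and you then fix $\|e_iv_i\|=1$ from $e_i^*=f_i$. All of these steps check out, including the calibration $\varphi_{\mathrm{FS}}=\frac{\sqrt{-1}}2\partial\bar\partial\log\|s\|^2$. The $\partial_z$-only formula is legitimate for the non-holomorphic lift because $\partial_{\bar z}s=\phi\,\Theta_{\bar z}v_i\in\mathbb C s$, since $\Theta_{\bar z}\in\mathfrak b$.

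Your route avoids identifying the homogeneous connection with the Chern connection, and it uses the Toda equation only through the existence of $\phi$. It is the same computation the paper performs later in the proof of Proposition~\ref{prop:unitary-curve-rigidity} to obtain $|c_i|^2=e^{u_i}$, so it makes the two normalizations visibly consistent. The paper's route, by contrast, never needs the Hermitian normalization of $e_iv_i$, because the constant is absorbed into $\theta(\mathcal L_i,h_i)=2\pi_i^*\varphi_{\mathrm{FS}}$. It also matches the curvature formalism of the review of Positsel'skii's proof.
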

\begin{proof}
Let
\[
\Theta_{\mathfrak h}=\frac12\sum_{r,s}a^{rs}h_s
\left(
\partial_z u_r\,dz-\partial_{\bar z}u_r\,d\bar z
\right)
\]
be the Cartan component of $\Theta$.
Denote $T^c=K\cap B_0$, so that $F\simeq K/T^c$ and $\mathcal L_i\simeq K\times_{T^c}\mathbb C_{\omega_i}$. Applying $d\omega_i$ to the $\operatorname{Lie}(T^c)$-component of the Maurer--Cartan form gives a unitary connection on this homogeneous line bundle. Its $(0,1)$-part agrees with the given holomorphic structure: for a local $K$-lift of a holomorphic map, the $(0,1)$-component lies in $\mathfrak b_0=\mathfrak h\oplus\mathfrak n^-$, and $d\omega_i$ vanishes on $\mathfrak n^-$. Hence this is the Chern connection. Since any two $K$-invariant Hermitian metrics on $\mathcal L_i$ differ by a positive constant, it is also the Chern connection of $h_i$.

In the unitary gauge determined by $\phi$, the connection form on $f^*\mathcal L_i$ is
\[
 A_i=d\omega_i(\Theta_{\mathfrak h})
 =\frac12\sum_r a^{ri}
 (\partial_z u_r\,dz-\partial_{\bar z}u_r\,d\bar z).
\]
With the curvature convention of Step~4, $\theta(f^*\mathcal L_i,f^*h_i)=\sqrt{-1}\,dA_i$.
The Fubini--Study normalization and the Toda equation now give
\[
\begin{aligned}
f^*\pi_i^*\varphi_{\mathrm{FS}}
&=\frac{\sqrt{-1}}2\,dA_i
=-\frac{\sqrt{-1}}2\sum_r a^{ri}
\partial_z\partial_{\bar z}u_r\,dz\wedge d\bar z\\
&=\frac{\sqrt{-1}}2\sum_{r,s}a^{ri}a_{sr}e^{u_s}\,dz\wedge d\bar z \\
&=\frac{\sqrt{-1}}2e^{u_i}dz\wedge d\bar z.
\end{aligned}
\]
\end{proof}
Hence every solution of the Toda system on $U$ arises locally from a nondegenerate integral curve. Define
\begin{equation}\label{eq:rho-dual}
\rho^\vee:=
\sum_{j=1}^n\left(\sum_{i=1}^n a^{ij}\right)h_j.
\end{equation}
Then $\alpha_i(\rho^\vee)=1$ for every simple root $\alpha_i$.
\begin{thm}\label{thm:toda-to-unitary-curve}
Let $(\varphi_1,\ldots,\varphi_n)$ be a smooth solution of the Toda system of type $\mathfrak g$ on a Riemann surface $S$. Then there is a multi-valued integral curve represented by an equivariant pair
     \[
     \widetilde f:\widetilde S\to F,\qquad
     \rho_f:\pi_1(S)\to K_{\mathrm{ad}},
     \]
where $p:\widetilde S\to S$ is the universal covering map, such that $\widetilde f(\gamma x)=\rho_f(\gamma)\widetilde f(x)$ and $p^*\varphi_i=\widetilde f^*\pi_i^*\varphi_{\mathrm{FS}}$ on $\widetilde S$. Furthermore, for every $i=1,\ldots,n$, the map
        \[
        pr_i\circ d\widetilde f:T\widetilde S
        \longrightarrow\widetilde f^*\mathcal{N}_i
        \]
is nowhere vanishing.
\end{thm}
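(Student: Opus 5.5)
The strategy is to pull the local construction of the preceding lemmas back to the universal cover $\widetilde S$ and integrate there once, rather than working chart by chart. Write $p^*\varphi_i=\frac{\sqrt{-1}}2e^{u_i}\,dz\wedge d\bar z$ in a local holomorphic coordinate $z$. The first thing to check is how the form $\Theta$ changes under a coordinate change $z=z(w)$. The functions transform by $u_i^{(w)}=u_i^{(z)}+\log|z'|^2$. Hence the Cartan part of $\Theta$ shifts by $\frac12\rho^\vee(d\log z'-d\log\bar z')$, using $\sum_i a^{ij}=$ the coefficient of $h_j$ in $\rho^\vee$ from \eqref{eq:rho-dual}. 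The off-diagonal part $e^{u_i/2}e_i\,dz$ picks up the phase $|z'|/z'$ relative to $e^{u_i^{(w)}/2}e_i\,dw$.

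Let $\lambda=\arg z'$, a local real function, and set $t=\exp(\sqrt{-1}\lambda\rho^\vee)\in T^c\subset K$. Since $\alpha_i(\rho^\vee)=1$ for every simple root, $\operatorname{Ad}_t$ rotates every $e_i$ by the same phase and every $f_i$ by the opposite phase. I expect to verify the gauge relation $\Theta^{(w)}=\operatorname{Ad}_{t^{-1}}\Theta^{(z)}+t^{-1}dt$, possibly with $t$ replaced by $t^{\pm1}$. This is the key computation, and I expect it to be the main obstacle, mostly in getting signs and the branch of $\lambda$ right. The branch ambiguity of $\lambda$ changes $t$ by $\exp(2\pi\sqrt{-1}k\rho^\vee)$. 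This element may be central in $G$ rather than trivial, so I will work in $G_{\mathrm{ad}}$ and $K_{\mathrm{ad}}$, where it disappears. This is why the statement lands in $K_{\mathrm{ad}}$.

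With the gauge relation in hand, the local frames $\phi_U$ glue. Fix a chart $U$ and a solution $\phi_U$. For an overlapping chart $V$ with coordinate $w$, the map $\phi_U t$ solves the Maurer--Cartan equation for $\Theta^{(w)}$. So I choose the initial value so that $\phi_V=\phi_U t$ on the overlap, up to the central ambiguity. Since $t\in T^c\subset B_0$, the projection $pr(\phi)=\phi B_0\phi^{-1}$ is unchanged. Therefore the local curves $f_U$ agree on overlaps. A cleaner way to say this: the data define a flat connection on a principal $K_{\mathrm{ad}}$-bundle over $S$ with structure reduced along $T^c$. I then invoke the developing-map and holonomy construction of Kobayashi--Nomizu on $\widetilde S$, which is simply connected, so analytic continuation along paths gives a global $\widetilde\phi:\widetilde S\to K_{\mathrm{ad}}$, well defined up to the local $T^c$ gauge.

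Set $\widetilde f=pr\circ\widetilde\phi$. For a deck transformation $\gamma$, the pullback of $\Theta$ on $\widetilde S$ is $\gamma$-invariant up to gauge. So $\widetilde\phi\circ\gamma$ and $\widetilde\phi$ solve the same equation up to the $T^c$ gauge. Uniqueness of solutions then gives $\widetilde\phi(\gamma x)=\rho_f(\gamma)\widetilde\phi(x)t_\gamma(x)$ with $\rho_f(\gamma)\in K_{\mathrm{ad}}$, because both sides take values in $K_{\mathrm{ad}}$. This yields the equivariance $\widetilde f(\gamma x)=\rho_f(\gamma)\widetilde f(x)$. The homomorphism property of $\rho_f$ follows from uniqueness. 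The remaining claims are local and come directly from the earlier lemmas on each chart:
\begin{itemize}
\item the identity $p^*\varphi_i=\widetilde f^*\pi_i^*\varphi_{\mathrm{FS}}$ holds because $\pi_i^*\varphi_{\mathrm{FS}}$ is $K$-invariant and the lemma gives it in each chart;
\item holomorphicity, integrality and the nonvanishing of $pr_i\circ d\widetilde f$ follow from the coefficient $e^{u_i/2}>0$.
\end{itemize}
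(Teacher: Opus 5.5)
Your proposal is correct and follows essentially the same route as the paper: the torus gauge $t_{zw}=\exp(\sqrt{-1}\vartheta\rho^\vee)$ with $\vartheta=\arg(dz/dw)$, passage to $K_{\mathrm{ad}}$ to remove the central branch ambiguity $\exp(2\pi\sqrt{-1}\rho^\vee)$, a flat $K_{\mathrm{ad}}$-bundle on $S$, and the Kobayashi--Nomizu developing construction on $\widetilde S$, with the local claims taken from the preceding lemmas. Your first guess $\Theta^{(w)}=\operatorname{Ad}_{t^{-1}}\Theta^{(z)}+t^{-1}dt$ is the correct gauge law. The one step you leave implicit is the cocycle identity $t_{z\zeta}=t_{zw}t_{w\zeta}$ on triple overlaps, which the paper checks using the chain rule and additivity of arguments modulo $2\pi$.
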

\begin{proof}
First construct the connection on $S$. If $z,w$ are overlapping holomorphic coordinates and $\lambda=\frac{dz}{dw}=|\lambda|e^{\sqrt{-1}\vartheta}$, then $u_i^{(w)}=u_i^{(z)}+\log|\lambda|^2$. Because $dw=\lambda^{-1}dz$ and $e^{u_i^{(w)}/2}=|\lambda|e^{u_i^{(z)}/2}$, the simple-root terms transform according to
\[
\begin{aligned}
e^{u_i^{(w)}/2}e_i\,dw
&=e^{-\sqrt{-1}\vartheta}e^{u_i^{(z)}/2}e_i\,dz,\\
e^{u_i^{(w)}/2}f_i\,d\bar w
&=e^{\sqrt{-1}\vartheta}e^{u_i^{(z)}/2}f_i\,d\bar z.
\end{aligned}
\]
The change in the Cartan part is
\[
\frac12\rho^\vee\left(
\partial_w\log|\lambda|^2\,dw
-\partial_{\bar w}\log|\lambda|^2\,d\bar w
\right)=\sqrt{-1}\rho^\vee\,d\vartheta
\]
where we used that $\lambda$ is holomorphic and nowhere zero on the coordinate overlap. Denote
\[
t_{zw}=\exp(\sqrt{-1}\vartheta\,\rho^\vee)\in K_{\mathrm{ad}}.
\]
Since $\alpha_i(\rho^\vee)=1$, one has
\[
\operatorname{Ad}_{t_{zw}^{-1}}e_i
=e^{-\sqrt{-1}\vartheta}e_i,
\qquad
\operatorname{Ad}_{t_{zw}^{-1}}f_i
=e^{\sqrt{-1}\vartheta}f_i,
\qquad
t_{zw}^{-1}dt_{zw}=\sqrt{-1}\rho^\vee\,d\vartheta.
\]
The preceding calculations therefore give the gauge law
\[
\Theta^{(w)}
=t_{zw}^{-1}\Theta^{(z)}t_{zw}+t_{zw}^{-1}dt_{zw}.
\]
Replacing $\vartheta$ by $\vartheta+2\pi$ multiplies $t_{zw}$ by $\exp_{G_{\mathrm{ad}}}(2\pi\sqrt{-1}\rho^\vee)$. Since every root is an integral combination of simple roots and $\alpha_i(\rho^\vee)=1$, the element $\exp_G(2\pi\sqrt{-1}\rho^\vee)$ acts trivially in the adjoint representation and therefore belongs to $Z(G)$. Consequently, $\exp_{G_{\mathrm{ad}}}(2\pi\sqrt{-1}\rho^\vee)=e$, and $t_{zw}$ is independent of the chosen argument of $\lambda$. On a triple overlap, the identities
\[
\frac{dz}{d\zeta}=\frac{dz}{dw}\frac{dw}{d\zeta}
\]
and additivity of the arguments modulo $2\pi$ give $t_{z\zeta}=t_{zw}t_{w\zeta}$ in $G_{\mathrm{ad}}$. Hence these transition functions satisfy the cocycle condition and define a principal $K_{\mathrm{ad}}$-bundle $P\to S$, and the local forms $\Theta^{(z)}$ define on it a global flat connection. Compact compatibility makes $\Theta^{(z)}$ $\mathfrak k_{\mathrm{ad}}$-valued as a real one-form, so its holonomy lies in $K_{\mathrm{ad}}$. This is the standard developing-map/holonomy construction for a flat principal connection \cite[Ch.~II, \S\S3--4]{KobayashiNomizu1963}.

Pull $P$ and its connection back by $p:\widetilde S\to S$. Since $\widetilde S$ is simply connected, parallel transport trivializes the pullback. On a coordinate chart $U\subset\widetilde S$, the preceding local construction gives a map $f_U:U\to F$ satisfying
\[
p^*\varphi_i|_U=f_U^*\pi_i^*\varphi_{\mathrm{FS}}.
\]
The local maps glue in the parallel trivialization to a single holomorphic integral curve $\widetilde f:\widetilde S\to F$ together with the holonomy representation $\rho_f:\pi_1(S)\to K_{\mathrm{ad}}$ of the connection on $P$ such that
\[
\widetilde f(\gamma\cdot x)
=
\rho_f(\gamma)\widetilde f(x).
\]
The local differential formula shows that every simple-root coefficient is $e^{u_i/2}>0$. Hence each $pr_i\circ d\widetilde f$ is nowhere vanishing. This equivariant pair is, by definition, the required multi-valued curve.
\end{proof}

The following proposition is the weak congruence principle for the nondegenerate holomorphic integral curves considered here; see Sijbrandij \cite[Section~4.3 and Theorem~7.1]{Sijbrandij2000}. We give a direct proof in our adjoint setting, expressing the invariants through the fundamental Pl\"ucker forms and making the frame normalization and coordinate compatibility explicit.

\begin{prop}
\label{prop:unitary-curve-rigidity}
Let $\widetilde f_1,\widetilde f_2:\widetilde S\to G_{\mathrm{ad}}/B_{\mathrm{ad}}$ be nondegenerate integral curves whose monodromy lies in the fixed $K_{\mathrm{ad}}$. If
\[
\widetilde f_1^*\pi_i^*\varphi_{\mathrm{FS}}
=\widetilde f_2^*\pi_i^*\varphi_{\mathrm{FS}}
\qquad\text{for every }i,
\]
then there is a constant $k\in K_{\mathrm{ad}}$ such that
\[
\widetilde f_2=k\cdot\widetilde f_1.
\]
\end{prop}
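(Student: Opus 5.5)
The plan is a moving-frame argument: attach to each curve a canonical $K_{\mathrm{ad}}$-valued frame, show that the Toda forms determine the frame's Maurer--Cartan form, and conclude that the two frames differ by a constant. Throughout we use $\varphi_i:=\widetilde f_1^*\pi_i^*\varphi_{\mathrm{FS}}=\widetilde f_2^*\pi_i^*\varphi_{\mathrm{FS}}$, and in a local coordinate $z$ on $\widetilde S$ we write $\varphi_i=\frac{\sqrt{-1}}2e^{u_i}dz\wedge d\bar z$.

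\textbf{Step 1: canonical frames.} Work locally on a simply connected chart $U\subset\widetilde S$. For $\widetilde f=\widetilde f_1$ or $\widetilde f_2$, choose a lift $\phi:U\to K_{\mathrm{ad}}$ with $\widetilde f=\phi B_{\mathrm{ad}}\phi^{-1}$. Such a lift exists because $F\simeq K_{\mathrm{ad}}/T$, where $T=K_{\mathrm{ad}}\cap B_{\mathrm{ad}}$ is the maximal torus. It is unique up to right multiplication by a map $U\to T$.

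Put $\Theta=\phi^{-1}d\phi$. This form is $\mathfrak k$-valued. Holomorphy of $\widetilde f$ gives $\Theta(\partial_{\bar z})\in\mathfrak b$. Integrality gives $\Theta(\partial_z)\equiv\sum_i c_ie_i\pmod{\mathfrak b}$. Combining these with the reality condition $\tau(\Theta(X))=\Theta(X)$ forces
\[
\Theta=\Theta_{\mathfrak h}+\sum_i\bigl(c_ie_i\,dz-\bar c_if_i\,d\bar z\bigr),
\]
where $\Theta_{\mathfrak h}$ is $\mathfrak h\cap\mathfrak k$-valued. Nondegeneracy says every $c_i\neq0$.

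\textbf{Step 2: normalizing the frame.} Gauge by $t:U\to T$. Then $c_i$ is multiplied by the character $\alpha_i(t)^{-1}$. Since the simple roots form a basis of the character lattice of the adjoint torus, we can choose $t$ so that every $c_i$ is real and positive. After this choice the normalized frame is unique up to the finite group
\[
\{t\in T:\alpha_i(t)=1\ \forall i\}=\{e\}\subset K_{\mathrm{ad}}.
\]
In the adjoint group this group is trivial, so the normalized frame is unique up to the initial constant.

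\textbf{Step 3: the invariants determine $\Theta$.} The computation of Lemma~1.4, run in reverse, shows that $\varphi_i=\frac{\sqrt{-1}}2|c_i|^2dz\wedge d\bar z$. Here the identity $\sum_r a^{ri}a_{sr}=\delta_{is}$ is replaced by the relation between $\Theta_{\mathfrak h}$ and the $c_j$. The relation comes from the $\mathfrak h$-component of the Maurer--Cartan equation together with the $(1,0)$ part of the $\mathfrak n^{+}$ component, which reads
\[
\partial_{\bar z}c_i+\alpha_i(\Theta_{\mathfrak h}(\partial_{\bar z}))\,c_i=0 .
\]
Hence $c_i=e^{u_i/2}$. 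The equation $\partial_{\bar z}\log c_i=-\alpha_i(\Theta_{\mathfrak h}(\partial_{\bar z}))$ then determines $\alpha_i(\Theta_{\mathfrak h})$ for every $i$. Since the $\alpha_i$ form a basis of $\mathfrak h^*$, and $\Theta_{\mathfrak h}$ is real, this determines $\Theta_{\mathfrak h}$ itself. Therefore both normalized frames have the same Maurer--Cartan form $\Theta$, namely the form of Section~1.3. Uniqueness of solutions to $\phi^{-1}d\phi=\Theta$ gives $\phi_2=k_U\phi_1$ for a constant $k_U\in K_{\mathrm{ad}}$, and hence $\widetilde f_2=k_U\widetilde f_1$ on $U$.

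\textbf{Step 4: globalization.} Since $\widetilde S$ is connected, the constants $k_U$ agree on overlaps. Indeed, the set where $\widetilde f_2=k\widetilde f_1$ for a fixed $k$ is closed. It is also open, because on overlapping charts the normalized frames of each curve agree up to the trivial stabilizer, so $k_U=k_{U'}$. No coordinate cocycle enters, because we compare two curves in the same chart; the gauge $t_{zw}$ of Theorem~\ref{thm:toda-to-unitary-curve} acts identically on both frames. Thus a single $k$ works on all of $\widetilde S$.

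\textbf{Main obstacle.} The delicate step is the frame normalization, Step~2: showing that the residual gauge freedom is trivial (or finite and discrete) in $K_{\mathrm{ad}}$ rather than in $K$. If it were only finite, say a subgroup of $Z(G)$, we would argue by continuity and connectedness that the choice is locally constant. Such central elements act trivially on $F$ in any case, so the conclusion is unaffected.
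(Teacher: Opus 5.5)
Your proof is correct and follows the paper's argument. You take a local $K_{\mathrm{ad}}$-frame, use the compact torus to make the simple-root coefficients positive (a normalization that is unique in the adjoint group), and determine the Cartan part from the Maurer--Cartan equation, so both normalized frames have the same Maurer--Cartan form; the local constants then patch because both frames transform by the same gauge $t_{zw}$. The one minor difference is that you get $|c_i|^2=e^{u_i}$ by rerunning the Chern-curvature computation, where the $\mathfrak h$-component of the Maurer--Cartan equation gives $d\Theta_{\mathfrak h}=\sum_j|c_j|^2h_j\,dz\wedge d\bar z$; the paper instead checks directly that $d\pi_i$ sends $e_i$ to a unit Fubini--Study vector, and both routes work.
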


\begin{proof}
Work first on a simply connected coordinate disc $D$ and denote the common forms as
\[
\frac{\sqrt{-1}}2e^{u_i}dz\wedge d\bar z.
\]
For either curve, choose a local lift $\psi:D\to K_{\mathrm{ad}}$ through $K_{\mathrm{ad}}\to G_{\mathrm{ad}}/B_{\mathrm{ad}}$ and denote its Maurer--Cartan form as $\beta=\psi^{-1}d\psi=\beta_z\,dz+\beta_{\bar z}\,d\bar z$. Holomorphicity gives $\beta_{\bar z}\in\mathfrak b$. Since $\beta$ is $\mathfrak k_{\mathrm{ad}}$-valued as a real one-form, compact reality gives $\tau(\beta_z)=\beta_{\bar z}$, and hence $\beta_z\in\tau(\mathfrak b)=\mathfrak h\oplus\mathfrak n^+$. The integral-curve condition removes all positive-root components except the simple-root components, while nondegeneracy makes each of their coefficients nonzero. Thus
\[
\beta_z=H_z+\sum_i c_i e_i,\qquad
\beta_{\bar z}=\tau(H_z)-\sum_i\overline{c_i}f_i,
\]
where $H_z\in\mathfrak h$ and $c_i\neq0$. With the Chevalley vectors and homogeneous metrics fixed above, the relevant normalization is
\[
\theta(\mathcal O(1),h_{\mathrm{FS}})=2\varphi_{\mathrm{FS}}.
\]
We record why this normalization introduces no further constant in the differential. Under the dual action on $V_i=\Gamma(F,\mathcal L_i)^*$, the evaluation line at $B_0$ has $B_0$-character $\omega_i^{-1}$ and hence infinitesimal weight $-\varpi_i$. Thus a unit vector $v_i$ in this line satisfies
\[
f_jv_i=0,\qquad h_jv_i=-\delta_{ij}v_i.
\]
In particular, in the simple-root $\mathfrak{sl}_2$-subalgebra,
\[
f_i v_i=0,\qquad h_i v_i=-v_i.
\]
For the compact-compatible Hermitian form one has $e_i^*=f_i$, and hence
\[
\lVert e_i v_i\rVert^2
=\langle v_i,f_i e_i v_i\rangle
=\langle v_i,-h_i v_i\rangle
=\lVert v_i\rVert^2=1.
\]
Moreover, $e_i v_i\perp v_i$, since the two vectors lie in distinct weight spaces for the compact torus and distinct compact-torus weight spaces are orthogonal.
Thus $d\pi_i:\mathcal N_i\to T\mathbb P(V_i)$ sends the normalized root vector $e_i$ to a unit tangent vector for the chosen Fubini--Study metric. Consequently, the pullback equality gives $|c_i|^2=e^{u_i}$. Denote $c_i=q_i e^{\sqrt{-1}\vartheta_i}$, where $q_i=e^{u_i/2}>0$.

Let
\[
T_{\mathrm{ad}}^c:=K_{\mathrm{ad}}\cap B_{\mathrm{ad}}
\]
be the compact maximal torus determined by the chosen Borel. The character map
\[
T_{\mathrm{ad}}^c\longrightarrow(S^1)^n,
\qquad t\longmapsto(\alpha_1(t),\ldots,\alpha_n(t)),
\]
is an isomorphism of compact tori because the simple roots form a basis of the character lattice of the adjoint torus. It therefore supplies the unique gauge satisfying $\alpha_i(t)=e^{\sqrt{-1}\vartheta_i}$ for all $i$, and this gauge makes all the simple-root coefficients equal to $q_i=e^{u_i/2}$.

This gauge is unique in the adjoint group. Indeed, if two compact-torus gauges have the same positive simple-root coefficients, their ratio $t$ satisfies $\alpha_i(t)=1$ for every simple root. It therefore lies in the center, which is trivial in $G_{\mathrm{ad}}$.

In the resulting normalized adapted frame, compact reality and the integral-curve condition give
\[
\beta_z=H_z+\sum_i e^{u_i/2}e_i,
\qquad
\beta_{\bar z}=H_{\bar z}-\sum_i e^{u_i/2}f_i,
\]
where $H_z,H_{\bar z}\in\mathfrak h$. Taking the $e_i$- and $f_i$-components of $d\beta+\frac12[\beta\wedge\beta]=0$ yields
\[
\alpha_i(H_z)=\frac12\partial_z u_i,
\qquad
\alpha_i(H_{\bar z})=-\frac12\partial_{\bar z}u_i.
\]
Since the simple roots form a basis of $\mathfrak h^*$, inversion of the Cartan matrix determines the Cartan terms uniquely. Under the conventions fixed in Section~\ref{sec:Lie theory},
\[
H_z=\frac12\sum_{i,j}a^{ij}(\partial_z u_i)h_j,
\qquad
H_{\bar z}=-\frac12\sum_{i,j}a^{ij}
(\partial_{\bar z}u_i)h_j.
\]
Consequently both normalized frames have exactly the same Maurer--Cartan form
\[
\frac12\sum_{i,j}a^{ij}h_j
(\partial_z u_i\,dz-\partial_{\bar z}u_i\,d\bar z)
+\sum_i e^{u_i/2}(e_i\,dz-f_i\,d\bar z).
\]
They therefore differ by a constant left factor; since the two frames take values in $K_{\mathrm{ad}}$, that factor belongs to $K_{\mathrm{ad}}$.

It remains to check that these constants patch. On an overlap of coordinates $z,w$, both normalized frames transform by the same torus gauge $t_{zw}=\exp(\sqrt{-1}\vartheta\rho^\vee)$. Thus, if $\psi_2^{(z)}=k_z\psi_1^{(z)}$, then
\[
\psi_2^{(w)}=\psi_2^{(z)}t_{zw}
=k_z\psi_1^{(z)}t_{zw}=k_z\psi_1^{(w)},
\]
and consequently $k_w=k_z$. Since $\widetilde S$ is connected, the local constants patch to one element $k\in K_{\mathrm{ad}}$ on all of $\widetilde S$.
\end{proof}

\begin{defn}
Denote by $\mathrm{Int}_{G_{\mathrm{ad}}}(S)$ the space of nondegenerate integral maps $\widetilde f:\widetilde S\to F$ for which there exists an equivariance representation $\rho_f:\pi_1(S)\to G_{\mathrm{ad}}$ (the auxiliary representation is not part of the datum in this space). Let $\mathrm{Int}_{G_{\mathrm{ad}}}(S)_{K_{\mathrm{ad}}}$ be the subspace of maps admitting such a representation with image in the fixed compact group $K_{\mathrm{ad}}$, and let $\mathrm{Toda}_{\mathfrak g}(S)$ be the space of solutions of the $\mathfrak g$-Toda system on $S$.
\end{defn}
Pullback of the fundamental Fubini--Study forms defines the map
\[
\mathrm{Int}_{G_\mathrm{ad}}(S)_{K_\mathrm{ad}}
\longrightarrow\mathrm{Toda}_\mathfrak{g}(S),
\qquad
\widetilde f\longmapsto(\varphi_i)_{i=1}^n,
\qquad
p^*\varphi_i=\widetilde f^*\pi_i^*\varphi_{\mathrm{FS}}.
\]
The map is invariant under the left action of $K_{\mathrm{ad}}$.
\begin{cor}
The pullback map induces a bijection
\[
K_{\mathrm{ad}}\backslash
\mathrm{Int}_{G_{\mathrm{ad}}}(S)_{K_{\mathrm{ad}}}
\xrightarrow{\ \sim\ }
\mathrm{Toda}_{\mathfrak g}(S).
\]
\end{cor}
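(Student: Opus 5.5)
The corollary follows by assembling three facts already established: the pullback map is well defined, it is surjective by Theorem~\ref{thm:toda-to-unitary-curve}, and it is injective modulo $K_{\mathrm{ad}}$ by Proposition~\ref{prop:unitary-curve-rigidity}.

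\emph{Well-definedness.} Let $\widetilde f\in\mathrm{Int}_{G_{\mathrm{ad}}}(S)_{K_{\mathrm{ad}}}$ with equivariance representation $\rho_f$ valued in $K_{\mathrm{ad}}$. Each $\varphi_{\mathrm{FS},i}$ is $K$-invariant, and the projective action on $\mathbb P(V_i)$ descends to $G_{\mathrm{ad}}$, so $\varphi_{\mathrm{FS},i}$ is $K_{\mathrm{ad}}$-invariant. By $K_{\mathrm{ad}}$-equivariance of $\pi_i$, the forms $\widetilde f^*\pi_i^*\varphi_{\mathrm{FS}}$ are invariant under deck transformations. Hence they descend to forms $\varphi_i$ on $S$.

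Nondegeneracy gives two further properties. It makes each $\varphi_i$ positive, because $d\pi_i$ is injective on $\mathcal N_i$ and kills $\mathcal N_j$ for $j\neq i$, as shown in the review of Positsel'skii's proof. Theorem~\ref{conj:Givental}, applied locally on $\widetilde S$, then gives the Toda equation. The same invariance shows that $k\cdot\widetilde f$ lies in the space, with representation $k\rho_fk^{-1}$, and induces the same tuple. So the map descends to the quotient by $K_{\mathrm{ad}}$.

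\emph{Surjectivity.} This is exactly Theorem~\ref{thm:toda-to-unitary-curve}. Every smooth solution yields an equivariant pair $(\widetilde f,\rho_f)$ with $\rho_f$ valued in $K_{\mathrm{ad}}$, with $p^*\varphi_i=\widetilde f^*\pi_i^*\varphi_{\mathrm{FS}}$, and with every $pr_i\circ d\widetilde f$ nowhere vanishing. Thus $\widetilde f$ is an element of $\mathrm{Int}_{G_{\mathrm{ad}}}(S)_{K_{\mathrm{ad}}}$ mapping to the given solution.

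\emph{Injectivity.} Suppose $\widetilde f_1$ and $\widetilde f_2$ induce the same tuple. Then their pullbacks $\widetilde f_a^*\pi_i^*\varphi_{\mathrm{FS}}$ agree on $\widetilde S$, since both equal $p^*\varphi_i$. Proposition~\ref{prop:unitary-curve-rigidity} then supplies a constant $k\in K_{\mathrm{ad}}$ with $\widetilde f_2=k\cdot\widetilde f_1$.

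The only point needing care is that elements of the space are maps only, with the representation not part of the datum. The proposition uses the representation only through the existence of $K_{\mathrm{ad}}$-valued frames, so this causes no difficulty. It also shows that the quotient is independent of the auxiliary representations. I expect no real obstacle: the substantive work is in the two cited results, and what remains is checking the invariance of the pulled-back forms.
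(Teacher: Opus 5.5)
Your proposal is correct and follows the paper's proof, which simply cites Theorem~\ref{thm:toda-to-unitary-curve} for surjectivity and Proposition~\ref{prop:unitary-curve-rigidity} for injectivity modulo $K_{\mathrm{ad}}$. Your extra verification fills in well-definedness, which the paper leaves implicit: you check that the pulled-back forms descend to $S$, are positive, satisfy the Toda equation, and are unchanged by the $K_{\mathrm{ad}}$-action.
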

\begin{proof}
Surjectivity is Theorem~\ref{thm:toda-to-unitary-curve}, and injectivity modulo the left $K_{\mathrm{ad}}$-action is Proposition~\ref{prop:unitary-curve-rigidity}.
\end{proof}

\begin{defn}
    We call nondegenerate integral maps in $\mathrm{Int}_{G_{\mathrm{ad}}}(S)_{K_{\mathrm{ad}}}$ \emph{unitary} curves. For a solution of the Toda system, the unitary curves in its preimage under the map $\mathrm{Int}_{G_\mathrm{ad}}(S)_{K_\mathrm{ad}}
\longrightarrow\mathrm{Toda}_\mathfrak{g}(S)$ are said to be \emph{associated} with it and are called its \emph{associated curves}.
\end{defn}
\section{Reduced Solutions of Toda Systems}\label{sec:reduced}
\subsection{Definition}

Choose Chevalley root vectors $e_i,f_i$ compatibly with the compact form, so that $[e_i,f_j]=\delta_{ij}h_i$. The inverse of an indecomposable finite-type Cartan matrix has strictly positive entries, so the numbers $2\sum_j a^{ji}$ below are positive. Define
\[
m_i:=2\sum_{j=1}^n a^{ji},\qquad
p_1=\sum_{i=1}^n\sqrt{m_i}\,e_i,\qquad
p_{-1}=\sum_{i=1}^n\sqrt{m_i}\,f_i,
\qquad h=[p_1,p_{-1}]=\sum_i m_i h_i.
\]
Then $\alpha_i(h)=2$ for every $i$, and
\[
[h,p_1]=2p_1,\qquad [h,p_{-1}]=-2p_{-1},\qquad
[p_1,p_{-1}]=h.
\]
Thus $(p_1,h,p_{-1})$ is a compact-compatible principal $\mathfrak{sl}_2$-triple \cite{Kostant1959}; this principal-embedding choice also appears in the zero-curvature construction of Razumov--Saveliev \cite[\S2.4, p.~480, and Appendix~B, Eq.~(B.42)]{RazumovSaveliev1994}. It integrates to a homomorphism $\Phi_{\mathfrak g}:\operatorname{SL}_2(\CC)\longrightarrow G$. With this compact-compatible choice, $\Phi_{\mathfrak g}(\operatorname{SU}(2))\subset K$. The descent to the adjoint groups is well defined: indeed, $\Phi_{\mathfrak g}(-I)=\exp(\pi\sqrt{-1}h)$. For every root $\beta$ one has $\beta(h)\in2\mathbb Z$, because $\alpha_i(h)=2$ for every simple root. Hence $\exp(\pi\sqrt{-1}h)$ acts trivially in the adjoint representation and therefore belongs to $Z(G)$. Thus the composite $\operatorname{SL}_2(\CC)\to G\to G_{\mathrm{ad}}$ kills $\{\pm I\}$. After passing to adjoint groups it induces
\[
\overline\Phi_{\mathfrak g}:
\operatorname{PSL}_2(\CC)\longrightarrow G_{\mathrm{ad}},
\qquad
\sigma_{\mathfrak g}:=
\overline\Phi_{\mathfrak g}|_{\operatorname{PSU}(2)}.
\]
Since $\PP^1$ is the flag variety of $\operatorname{SL}_2(\CC)$, $\Phi_{\mathfrak g}$ induces a nondegenerate integral curve
\[
r_{\mathfrak g}:\PP^1\longrightarrow F,
\]
called the \emph{principal curve}. More explicitly, fix the Borel subgroup $B_2\subset\operatorname{SL}_2(\CC)$ satisfying $\Phi_{\mathfrak g}(B_2)\subset B_0$. Then
\[
r_{\mathfrak g}(aB_2)
=\Phi_{\mathfrak g}(a)B_0\Phi_{\mathfrak g}(a)^{-1}.
\]
Indeed, its tangent at the base point is represented by $p_1\bmod\mathfrak b_0$, whose simple-root coefficients $\sqrt{m_i}$ are all nonzero.

\begin{defn}\label{def:reduced-solution}
Let $\boldsymbol\varphi=(\varphi_1,\ldots,\varphi_n)$ be a solution of the $\mathfrak g$-Toda system on a Riemann surface $S$. We call $\boldsymbol\varphi$ a \emph{reduced solution} if it is generated by a solution $\omega$ of the $\operatorname{SU}(2)$ Toda system in the following sense: there exist a holomorphic $\operatorname{PSU}(2)$-equivariant developing pair $v=(\widetilde v,\rho_v)$ associated with $\omega$ and an element $g\in G_{\mathrm{ad}}$ such that a $K_{\mathrm{ad}}$-unitary curve associated with $\boldsymbol\varphi$ has the form
\[
\widetilde f=g\cdot r_{\mathfrak g}\circ\widetilde v:
\widetilde S\longrightarrow F.
\]
Thus reduced solutions are precisely those arising from $\operatorname{SU}(2)$ Toda solutions through the principal $\operatorname{SL}_2$ subgroup of $G$.
\end{defn}

\subsection{The reduced solution generated by an
\texorpdfstring{$\operatorname{SU}(2)$}{SU(2)} solution}

From now on, the notation $v:S\to\PP^1$ abbreviates a holomorphic unitary equivariant developing pair
\[
\widetilde v:\widetilde S\to\PP^1,\qquad
\rho_v:\pi_1(S)\to\operatorname{PSU}(2)
\]
with monodromy in the displayed fixed compact group. Similarly, a unitary flag curve has monodromy contained in the chosen $K_{\mathrm{ad}}$. When it represents a smooth positive Toda metric, $\widetilde v$ is locally univalent. In the singular setting it remains locally univalent on the punctured surface; if it extends holomorphically across a deleted cone point, that extension may be branched at the cone point.

Denote $C=(a_{ij})$ for the Cartan matrix of $\mathfrak g$ and $C^{-1}=(a^{ij})$, and recall that $\varpi_i$ denotes the $i$-th fundamental weight. Define
\[
m_i=\varpi_i(h),\qquad i=1,\ldots,n,
\]
where $h$ is the semisimple element of the principal $\mathfrak{sl}_2$-triple. Since $\alpha_j(h)=2$ for every simple root $\alpha_j$, one has
\begin{equation}\label{eq:principal-weight-coefficients}
(m_1,\ldots,m_n)C=(2,\ldots,2),
\qquad
m_i=2\sum_{j=1}^n a^{ji}.
\end{equation}

\begin{thm}\label{thm:principal-reduced-solution}
Let $\omega$ be a smooth positive solution of the $\operatorname{SU}(2)$ Toda system on a Riemann surface $S$, and let $v:S\to\PP^1$ be its locally univalent unitary developing map, so that $\omega=v^*\varphi_{\mathrm{FS}}$. Then
\[
\boldsymbol\omega_{\mathfrak g}
:=(m_1\omega,\ldots,m_n\omega)
\]
is a reduced solution of the $\mathfrak g$-Toda system. The principal curve
\[
f=r_{\mathfrak g}\circ v:S\longrightarrow F
\]
is unitary and associated with this solution, and, for every $i=1,\ldots,n$,
\begin{equation}\label{eq:principal-curve-pullback}
f^*\pi_i^*\varphi_{\mathrm{FS}}
=m_i\omega
=\varpi_i(h)\omega
=2\left(\sum_{j=1}^n a^{ji}\right)\omega.
\end{equation}
Moreover, the associated curves of $\boldsymbol\omega_{\mathfrak g}$ are obtained from $f$ by the action of $K_{\mathrm{ad}}$.
\end{thm}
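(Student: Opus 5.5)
We establish four things in turn: the monodromy of $f=r_{\mathfrak g}\circ\widetilde v$, that $f$ is a nondegenerate integral curve, the pullback formula \eqref{eq:principal-curve-pullback}, and uniqueness of associated curves up to $K_{\mathrm{ad}}$.

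\emph{Equivariance and unitarity.} The map $r_{\mathfrak g}(aB_2)=\Phi_{\mathfrak g}(a)B_0\Phi_{\mathfrak g}(a)^{-1}$ is $\overline\Phi_{\mathfrak g}$-equivariant: $r_{\mathfrak g}(a\cdot x)=\overline\Phi_{\mathfrak g}(a)\cdot r_{\mathfrak g}(x)$. Hence
\[
f(\gamma x)=r_{\mathfrak g}(\rho_v(\gamma)\widetilde v(x))=\sigma_{\mathfrak g}(\rho_v(\gamma))\,f(x).
\]
So $f$ has monodromy $\sigma_{\mathfrak g}\circ\rho_v$, which lies in $K_{\mathrm{ad}}$ because $\Phi_{\mathfrak g}(\operatorname{SU}(2))\subset K$.

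\emph{Integral and nondegenerate.} We use $\mathrm{PSL}_2$-equivariance. At the base point the tangent of $r_{\mathfrak g}$ is $p_1\bmod\mathfrak b_0$, whose simple-root coefficients $\sqrt{m_i}$ are all nonzero. Translating by $\overline\Phi_{\mathfrak g}(a)$, and using that $\mathcal N$ and the $\mathcal N_i$ are $G$-invariant, $r_{\mathfrak g}$ is a nondegenerate integral curve everywhere. Since $\widetilde v$ is locally univalent, $f$ is nondegenerate as well.

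\emph{Pullback computation.} The key identity is
\[
r_{\mathfrak g}^*\pi_i^*\varphi_{\mathrm{FS}}=m_i\varphi_{\mathrm{FS}}
\]
on $\PP^1$; pulling back by $\widetilde v$ then gives \eqref{eq:principal-curve-pullback}. Both sides are $\operatorname{SU}(2)$-invariant $(1,1)$-forms on $\PP^1$: the left side because $\Phi_{\mathfrak g}(\operatorname{SU}(2))\subset K$ and $\varphi_{\mathrm{FS}}$ on $\PP(V_i)$ is $K$-invariant. It therefore suffices to compare them at the base point. There $\pi_i\circ r_{\mathfrak g}$ is the orbit map of the lowest-weight vector $v_i$ (weight $-\varpi_i$) under the principal $\mathfrak{sl}_2$, with tangent $p_1v_i=\sqrt{m_i}\,e_iv_i$, using $e_jv_i=0$ for $j\ne i$. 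The normalization computed in the proof of Proposition~\ref{prop:unitary-curve-rigidity} gives $\|e_iv_i\|=1$ and $e_iv_i\perp v_i$, so the squared length is $m_i$. For $\PP^1$ itself, with $\mathfrak{sl}_2=\CC^2$ and $\varpi(h)=1$, the same computation gives length $1$. The ratio is therefore $m_i$; the consistency check is that the line $\CC v_i$ has $h$-weight $-m_i=-\varpi_i(h)$.

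Alternatively, one can avoid the metric normalization. The Pl\"ucker pullback of $\mathcal L_i$ along $r_{\mathfrak g}$ is $\mathcal O(m_i)$, since the $B_2$-character is $\varpi_i\circ h$. Invariance then forces the pulled-back form to be $m_i\varphi_{\mathrm{FS}}$, after checking that the curvature of the pulled-back $K$-invariant metric equals the homogeneous one. As a check, the Toda equations hold automatically: $\Ric(m_i\omega)=\Ric(\omega)=2\omega$, and $2\sum_j a_{ji}m_j\omega=2\omega$ by \eqref{eq:principal-weight-coefficients}. This also follows from Theorem~\ref{conj:Givental}.

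\emph{Last claim.} The statement that the associated curves of $\boldsymbol\omega_{\mathfrak g}$ are obtained from $f$ by the action of $K_{\mathrm{ad}}$ is exactly Proposition~\ref{prop:unitary-curve-rigidity}, together with the evident invariance of the pullback forms under $K_{\mathrm{ad}}$. Reducedness holds by definition, taking $g=e$.

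The main obstacle is pinning down the constant $m_i$ precisely. This requires matching the Fubini--Study normalization on $\PP^1$ with that on $\PP(V_i)$, and correctly identifying $\PP^1$ with $\PP(\CC^2)$ so that $\omega=\widetilde v^*\varphi_{\mathrm{FS}}$ has curvature $4$. I would settle it with the unit-vector computation above, applied both in $V_i$ and in $\CC^2$.
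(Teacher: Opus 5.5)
Your proposal is correct and follows the paper's proof: equivariance of $r_{\mathfrak g}$ gives the monodromy $\sigma_{\mathfrak g}\circ\rho_v$ with values in $K_{\mathrm{ad}}$, the weight $-m_i$ of the evaluation line gives $r_{\mathfrak g}^*\pi_i^*\varphi_{\mathrm{FS}}=m_i\varphi_{\mathrm{FS}}$, and the final claim is Proposition~\ref{prop:unitary-curve-rigidity}; the paper phrases the pullback step as the orbit being a degree-$m_i$ Veronese curve, whereas your $\operatorname{SU}(2)$-invariance argument with $\lVert p_1v_i\rVert^2=\langle v_i,-hv_i\rangle=m_i$ makes the constant explicit. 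One harmless slip in your sanity check: in this normalization $\Ric(\omega)=4\omega$ and $\sum_j a_{ji}m_j=2$, so both sides equal $4\omega$, not $2\omega$.
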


\begin{proof}
We first prove that $f=r_{\mathfrak g}\circ v$ has unitary monodromy. Let $p:\widetilde S\to S$ be the universal covering and choose a branch
\[
\widetilde v:\widetilde S\longrightarrow\PP^1
\]
of the developing map of $\omega$. Since $v$ is unitary, there is a monodromy representation
\[
\rho_v:\pi_1(S)\longrightarrow\operatorname{PSU}(2)
\]
such that
\[
\widetilde v(\gamma\cdot z)
=\rho_v(\gamma)\cdot\widetilde v(z)
\]
for all $\gamma\in\pi_1(S)$ and $z\in\widetilde S$.

The adjoint homomorphism $\overline\Phi_{\mathfrak g}$ defined above maps $\operatorname{PSU}(2)$ into $K_{\mathrm{ad}}$. The principal curve is equivariant with respect to this homomorphism:
\[
r_{\mathfrak g}(a\cdot x)
=\overline\Phi_{\mathfrak g}(a)\cdot r_{\mathfrak g}(x).
\]
Therefore, if $\widetilde f=r_{\mathfrak g}\circ\widetilde v$, then
\[
\begin{aligned}
\widetilde f(\gamma\cdot z)
&=r_{\mathfrak g}\bigl(\rho_v(\gamma)\cdot\widetilde v(z)\bigr)\\
&=\overline\Phi_{\mathfrak g}\bigl(\rho_v(\gamma)\bigr)
  \cdot\widetilde f(z).
\end{aligned}
\]
Thus the monodromy representation of $f$ is
\[
\rho_f=\overline\Phi_{\mathfrak g}\circ\rho_v:
\pi_1(S)\longrightarrow K_{\mathrm{ad}}.
\]
This proves that $r_{\mathfrak g}\circ v$ is a holomorphic unitary integral curve.

We next compute the Toda forms induced by this curve. Let $[v_i]=\pi_i(B_0)\in\PP(V_i)$ be the evaluation line. By the Borel--Weil convention fixed above, this line has $B_0$-character $\omega_i^{-1}$ and hence infinitesimal weight $-\varpi_i$. Therefore
\[
p_{-1}v_i=0,
\qquad
hv_i=-\varpi_i(h)v_i=-m_iv_i.
\]
The principal $\mathfrak{sl}_2$-module generated by $v_i$ therefore has highest weight $m_i$. Consequently, the orbit of $[v_i]$ is the degree-$m_i$ Veronese curve in the projectivization of the irreducible $\operatorname{SL}_2(\CC)$-module generated by $v_i$. With the invariant Hermitian metrics used to define the Fubini--Study forms, this gives
\[
r_{\mathfrak g}^*\pi_i^*\varphi_{\mathrm{FS}}=m_i\varphi_{\mathrm{FS}}.
\]
Pulling back by $v$ yields
\[
(r_{\mathfrak g}\circ v)^*\pi_i^*\varphi_{\mathrm{FS}}
=m_i v^*\varphi_{\mathrm{FS}}=m_i\omega,
\]
which proves \eqref{eq:principal-curve-pullback}.

The last statement follows from Proposition~\ref{prop:unitary-curve-rigidity}.
\end{proof}

\begin{rmk}
For $\mathfrak g=\mathfrak{sl}_{n+1}(\CC)$, formula \eqref{eq:principal-weight-coefficients} gives
\[
m_i=i(n+1-i),\qquad i=1,\ldots,n.
\]
Hence Theorem~\ref{thm:principal-reduced-solution} recovers
\[
\bigl(i(n+1-i)\omega\bigr)_{i=1}^n,
\]
the reduced $\operatorname{SU}(n+1)$ Toda solution generated by the spherical metric $\omega$ in \cite[Theorem~1.1]{ShiWeiXu2026}.
\end{rmk}

\subsection{Classification of reduced solutions}

Let $v:S\to\PP^1$ be a holomorphic unitary developing map, and denote its monodromy image and its closure by
\[
\Gamma_v:=\rho_v\bigl(\pi_1(S)\bigr),
\qquad
G_v:=\overline{\Gamma_v}\subset\operatorname{PSU}(2).
\]
Recall the homomorphism $\sigma_{\mathfrak g}:\operatorname{PSU}(2)\to K_{\mathrm{ad}}$ induced from the principal $\operatorname{PSL}_2(\CC)\to G_{\mathrm{ad}}$ homomorphism. We define
\[
\mathcal M_v^{\mathfrak g}
:=\left\{g\in G_{\mathrm{ad}}:
\text{the natural monodromy of }g\cdot r_{\mathfrak g}\circ v
\text{ lies in the fixed }
K_{\mathrm{ad}}
\right\}.
\]
Let $\tau:G_{\mathrm{ad}}\to G_{\mathrm{ad}}$ be the antiholomorphic involution whose fixed-point group is $K_{\mathrm{ad}}$, and denote
\[
g^\dagger:=\tau(g)^{-1}.
\]
For a subset $A\subset G_{\mathrm{ad}}$, denote $Z_{G_{\mathrm{ad}}}(A)$ for its centralizer.

For certain type-$A$ systems on the three-punctured sphere, Lin--Nie--Wei construct solutions from hypergeometric equations by choosing a positive-definite Hermitian form invariant under monodromy \mbox{\cite[Section~3]{LinNieWei2018}}. The following criterion expresses the corresponding invariant-metric condition for transformations of a fixed principal curve, for an arbitrary simple group and monodromy closure.
\begin{lem}\label{lem:unitary-transformations}
The set of transformations preserving unitary monodromy is
\begin{equation}\label{eq:Mv-centralizer}
\begin{aligned}
\mathcal M_v^{\mathfrak g}
&=\left\{g\in G_{\mathrm{ad}}:
g\sigma_{\mathfrak g}(G_v)g^{-1}\subset K_{\mathrm{ad}}\right\}\\
&=\left\{g\in G_{\mathrm{ad}}:
g^\dagger g\in
Z_{G_{\mathrm{ad}}}\bigl(\sigma_{\mathfrak g}(G_v)\bigr)\right\}.
\end{aligned}
\end{equation}
Furthermore:
\begin{enumerate}
\item for $U\in\operatorname{PSU}(2)$, $\mathcal M_{U\cdot v}^{\mathfrak g}
    =\mathcal M_v^{\mathfrak g}\,\sigma_{\mathfrak g}(U)^{-1}$;
\item if two unitary curves $v_1$ and $v_2$ have the same monodromy closure, then $\mathcal M_{v_1}^{\mathfrak g}=\mathcal M_{v_2}^{\mathfrak g}$.
\end{enumerate}
\end{lem}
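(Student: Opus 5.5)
The plan is to compute the monodromy of $g\cdot r_{\mathfrak g}\circ\widetilde v$ directly, pass from $\Gamma_v$ to its closure by closedness of $K_{\mathrm{ad}}$, and rewrite the resulting conjugation condition using the involution $\tau$.

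\emph{Monodromy of the translated curve.} By the equivariance established in the proof of Theorem~\ref{thm:principal-reduced-solution},
\[
g\cdot r_{\mathfrak g}(\widetilde v(\gamma z))
=g\,\sigma_{\mathfrak g}(\rho_v(\gamma))\,g^{-1}\cdot\bigl(g\cdot r_{\mathfrak g}(\widetilde v(z))\bigr).
\]
So the natural monodromy of $g\cdot r_{\mathfrak g}\circ v$ is $\gamma\mapsto g\sigma_{\mathfrak g}(\rho_v(\gamma))g^{-1}$.

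One should check that this representation is forced. Suppose two elements of $G_{\mathrm{ad}}$ act identically on the image of a nondegenerate integral curve. Their ratio fixes pointwise a nonconstant holomorphic curve in $F$. I expect to either argue that such an element is trivial, or simply adopt ``natural monodromy'' as meaning this specific representation, as the definition of $\mathcal M_v^{\mathfrak g}$ suggests. I will take the latter reading.

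\emph{First equality.} The condition that the monodromy lies in $K_{\mathrm{ad}}$ reads $g\sigma_{\mathfrak g}(\Gamma_v)g^{-1}\subset K_{\mathrm{ad}}$. Conjugation by $g$ is continuous, $\sigma_{\mathfrak g}$ is continuous, and $K_{\mathrm{ad}}$ is closed in $G_{\mathrm{ad}}$. Hence this condition is equivalent to $g\sigma_{\mathfrak g}(G_v)g^{-1}\subset K_{\mathrm{ad}}$, which is the first equality.

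\emph{Second equality.} Here I use that $K_{\mathrm{ad}}$ is exactly the fixed set of $\tau$, so $x\in K_{\mathrm{ad}}$ iff $x^\dagger x=e$, i.e.\ $\tau(x)=x$. Write $k=\sigma_{\mathfrak g}(u)\in K_{\mathrm{ad}}$ for $u\in G_v$, so $\tau(k)=k$. Then
\[
\tau(gkg^{-1})=\tau(g)\,k\,\tau(g)^{-1}.
\]
This equals $gkg^{-1}$ iff $g^{-1}\tau(g)$ commutes with $k$. Now $g^{-1}\tau(g)=(g^\dagger g)^{-1}$, since $g^\dagger g=\tau(g)^{-1}g$. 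Hence $gkg^{-1}\in K_{\mathrm{ad}}$ iff $g^\dagger g$ commutes with $k$. Taking this over all $u\in G_v$ gives the centralizer description.

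\emph{Item (i).} The developing pair $U\cdot v$ has monodromy $U\rho_vU^{-1}$, so its closure is $UG_vU^{-1}$. Applying the first description,
\[
g\sigma_{\mathfrak g}(U)\sigma_{\mathfrak g}(G_v)\sigma_{\mathfrak g}(U)^{-1}g^{-1}\subset K_{\mathrm{ad}}
\iff g\sigma_{\mathfrak g}(U)\in\mathcal M_v^{\mathfrak g}.
\]
This gives $\mathcal M_{U\cdot v}^{\mathfrak g}=\mathcal M_v^{\mathfrak g}\sigma_{\mathfrak g}(U)^{-1}$. Alternatively, one can note directly that $g\cdot r_{\mathfrak g}\circ(U\cdot v)=g\sigma_{\mathfrak g}(U)\cdot r_{\mathfrak g}\circ v$ by equivariance of $r_{\mathfrak g}$.

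\emph{Item (ii).} This is immediate, because the right-hand side of \eqref{eq:Mv-centralizer} depends only on $G_v$.

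The only delicate point is the interpretation of ``natural monodromy'' as the representation $\gamma\mapsto g\sigma_{\mathfrak g}(\rho_v(\gamma))g^{-1}$. Beyond that, the argument is formal.
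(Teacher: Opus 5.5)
Your proof is correct and follows the paper's argument essentially step for step. The steps are: compute the monodromy $g\sigma_{\mathfrak g}(\rho_v(\gamma))g^{-1}$ by equivariance; pass to the closure using that $K_{\mathrm{ad}}$ is closed; rewrite membership via the $\tau$-fixed-point condition to get commutation with $g^\dagger g$; and use equivariance of $r_{\mathfrak g}$ for (i). Your reading of ``natural monodromy'' as this specific representation matches the paper's definition of $\mathcal M_v^{\mathfrak g}$. Whether that representation is forced is not settled in this lemma but later, by Lemma~\ref{lem:principal-curve-stabilizer}, within the proof of Theorem~\ref{thm:classification-reduced-solutions}.
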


\begin{proof}
By Theorem~\ref{thm:principal-reduced-solution}, the monodromy image of $r_{\mathfrak g}\circ v$ is $\sigma_{\mathfrak g}(\Gamma_v)$ and its closure is $\sigma_{\mathfrak g}(G_v)$. Analytic continuation of a branch of $g\cdot r_{\mathfrak g}\circ v$ shows that its monodromy image and closure are, respectively,
\[
g\sigma_{\mathfrak g}(\Gamma_v)g^{-1},
\qquad
g\sigma_{\mathfrak g}(G_v)g^{-1}.
\]
Since $K_{\mathrm{ad}}$ is closed, containment of the image is equivalent to containment of its closure. This proves the first equality in \eqref{eq:Mv-centralizer}.

Let $a\in\sigma_{\mathfrak g}(G_v)\subset K_{\mathrm{ad}}$. Since $\tau(a)=a$, the condition $gag^{-1}\in K_{\mathrm{ad}}$ is equivalent to
\[
\tau(gag^{-1})=gag^{-1}.
\]
Expanding the left-hand side and rearranging gives
\[
a\tau(g)^{-1}g=\tau(g)^{-1}ga,
\]
or, equivalently,
\[
a(g^\dagger g)=(g^\dagger g)a.
\]
Requiring this for every $a\in\sigma_{\mathfrak g}(G_v)$ proves the second equality.

For $U\in\operatorname{PSU}(2)$, equivariance of the principal curve gives
\[
r_{\mathfrak g}\circ(U\cdot v)
=\sigma_{\mathfrak g}(U)\cdot r_{\mathfrak g}\circ v.
\]
Hence $g\in\mathcal M_{U\cdot v}^{\mathfrak g}$ if and only if $g\sigma_{\mathfrak g}(U)\in\mathcal M_v^{\mathfrak g}$, proving (i). The centralizer of a subgroup equals the centralizer of its closure, so $\mathcal M_v^{\mathfrak g}$ depends only on $G_v$. Thus, statement (ii) follows immediately from \eqref{eq:Mv-centralizer}.
\end{proof}

Choose an Iwasawa decomposition
\[
G_{\mathrm{ad}}=K_{\mathrm{ad}}AN
\]
compatible with the Borel subgroup used to define $F$, with $A=\exp\bigl(\sqrt{-1}\operatorname{Lie}(T_{\mathrm{ad}}^c)\bigr)$ inside the Cartan subgroup determined by $B_{\mathrm{ad}}$, and define $\Delta_{\mathfrak g}:=AN$. Thus every $g\in G_{\mathrm{ad}}$ has a unique decomposition $g=k\delta$ with $k\in K_{\mathrm{ad}}$ and $\delta\in\Delta_{\mathfrak g}$; see \cite[Ch.~VII]{Knapp2002} for the global Iwasawa decomposition and its uniqueness.

\begin{lem}\label{lem:Mv-iwasawa}
Define
\[
\Delta_v^{\mathfrak g}
:=\left\{\delta\in\Delta_{\mathfrak g}:
\delta^\dagger\delta\in
Z_{G_{\mathrm{ad}}}\bigl(\sigma_{\mathfrak g}(G_v)\bigr)\right\}.
\]
Then
\[
\mathcal M_v^{\mathfrak g}
=K_{\mathrm{ad}}\Delta_v^{\mathfrak g}.
\]
\end{lem}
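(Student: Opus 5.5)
The plan is to compare the two descriptions of $\mathcal M_v^{\mathfrak g}$ through the Iwasawa decomposition. We use the second characterization in Lemma~\ref{lem:unitary-transformations}: $g\in\mathcal M_v^{\mathfrak g}$ if and only if $g^\dagger g\in Z_{G_{\mathrm{ad}}}\bigl(\sigma_{\mathfrak g}(G_v)\bigr)$. The key observation is that the quantity $g^\dagger g$ does not change under left multiplication by $K_{\mathrm{ad}}$.

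First, let $k\in K_{\mathrm{ad}}$. Then $\tau(k)=k$, so $k^\dagger=\tau(k)^{-1}=k^{-1}$. Since $\tau$ is a group automorphism, $\dagger$ is an anti-automorphism: for any $g,g'$,
\[
(g g')^\dagger=\tau(g')^{-1}\tau(g)^{-1}=g'^\dagger g^\dagger .
\]
Hence, for $k\in K_{\mathrm{ad}}$ and any $g\in G_{\mathrm{ad}}$,
\[
(kg)^\dagger(kg)=g^\dagger k^{-1}kg=g^\dagger g .
\]
So the centralizer condition depends only on the coset $K_{\mathrm{ad}}g$, and $\mathcal M_v^{\mathfrak g}$ is stable under left multiplication by $K_{\mathrm{ad}}$.

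Next we prove the two inclusions.

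For $\supseteq$: take $k\in K_{\mathrm{ad}}$ and $\delta\in\Delta_v^{\mathfrak g}$. By the invariance above, $(k\delta)^\dagger(k\delta)=\delta^\dagger\delta$, and this lies in the centralizer by the definition of $\Delta_v^{\mathfrak g}$. So $k\delta\in\mathcal M_v^{\mathfrak g}$.

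For $\subseteq$: take $g\in\mathcal M_v^{\mathfrak g}$. Write $g=k\delta$ using the global Iwasawa decomposition $G_{\mathrm{ad}}=K_{\mathrm{ad}}AN$, with $k\in K_{\mathrm{ad}}$ and $\delta\in AN=\Delta_{\mathfrak g}$. Then $\delta^\dagger\delta=g^\dagger g$ lies in the centralizer, so $\delta\in\Delta_v^{\mathfrak g}$ and $g\in K_{\mathrm{ad}}\Delta_v^{\mathfrak g}$.

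As a side remark, Iwasawa uniqueness shows the factorization $g=k\delta$ is unique. This is not needed for the lemma itself, but it will give the bijective parametrization used later.

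No real obstacle is expected. The only point needing care is that $\tau$ is a genuine automorphism of $G_{\mathrm{ad}}$ fixing $K_{\mathrm{ad}}$ pointwise, so that the computation $k^\dagger=k^{-1}$ is valid. This holds because $\tau$ is the Cartan involution whose fixed group is $K_{\mathrm{ad}}$.
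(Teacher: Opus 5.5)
Your proposal is correct and follows the paper's argument: write $g=k\delta$ via the Iwasawa decomposition, use $k^\dagger=k^{-1}$ to get $g^\dagger g=\delta^\dagger\delta$, and apply the centralizer characterization of Lemma~\ref{lem:unitary-transformations}. Your separate treatment of the two inclusions simply spells out the paper's one-line ``if and only if''.
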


\begin{proof}
Denote $g=k\delta$ according to the Iwasawa decomposition. Since $k\in K_{\mathrm{ad}}$, one has $k^\dagger=k^{-1}$, and hence
\[
g^\dagger g
=(k\delta)^\dagger(k\delta)
=\delta^\dagger k^\dagger k\delta
=\delta^\dagger\delta.
\]
Lemma~\ref{lem:unitary-transformations} therefore implies that $g\in\mathcal M_v^{\mathfrak g}$ if and only if $\delta\in\Delta_v^{\mathfrak g}$, which proves the asserted decomposition.
\end{proof}

Let
\[
H_v:=\sigma_{\mathfrak g}(G_v)\subset K_{\mathrm{ad}}
\]
and denote its centralizer in the Lie algebra by
\[
\mathfrak z_{\mathfrak g}(H_v)
:=\left\{X\in\mathfrak g:
\operatorname{Ad}_aX=X\text{ for every }a\in H_v\right\}.
\]

\begin{lem}\label{lem:dimension-delta-v}
The space $\Delta_v^{\mathfrak g}$ is a real submanifold of $\Delta_{\mathfrak g}=AN$, and
\begin{equation}\label{eq:dimension-delta-v}
\dim_{\RR}\Delta_v^{\mathfrak g}
=\dim_{\CC}\mathfrak z_{\mathfrak g}(H_v).
\end{equation}
\end{lem}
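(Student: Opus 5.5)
The plan is to identify $\Delta_v^{\mathfrak g}$ with a space of positive Hermitian elements via the Cartan decomposition. First I will make this precise. Let $\mathfrak p=\sqrt{-1}\mathfrak k_{\mathrm{ad}}$, and let $P:=\exp(\mathfrak p)=\{g^\dagger g: g\in G_{\mathrm{ad}}\}$. The polar (Cartan) decomposition $G_{\mathrm{ad}}=K_{\mathrm{ad}}\exp(\mathfrak p)$ implies that $\exp:\mathfrak p\to P$ is a diffeomorphism onto a closed submanifold. Combined with the Iwasawa decomposition, it also implies that the map
\[
\Psi:\Delta_{\mathfrak g}=AN\longrightarrow P,\qquad \delta\longmapsto\delta^\dagger\delta,
\]
is a diffeomorphism. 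Injectivity holds because $\delta_1^\dagger\delta_1=\delta_2^\dagger\delta_2$ forces $\delta_2\delta_1^{-1}\in K_{\mathrm{ad}}\cap AN=\{e\}$. Surjectivity holds because any $p\in P$ equals $g^\dagger g$ with $g=p^{1/2}=k\delta$, and then $g^\dagger g=\delta^\dagger\delta$. Smoothness of the inverse follows from $\Psi^{-1}(p)$ being the $AN$-component of $p^{1/2}$, together with smoothness of the Iwasawa projection. Hence
\[
\Delta_v^{\mathfrak g}=\Psi^{-1}\bigl(P\cap Z_{G_{\mathrm{ad}}}(H_v)\bigr).
\]

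Next I will show that $P\cap Z_{G_{\mathrm{ad}}}(H_v)=\exp\bigl(\mathfrak p\cap\mathfrak z_{\mathfrak g}(H_v)\bigr)$. For the inclusion $\supseteq$, if $X\in\mathfrak p$ is fixed by $\operatorname{Ad}_a$ for all $a\in H_v$, then $a\exp(X)a^{-1}=\exp(\operatorname{Ad}_aX)=\exp X$. For $\subseteq$, if $p=\exp X$ with $X\in\mathfrak p$ commutes with $a\in H_v\subset K_{\mathrm{ad}}$, then $\exp(\operatorname{Ad}_aX)=\exp X$. Here $\operatorname{Ad}_aX\in\mathfrak p$ because $a$ is compact and commutes with $\tau$. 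Injectivity of $\exp$ on $\mathfrak p$ then gives $\operatorname{Ad}_aX=X$. This injectivity step is the one needing care: I will justify it by noting that in a faithful representation where $K_{\mathrm{ad}}$ acts unitarily, for example the adjoint representation with the Hermitian form from $-B(\cdot,\tau\cdot)$, elements of $\mathfrak p$ act as Hermitian operators. Exponentiation is injective on Hermitian matrices, and $\operatorname{ad}$ is faithful since $\mathfrak g$ is simple. Consequently $\Delta_v^{\mathfrak g}$ is the image of the real linear subspace $\mathfrak p\cap\mathfrak z_{\mathfrak g}(H_v)$ under the diffeomorphism $\Psi^{-1}\circ\exp$, so it is a closed embedded submanifold of $AN$.

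For the dimension, I will use the fact that $H_v\subset K_{\mathrm{ad}}$ and that $\operatorname{Ad}_a$ commutes with $\tau$ for $a\in K_{\mathrm{ad}}$. Together these make $\mathfrak z_{\mathfrak g}(H_v)$ a $\tau$-stable complex subspace. Therefore it splits as
\[
\mathfrak z_{\mathfrak g}(H_v)=\bigl(\mathfrak k_{\mathrm{ad}}\cap\mathfrak z_{\mathfrak g}(H_v)\bigr)\oplus\bigl(\mathfrak p\cap\mathfrak z_{\mathfrak g}(H_v)\bigr),
\]
and multiplication by $\sqrt{-1}$ interchanges the two summands. So the second summand is a real form of $\mathfrak z_{\mathfrak g}(H_v)$, with real dimension $\dim_{\CC}\mathfrak z_{\mathfrak g}(H_v)$, which gives \eqref{eq:dimension-delta-v}.

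The main obstacle is the bookkeeping that $\Psi$ is a genuine diffeomorphism onto $P$. This amounts to the global Cartan and Iwasawa decompositions cited from \cite{Knapp2002}, which I would quote rather than reprove.
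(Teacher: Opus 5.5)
Your proposal is correct and follows the paper's proof essentially step by step. The three ingredients match: the diffeomorphism $\delta\mapsto\delta^\dagger\delta$ from $AN$ onto $\exp(\sqrt{-1}\mathfrak k_{\mathrm{ad}})$, the identification of its $H_v$-fixed part with $\exp\bigl(\sqrt{-1}\mathfrak k_{\mathrm{ad}}\cap\mathfrak z_{\mathfrak g}(H_v)\bigr)$, and the fact that the $\tau$-stable space $\mathfrak z_{\mathfrak g}(H_v)$ is the complexification of its compact part. Your injectivity argument for $\exp$ on $\sqrt{-1}\mathfrak k_{\mathrm{ad}}$, via Hermitian operators in the adjoint representation, just spells out what the paper packages into the statement that $\exp$ is a $K_{\mathrm{ad}}$-equivariant diffeomorphism onto $\mathcal P$.
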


\begin{proof}
Consider the positive symmetric space
\[
\mathcal P:=\exp(\ii\mathfrak k_{\mathrm{ad}})\subset G_{\mathrm{ad}},
\]
where $\mathfrak k_{\mathrm{ad}}=\mathfrak k$ is the Lie algebra of $K_{\mathrm{ad}}$. We use the standard polar decomposition and fixed-point description for a compact subgroup; see, for example, \cite[Chs.~VII--VIII]{Knapp2002}. The group-valued Cholesky map
\[
\begin{aligned}
q:\Delta_{\mathfrak g}=AN&\longrightarrow\mathcal P,\\
\delta&\longmapsto\delta^\dagger\delta
\end{aligned}
\]
is a real-analytic diffeomorphism. Indeed, given $p\in\mathcal P$, let $s=p^{1/2}\in\mathcal P$ and denote its Iwasawa decomposition as $s=k\delta$. Since $s^\dagger=s$ and $k^\dagger k=e$, one has $p=s^\dagger s=\delta^\dagger\delta$, so $q$ is surjective. If $q(\delta_1)=q(\delta_2)$ and $u:=\delta_2\delta_1^{-1}$, then $u^\dagger u=e$. Thus $u\in K_{\mathrm{ad}}$; but also $u\in AN$, and Iwasawa uniqueness gives $K_{\mathrm{ad}}\cap AN=\{e\}$. Hence $\delta_1=\delta_2$. Finally, the positive square-root map and the Iwasawa projection to the $AN$-factor are real analytic, and their composite $p\mapsto\operatorname{pr}_{AN}(p^{1/2})$ is the inverse of $q$. By the definition of $\Delta_v^{\mathfrak g}$, its restriction gives a diffeomorphism
\[
\Delta_v^{\mathfrak g}
\longrightarrow
\mathcal P\cap Z_{G_{\mathrm{ad}}}(H_v).
\]

Because $H_v$ is contained in the compact group $K_{\mathrm{ad}}$, its centralizer is invariant under the compact conjugation. Moreover, $\exp:\ii\mathfrak k_{\mathrm{ad}}\to\mathcal P$ is a diffeomorphism and is equivariant under conjugation by $K_{\mathrm{ad}}$. The fixed-point set of $H_v$ in $\mathcal P$ is therefore
\[
\mathcal P\cap Z_{G_{\mathrm{ad}}}(H_v)
=\exp\left(
\ii\mathfrak k_{\mathrm{ad}}\cap
\mathfrak z_{\mathfrak g}(H_v)
\right).
\]
It follows that
\[
\dim_{\RR}\Delta_v^{\mathfrak g}
=\dim_{\RR}\left(
\ii\mathfrak k_{\mathrm{ad}}\cap
\mathfrak z_{\mathfrak g}(H_v)
\right).
\]

Finally, $\mathfrak z_{\mathfrak g}(H_v)$ is the complexification of its compact real form
\[
\mathfrak z_{\mathfrak k_{\mathrm{ad}}}(H_v)
=\mathfrak k_{\mathrm{ad}}\cap\mathfrak z_{\mathfrak g}(H_v).
\]
Consequently,
\[
\ii\mathfrak k_{\mathrm{ad}}\cap\mathfrak z_{\mathfrak g}(H_v)
=\ii\mathfrak z_{\mathfrak k_{\mathrm{ad}}}(H_v)
\]
and
\[
\dim_{\RR}\left(\ii\mathfrak z_{\mathfrak k_{\mathrm{ad}}}(H_v)\right)
=\dim_{\CC}\mathfrak z_{\mathfrak g}(H_v),
\]
which proves \eqref{eq:dimension-delta-v}.
\end{proof}

\begin{rmk}
The principal homomorphism makes $\mathfrak g$ into a $G_v$-module by $a\cdot X:=\operatorname{Ad}_{\sigma_{\mathfrak g}(a)}X$. The centralizer in Lemma~\ref{lem:dimension-delta-v} is precisely the invariant subspace
\begin{equation}\label{eq:centralizer-as-invariants}
\mathfrak z_{\mathfrak g}(H_v)
=\mathfrak g^{G_v}
=\operatorname{Hom}_{G_v}(\mathbf 1,\mathfrak g),
\end{equation}
where $\mathbf 1$ is the trivial representation. Hence
\begin{equation}\label{eq:dimension-delta-representation}
\dim_{\RR}\Delta_v^{\mathfrak g}
=\dim_{\CC}\operatorname{Hom}_{G_v}(\mathbf 1,\mathfrak g).
\end{equation}

Let $d_1,\ldots,d_n$ be the exponents of $\mathfrak g$. Kostant's principal-subgroup theorem \cite{Kostant1959} gives
\[
\mathfrak g\simeq
\bigoplus_{j=1}^n\operatorname{Sym}^{2d_j}(\CC^2).
\]
Restricting this representation to $G_v\subset\operatorname{PSU}(2)$ gives
\[
\mathfrak z_{\mathfrak g}(H_v)
\simeq\bigoplus_{j=1}^n
\left(\operatorname{Sym}^{2d_j}(\CC^2)\right)^{G_v},
\]
and therefore
\begin{equation}\label{eq:dimension-by-exponents}
\dim_{\RR}\Delta_v^{\mathfrak g}
=\sum_{j=1}^n\dim_{\CC}
\left(\operatorname{Sym}^{2d_j}(\CC^2)\right)^{G_v}.
\end{equation}
The even symmetric powers descend from $\operatorname{SU}(2)$ to $\operatorname{PSU}(2)$, so these are representations of $G_v$ itself.
\end{rmk}
\begin{lem}\label{lem:principal-curve-stabilizer}
Let $f:\tilde{S}\to F$ be a nondegenerate integral curve. For $g\in G_\mathrm{ad}$, if $g\cdot f\equiv f$, then $g=e$ is the identity.
\end{lem}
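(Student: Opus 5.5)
The plan is to reduce to a purely infinitesimal statement at a single point of the curve. Work on a small disc $D\subset\widetilde S$ around a point $z_0$. Replace $f$ by $h\cdot f$ and $g$ by $hgh^{-1}$ for a suitable $h\in G_{\mathrm{ad}}$; this changes neither the hypothesis nor the conclusion. We may then assume $f(z_0)=B_0$. Choose a local holomorphic lift $\psi:D\to G_{\mathrm{ad}}$ of $f$ through $G_{\mathrm{ad}}\to G_{\mathrm{ad}}/B_{\mathrm{ad}}$ with $\psi(z_0)=e$. The form $\beta=\psi^{-1}d\psi=\beta_z\,dz$ is then holomorphic. As in the proof of Proposition~\ref{prop:unitary-curve-rigidity}, the integral-curve and nondegeneracy conditions give
\[
\beta_z\equiv \sum_i c_i(z)\,e_i \pmod{\mathfrak b},\qquad c_i(z)\neq0 .
\]
The hypothesis $g\cdot f=f$ says that $b(z):=\psi(z)^{-1}g\,\psi(z)$ lies in $B_{\mathrm{ad}}$ for all $z\in D$. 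Moreover $b(z_0)=g$, so it suffices to prove $b(z_0)=e$.

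\textbf{Step 1 (torus part).} Differentiating gives $b^{-1}\partial_z b=\beta_z-\operatorname{Ad}_{b^{-1}}\beta_z$, and this must lie in $\mathfrak b=\mathfrak h\oplus\mathfrak n^-$. Write $b=tn$ with $t$ in the Cartan subgroup and $n\in\exp\mathfrak n^-$. For $Y\in\mathfrak n^-$ one has $[Y,e_i]\in\mathfrak h\oplus\mathfrak n^-$, so $\operatorname{Ad}_{n^{-1}}e_i\equiv e_i$ modulo $\mathfrak b$. Since also $\operatorname{Ad}_{b^{-1}}\mathfrak b\subset\mathfrak b$, the condition reduces to
\[
\sum_i c_i\bigl(1-\alpha_i(t)^{-1}\bigr)e_i\equiv0 \pmod{\mathfrak b}.
\]
Because every $c_i\neq0$, this forces $\alpha_i(t)=1$ for all $i$. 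Hence $t=e$ in the adjoint group, and $b(z)\in N^-:=\exp\mathfrak n^-$ for every $z$.

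\textbf{Step 2 (unipotent part, by induction on height).} Grade $\mathfrak g=\bigoplus_k\mathfrak g_k$ by $\operatorname{ad}\rho^\vee$, and write $b=\exp Y$ with $Y=Y(z)\in\mathfrak n^-=\bigoplus_{k\ge1}\mathfrak g_{-k}$. Suppose, as induction hypothesis, that $Y$ has no components of degree $>-k$, so $Y\in\bigoplus_{j\ge k}\mathfrak g_{-j}$. Then:
\begin{itemize}
\item $b^{-1}\partial b$ lies in $\bigoplus_{j\ge k}\mathfrak g_{-j}$;
\item in $\beta_z-\operatorname{Ad}_{b^{-1}}\beta_z$, the Cartan and negative parts of $\beta_z$ contribute only in degrees $\le -k$;
\item the component of degree $-(k-1)$ is exactly $\bigl[Y_{-k},\sum_i c_ie_i\bigr]$.
\end{itemize}
Comparing degree $-(k-1)$ components therefore gives $\operatorname{ad}\bigl(\sum_i c_ie_i\bigr)Y_{-k}=0$. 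For $k=1$ this is the $\mathfrak h$-component, and the identity says $\sum_i c_i[Y_{-\alpha_i},e_i]=0$.

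The key input is that $\operatorname{ad}\bigl(\sum_ic_ie_i\bigr)$ is injective on $\mathfrak g_{-k}$ for every $k\ge1$. To see this, conjugate by a torus element and rescale, using that the $\alpha_i$ form a basis of the character lattice; this carries $\sum_ic_ie_i$ to a nonzero multiple of $p_1$ and preserves the grading. By $\mathfrak{sl}_2$-theory for the principal triple $(p_1,h,p_{-1})$, with $h=2\rho^\vee$, $\operatorname{ad}p_1$ has kernel equal to the highest-weight vectors. These have nonnegative $h$-eigenvalue, so they meet no $\mathfrak g_{-k}$ with $k\ge1$. Hence $Y_{-k}=0$, the induction proceeds, and $Y\equiv0$. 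In particular $g=b(z_0)=e$.

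The main obstacle is the bookkeeping in Step~2. One must check that the Cartan part $H$ of $\beta_z$ and the higher brackets in $\operatorname{Ad}_{\exp(-Y)}$ never contribute in degree $-(k-1)$, and that the derivative term $b^{-1}\partial b$ stays in degrees $\le -k$. I would handle this by expanding $\operatorname{Ad}_{\exp(-Y)}=\exp(-\operatorname{ad}Y)$ and tracking degrees with $\operatorname{ad}\rho^\vee$.
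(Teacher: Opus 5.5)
Your proof is correct, and it reaches the conclusion by a different route from the paper's.

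The paper linearizes. It first shows that the infinitesimal pointwise stabilizer vanishes, using the linear equation $\partial_z x=-[p+b,x]$ for $x=\operatorname{Ad}_{F^{-1}}X$ and a descending induction in the principal grading. It then treats the group element through its Jordan decomposition $g=su$ inside the Borel subgroups $B_z$. The semisimple part dies because it acts on the lines $\mathcal N_i(B_z)$ by the simple-root characters while fixing $df_z$. The unipotent part is reduced to the Lie-algebra statement via the one-parameter group $\exp(t\log u)$.

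You instead work directly at the group level, with the holomorphic map $b(z)=\psi(z)^{-1}g\psi(z)\in B_{\mathrm{ad}}$ and the identity $b^{-1}\partial_z b=\beta_z-\operatorname{Ad}_{b^{-1}}\beta_z\in\mathfrak b$. Reading this identity modulo $\mathfrak b$ removes the Cartan factor; this is the analytic counterpart of the paper's character argument. Your nonlinear degree induction on $b=\exp Y$ is the group-level version of the paper's linear induction, and its linearization at $b=e$ is exactly the paper's equation for $x$.

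The bookkeeping you flagged does close up. For $m\ge 2$ the terms $(\operatorname{ad}Y)^m\beta_z$ lie in degrees $\le -2k+1\le -k$. The terms $[Y,H]$, $[Y,N]$, and $[Y_{-j},\sum_i c_ie_i]$ with $j>k$ also lie in degrees $\le -k$. Finally, $b^{-1}\partial_z b$ stays in the subalgebra $\bigoplus_{j\ge k}\mathfrak g_{-j}$. So the degree $-(k-1)$ comparison gives exactly $\operatorname{ad}\bigl(\sum_i c_ie_i\bigr)Y_{-k}=0$.

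What each route buys: yours avoids the algebraic-group input (Jordan decomposition internal to $B_z$, unipotent radicals, logarithms of unipotent elements) and yields $b\equiv e$ on the whole disc at once. The paper's route isolates the cleaner linear statement that the infinitesimal stabilizer is zero, at the cost of the extra reduction from the group to the Lie algebra.
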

\begin{proof}
It is enough to work on a coordinate disc $U\subset\widetilde S$. Choose a holomorphic lift $F:U\to G_{\mathrm{ad}}$ of $f$. Integrality and nondegeneracy give
\[
F^{-1}F_z=p(z)+b(z),
\qquad
p(z)=\sum_{i=1}^n c_i(z)e_i,
\qquad
c_i(z)\ne0,\qquad b(z)\in\mathfrak b.
\]
Use the principal grading determined by $h$,
\[
\mathfrak g=\bigoplus_{j\in\mathbb Z}\mathfrak g_j,
\qquad
\mathfrak g_j:=\{X\in\mathfrak g:[h,X]=2jX\},
\qquad
\mathfrak b=\bigoplus_{j\le0}\mathfrak g_j.
\]
Every $p(z)$ is conjugate by the Cartan subgroup to the principal nilpotent $p_1$. Denote the projection $\mathfrak g\to\mathfrak g_j$ by $X\mapsto X^{[j]}$. Kostant's principal-$\mathfrak{sl}_2$ decomposition \cite{Kostant1959} therefore shows that
\[
\operatorname{ad}_{p(z)}:
\mathfrak g_j\longrightarrow\mathfrak g_{j+1}
\quad\text{is injective for every }j\le0.
\]

Let $X\in\mathfrak g$ infinitesimally fix $f$, and define $x(z)=\operatorname{Ad}_{F(z)^{-1}}X$. Then $x(z)\in\mathfrak b$ and
\[
\partial_zx=-[p(z)+b(z),x(z)].
\]
The degree-one component of this equation gives $[p(z),x^{[0]}(z)]=0$, hence $x^{[0]}=0$. If $x^{[0]},\ldots,x^{[j+1]}$ vanish for some $j<0$, the degree-$(j+1)$ component gives $[p(z),x^{[j]}(z)]=0$. Descending induction and the injectivity above therefore give $x^{[j]}=0$ for every $j\le0$. Thus the infinitesimal pointwise stabilizer of $f$ is zero.

Now suppose that $g$ fixes $f$ pointwise, and denote $g=su$ for its commuting semisimple and unipotent Jordan factors. For every $z$, the equality $g\cdot f(z)=f(z)$ and self-normality of Borel subgroups imply $g\in B_z:=f(z)$. Since Jordan decomposition is internal to the algebraic group $B_z$, both $s$ and $u$ belong to every $B_z$ and therefore fix $f$ pointwise. Fix $z$. The action of $s$ on the simple-root line $\mathcal N_i(B_z)$ is multiplication by the corresponding simple-root character. Because $s$ fixes $df_z$ and every component $\operatorname{pr}_i(df_z)$ is nonzero, all these characters take the value one on the image of $s$ in $B_z/R_u(B_z)$. The simple roots form a basis of the character lattice of the adjoint Cartan, so that image is the identity. Thus $s\in R_u(B_z)$; being both semisimple and unipotent, $s=e$. Hence $g=u$ is unipotent. Moreover, $u\in B_z$ for every $z$ implies $\log u\in\operatorname{Lie}(R_u(B_z))\subset\mathfrak b_z$ for every $z$, so $\exp(t\log u)$ fixes $f$ pointwise. Therefore $\log u$ belongs to the infinitesimal pointwise stabilizer of $f$, which the preceding argument shows is zero. Thus $u=e$ and hence $g=e$.
\end{proof}

\begin{thm}
\label{thm:classification-reduced-solutions}
Let $\omega$ be a smooth positive solution of the $\operatorname{SU}(2)$ Toda system on $S$ with locally univalent unitary developing map $v:S\to\PP^1$, and let $\Gamma_v:=\rho_v(\pi_1(S))$ be its monodromy image and $G_v:=\overline{\Gamma_v}\subset\operatorname{PSU}(2)$ its closure. Then the reduced $\mathfrak g$-Toda solutions generated by $\omega$ are parametrized bijectively by
\[
\Delta_v^{\mathfrak g}
=\left\{\delta\in AN:
\delta^\dagger\delta\in
Z_{G_{\mathrm{ad}}}
\bigl(\sigma_{\mathfrak g}(G_v)\bigr)
\right\}.
\]
More precisely, the parameter $\delta$ corresponds to the solution whose associated unitary curve is $f_\delta=\delta\cdot r_{\mathfrak g}\circ v$, and every reduced solution generated by $\omega$ is obtained uniquely in this way. The parameter space has dimension
\begin{equation}\label{eq:classification-dimension}
\dim_{\RR}\Delta_v^{\mathfrak g}
=\sum_{j=1}^n
\dim_{\CC}
\left(\operatorname{Sym}^{2d_j}(\CC^2)
\right)^{G_v},
\end{equation}
where $d_1,\ldots,d_n$ are the exponents of $\mathfrak g$.

Let
\[
T_0:=\operatorname{image}\bigl(B_2\cap\operatorname{SU}(2)
\longrightarrow\operatorname{PSU}(2)\bigr)
\]
be the maximal torus determined by the Borel $B_2$ used to define the principal curve, and denote its normalizer by $N_{\operatorname{PSU}(2)}(T_0)$. In particular, the following cases hold.
\begin{enumerate}
\item If $G_v=\operatorname{PSU}(2)$, then
    \[
    \Delta_v^{\mathfrak g}=\{e\}.
    \]
Thus $\omega$ generates only the principal reduced solution, and its associated curves differ by the action of $K_{\mathrm{ad}}$.

\item If $G_v=T_0$ is a maximal torus of $\operatorname{PSU}(2)$, then
    \[
    \Delta_v^{\mathfrak g}=A,
    \qquad
    \dim_{\RR}\Delta_v^{\mathfrak g}=n.
    \]

\item If $G_v=N_{\operatorname{PSU}(2)}(T_0)$ is the normalizer of a maximal torus, then
    \[
    \dim_{\RR}\Delta_v^{\mathfrak g}
    =\#\{j: d_j\text{ is even}\}.
    \]

\item If $G_v$ is finite, and $\chi_{2d_j}$ is the character of the $\operatorname{PSU}(2)$-representation $\operatorname{Sym}^{2d_j}(\CC^2)$, then
    \begin{equation}\label{eq:finite-monodromy-dimension}
    \dim_{\RR}\Delta_v^{\mathfrak g}
    =\frac{1}{|G_v|}
    \sum_{a\in G_v}\sum_{j=1}^n\chi_{2d_j}(a).
    \end{equation}
This is the standard character projector onto the invariant subspace \cite[Ch.~2]{Serre1977}. It applies to the cyclic, dihedral, tetrahedral, octahedral, and icosahedral projective monodromy groups.
\end{enumerate}
\end{thm}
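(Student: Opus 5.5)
The proof assembles the preceding lemmas. A reduced solution generated by $\omega$ is, by Definition~\ref{def:reduced-solution}, one with an associated unitary curve $g\cdot r_{\mathfrak g}\circ\widetilde v$. Unitarity means exactly that $g\in\mathcal M_v^{\mathfrak g}$, and by Lemma~\ref{lem:Mv-iwasawa} we can write $g=k\delta$ with $k\in K_{\mathrm{ad}}$ and $\delta\in\Delta_v^{\mathfrak g}$. Since $k\cdot f_\delta$ and $f_\delta$ induce the same Toda forms by $K_{\mathrm{ad}}$-invariance of $\varphi_{\mathrm{FS},i}$, the map $\delta\mapsto$ (solution of $f_\delta$) is surjective onto reduced solutions generated by the fixed developing pair $v$. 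One point needs care here: the definition allows any developing pair of $\omega$. Any other unitary developing pair is $U\cdot\widetilde v$ with $U\in\operatorname{PSU}(2)$, and $r_{\mathfrak g}\circ(U\widetilde v)=\sigma_{\mathfrak g}(U)\,r_{\mathfrak g}\circ\widetilde v$. Hence $g\cdot r_{\mathfrak g}\circ(U\widetilde v)=g\sigma_{\mathfrak g}(U)\cdot r_{\mathfrak g}\circ\widetilde v$, which is again of the form $g'\cdot r_{\mathfrak g}\circ\widetilde v$ (cf.\ Lemma~\ref{lem:unitary-transformations}(i)). So fixing $v$ loses nothing.

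For injectivity, suppose $f_{\delta_1}$ and $f_{\delta_2}$ induce the same solution. Proposition~\ref{prop:unitary-curve-rigidity} gives $k\in K_{\mathrm{ad}}$ with $\delta_2\cdot r_{\mathfrak g}\circ\widetilde v=k\delta_1\cdot r_{\mathfrak g}\circ\widetilde v$. The curve $r_{\mathfrak g}\circ\widetilde v$ is nondegenerate, because $\widetilde v$ is locally univalent and $r_{\mathfrak g}$ is nondegenerate. Lemma~\ref{lem:principal-curve-stabilizer} applied to $\delta_2^{-1}k\delta_1$ then forces $\delta_2=k\delta_1$. Iwasawa uniqueness gives $k=e$ and $\delta_1=\delta_2$. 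The dimension formula \eqref{eq:classification-dimension} is then Lemma~\ref{lem:dimension-delta-v} combined with the Kostant decomposition, as in \eqref{eq:dimension-by-exponents}.

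For the special cases, we compute $\left(\operatorname{Sym}^{2d}(\CC^2)\right)^{G_v}$. In case (i), $\operatorname{Sym}^{2d}$ with $d\ge1$ is irreducible and nontrivial, so its invariants vanish. All exponents are at least $1$, so the dimension is $0$, and $\Delta_v^{\mathfrak g}$, being the diffeomorphic image of $\exp$ of a zero space, is $\{e\}$.

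In case (ii), $T_0$ is the image of $\exp(\RR\sqrt{-1}h_2)$, so $\sigma_{\mathfrak g}(T_0)$ is a one-parameter torus in $T^c_{\mathrm{ad}}$ generated by $\sqrt{-1}h$. Its centralizer in $\mathfrak g$ is the zero-weight space of $\operatorname{ad}h$. Since $\alpha_i(h)=2$ for every $i$, this is $\mathfrak g_0=\mathfrak h$, which has dimension $n$ (equivalently, each $\operatorname{Sym}^{2d}$ has a one-dimensional weight-zero space). To identify $\Delta_v^{\mathfrak g}=A$ exactly, I would use that $Z_{G_{\mathrm{ad}}}(\sigma_{\mathfrak g}(T_0))\cap\mathcal P=\exp(\sqrt{-1}\mathfrak t)$, where $\mathfrak t=\operatorname{Lie}T^c_{\mathrm{ad}}$. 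Thus $\delta^\dagger\delta\in A$, and the Cholesky map $q$ restricted to $A$ is $a\mapsto a^2$, which is a bijection of $A$. Hence $q^{-1}(A)=A$.

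In case (iii), the Weyl element $w\in N(T_0)$ acts on the weight-zero vector of $\operatorname{Sym}^{2d}$ by $(-1)^d$. This is a standard computation: the weight-zero vector is $x^dy^d$, and $w:(x,y)\mapsto(y,-x)$ multiplies it by $(-1)^d$. So the invariant space is one-dimensional exactly when $d$ is even. Case (iv) is the character projector onto invariants, summed over $j$.

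The main obstacle I anticipate is the bookkeeping in the injectivity step. There I must ensure that Lemma~\ref{lem:principal-curve-stabilizer} applies to the composite curve on $\widetilde S$, which requires $\widetilde v$ to be locally univalent. I also need the sign convention in case (iii), which should be checked against the descent of $\operatorname{Sym}^{2d}$ to $\operatorname{PSU}(2)$.
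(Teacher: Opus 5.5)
Your proposal is correct and follows the paper's proof essentially step for step: you reduce to the fixed pair $v$ via $g\mapsto g\sigma_{\mathfrak g}(U)$, use the Iwasawa splitting $g=k\delta$, obtain uniqueness from rigidity plus Lemma~\ref{lem:principal-curve-stabilizer}, and do the same weight, Weyl-sign and character computations for cases (i)--(iv). The one step you pass over too quickly is ``unitarity means exactly $g\in\mathcal M_v^{\mathfrak g}$'': unitarity only gives \emph{some} $K_{\mathrm{ad}}$-valued equivariance representation, and the paper uses Lemma~\ref{lem:principal-curve-stabilizer} (trivial pointwise stabilizer) to show it must coincide with the natural monodromy $\gamma\mapsto g\sigma_{\mathfrak g}(\rho_v(\gamma))g^{-1}$, the same lemma you already invoke for injectivity.
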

\begin{proof}
Let $\boldsymbol\varphi$ be a reduced solution generated by $\omega$. By Definition~\ref{def:reduced-solution}, it has an associated unitary curve $g\cdot r_{\mathfrak g}\circ v'$ for some $g\in G_{\mathrm{ad}}$ and some associated unitary curve $v'$ of $\omega$. Proposition~\ref{prop:unitary-curve-rigidity} gives $U\in\operatorname{PSU}(2)$ such that $v'=U\cdot v$. By the equivariance of the principal curve,
\[
g\cdot r_{\mathfrak g}\circ v'
=g\sigma_{\mathfrak g}(U)\cdot r_{\mathfrak g}\circ v.
\]
Replacing $g$ by $g\sigma_{\mathfrak g}(U)$, we may therefore use the specified associated curve $v$. Denote $\widetilde f=g\cdot r_{\mathfrak g}\circ\widetilde v$. Since $\widetilde f$ is unitary, choose an equivariance representation $\rho_{\widetilde f}:\pi_1(S)\to K_{\mathrm{ad}}$. Equivariance of the principal curve also gives the natural representation $\eta_g(\gamma):=g\sigma_{\mathfrak g}(\rho_v(\gamma))g^{-1}$. Thus, for every $\gamma\in\pi_1(S)$ and $z\in\widetilde S$, one has $\rho_{\widetilde f}(\gamma)\cdot\widetilde f(z)=\widetilde f(\gamma z)=\eta_g(\gamma)\cdot\widetilde f(z)$. Consequently, $\rho_{\widetilde f}(\gamma)^{-1}\eta_g(\gamma)$ fixes $\widetilde f$ pointwise. Lemma~\ref{lem:principal-curve-stabilizer} shows that it is the identity. Hence $\eta_g=\rho_{\widetilde f}$ has image in $K_{\mathrm{ad}}$, and therefore $g\in\mathcal M_v^{\mathfrak g}$. Lemma~\ref{lem:Mv-iwasawa} gives
\[
g=k\delta,
\qquad
k\in K_{\mathrm{ad}},\quad
\delta\in\Delta_v^{\mathfrak g}.
\]
The factor $k$ does not change the induced Toda solution, because the Fubini--Study forms are $K_{\mathrm{ad}}$-invariant. Hence every reduced solution generated by $\omega$ is represented by $\delta\cdot r_{\mathfrak g}\circ v$. Conversely, the definition of $\Delta_v^{\mathfrak g}$ and Lemma~\ref{lem:unitary-transformations} show that each such curve has unitary monodromy and therefore produces a reduced solution.

For uniqueness, suppose $\delta_1,\delta_2\in\Delta_v^{\mathfrak g}$ induce the same Toda solution. Proposition~\ref{prop:unitary-curve-rigidity} gives $k\in K_{\mathrm{ad}}$ such that
\[
\delta_2\cdot r_{\mathfrak g}\circ\widetilde v
=k\delta_1\cdot r_{\mathfrak g}\circ\widetilde v.
\]
The map $\widetilde v$ is nonconstant, so the open mapping theorem shows that its image contains a nonempty open subset of $\PP^1$. Lemma~\ref{lem:principal-curve-stabilizer} gives $\delta_2^{-1}k\delta_1=e$. Thus $\delta_2=k\delta_1$; uniqueness of the Iwasawa decomposition, with $\delta_1,\delta_2\in AN$, forces $k=e$ and $\delta_1=\delta_2$.

Formula \eqref{eq:classification-dimension} follows from \eqref{eq:dimension-by-exponents}, which is a consequence of Lemma~\ref{lem:dimension-delta-v}.

If $G_v=\operatorname{PSU}(2)$, none of the nontrivial irreducible modules $\operatorname{Sym}^{2d_j}(\CC^2)$ contains a fixed vector. Thus the dimension is zero. The positive fixed-point space is connected and contains the identity, so $\Delta_v^{\mathfrak g}=\{e\}$.

If $G_v=T_0$, every $\operatorname{Sym}^{2d_j}(\CC^2)$ has a one-dimensional zero-weight space. Thus the dimension is $n$. Moreover, $h$ is regular because $\alpha_i(h)=2$ for every $i$, so the centralizer of the principal circle is the Cartan subgroup determined by $B_{\mathrm{ad}}$. Compatibility of the Iwasawa decomposition gives $\mathcal P\cap Z_{G_{\mathrm{ad}}}(\sigma_{\mathfrak g}(T_0))=A$. Since $q(a)=a^2$ for $a\in A$, the bijectivity of $q$ gives $\Delta_v^{\mathfrak g}=q^{-1}(A)=A$.

If $G_v=N_{\operatorname{PSU}(2)}(T_0)$, the nonidentity component of the normalizer acts on the zero-weight line in $\operatorname{Sym}^{2d_j}(\CC^2)$ by $(-1)^{d_j}$. Hence that line is fixed precisely when $d_j$ is even. 

Finally, if $G_v$ is finite, averaging the character computes the dimension of the invariant subspace in each summand, which gives \eqref{eq:finite-monodromy-dimension}.
\end{proof}

\begin{rmk}
For $\mathfrak g=\mathfrak{sl}_{n+1}(\CC)$, the exponents are $1,\ldots,n$. Theorem~\ref{thm:classification-reduced-solutions} therefore specializes to Theorem~1.3 of \cite{ShiWeiXu2026}; the explicit matrix descriptions there are the type-$A_n$ realization of the intrinsic centralizer condition above.
\end{rmk}

\begin{rmk}
When $G_{\mathrm{ad}}=\operatorname{PSL}(n+1,\CC)$, $g^\dagger$ is the conjugate transpose $g^*$ and $\Delta_{\mathfrak g}$ is represented by upper-triangular matrices with positive diagonal entries. Thus Lemmas~\ref{lem:unitary-transformations} and~\ref{lem:Mv-iwasawa} reduce precisely to Lemmas~3.1 and~3.2 of \cite{ShiWeiXu2026}.

Moreover, define
\[
V_v:=\left\{A\in\operatorname{Mat}_{n+1}(\CC):
A\widetilde a=\widetilde aA\text{ for every }a\in H_v\right\},
\]
where $\widetilde a\in\operatorname{SL}_{n+1}(\CC)$ is any lift of the projective class $a$. The commuting condition is independent of the scalar choices of these lifts. Then
\[
\mathfrak z_{\mathfrak{sl}_{n+1}}(H_v)
=V_v\cap\mathfrak{sl}_{n+1}(\CC).
\]
Since $V_v$ contains the scalar matrices, this space has complex codimension one in $V_v$. Formula~\eqref{eq:dimension-delta-v} therefore becomes
\[
\dim_{\RR}\Delta_v^{\mathfrak{sl}_{n+1}}
=\dim_{\CC}V_v-1,
\]
which is Lemma~3.6 of \cite{ShiWeiXu2026}.
\end{rmk}

For comparison, specialize the one-source type-$A_n$ classification of Lin--Wei--Ye \mbox{\cite{LinWeiYe2012}} to equal source strengths $\gamma_i=\beta-1$, with $\beta>0$. Their formulas give projective transforms of $[1:z^\beta:\cdots:z^{n\beta}]$, up to fixed coordinate normalizations, with parameters restricted by single-valuedness. This is the reduced family generated by $v=z^\beta$ on $\mathbb C^*$; its monodromy closure is finite cyclic for rational $\beta$ and a maximal torus for irrational $\beta$.
\section{Type-by-type dimension formulas}\label{sec:classification}
Theorem~\ref{thm:classification-reduced-solutions} already reduces the problem to fixed vectors in the principal $\mathfrak{sl}_2$-decomposition. We first compute the contribution of one summand and then sum over the exponent multiset of the Lie algebra. Thus the formulas below are specializations of the master theorem, not additional local Toda--flag classification statements.
Throughout this section, $T_0$ is the compatible maximal torus from Theorem~\ref{thm:classification-reduced-solutions}, and $N(T_0):=N_{\operatorname{PSU}(2)}(T_0)$. We use projective rotation groups: $|C_q|=q$ $(q\geq1)$, $|D_q|=2q$ $(q\geq2)$, and $(|\mathsf T|,|\mathsf O|,|\mathsf I|)=(12,24,60)$. The inverse images of these finite groups in $\operatorname{SU}(2)$ are the corresponding binary groups. Under $\operatorname{PSU}(2)\cong\operatorname{SO}(3)$ these, together with $\operatorname{PSU}(2)$, $T_0$, and $N(T_0)\simeq\operatorname{O}(2)$, exhaust the closed subgroup cases; see \cite[Theorem~1.3]{ShiWeiXu2026}. We distinguish the Lie type $D_n$ from the dihedral group $D_q$.

After conjugating the developing pair, we may place a cyclic group in $T_0$ and $D_q$ in $N(T_0)$; this preserves the spherical metric and the classification by Lemma~\ref{lem:unitary-transformations}.

Fix a smooth positive $\operatorname{SU}(2)$ Toda solution on a Riemann surface $S$ with a locally univalent unitary developing pair $v=(\widetilde v,\rho_v)$, where $\widetilde v:\widetilde S\to\PP^1$, and define $G_v:=\overline{\rho_v(\pi_1(S))}$. Every $\Delta_v^X$ below is the centralizer slice from Theorem~\ref{thm:classification-reduced-solutions}; only its dimension is being specialized. Let $\mathcal E_X$ denote the exponent multiset \cite{Kostant1959}.
{
\begin{equation}
\label{eq:section3-exponents}
\begin{array}{c|l}
X&\mathcal E_X\\ \hline
A_n&1,2,\ldots,n\\
B_n&1,3,\ldots,2n-1\\
C_n&1,3,\ldots,2n-1\\
D_n&1,3,\ldots,2n-3,\ n-1\\
E_6&1,4,5,7,8,11\\
E_7&1,5,7,9,11,13,17\\
E_8&1,7,11,13,17,19,23,29\\
F_4&1,5,7,11\\
G_2&1,5
\end{array}
\end{equation}}
The $D_n$ row is a multiset: when $n$ is even, $n-1$ occurs twice. Set $S_X(q):=\sum_{d\in\mathcal E_X}\lfloor d/q\rfloor$.

For use in each of the polyhedral cases, denote
\[
m_G(d):=\dim_{\CC}\left(\operatorname{Sym}^{2d}(\CC^2)\right)^G,
\qquad G\in\{\mathsf T,\mathsf O,\mathsf I\},
\]
and, for positive integers $a,b$, denote
\[
c_{a,b}(\ell)
:=\#\{(u,w)\in\ZZ_{\geq0}^2:au+bw=\ell\},
\qquad c_{a,b}(\ell):=0\quad(\ell<0).
\]

\begin{prop}
\label{prop:explicit-polyhedral-dimensions}
The multiplicities $m_G(d)$ satisfy
\begin{equation}
\label{eq:polyhedral-molien-series}
\begin{aligned}
\sum_{d\geq0}m_T(d)t^d&=\frac{1+t^6}{(1-t^3)(1-t^4)},\\
\sum_{d\geq0}m_O(d)t^d&=\frac{1+t^9}{(1-t^4)(1-t^6)},\\
\sum_{d\geq0}m_I(d)t^d&=\frac{1+t^{15}}{(1-t^6)(1-t^{10})}.
\end{aligned}
\end{equation}
Equivalently,
\begin{equation}
\label{eq:polyhedral-coefficient-formulas}
\begin{aligned}
m_T(d)&=c_{3,4}(d)+c_{3,4}(d-6),\\
m_O(d)&=c_{4,6}(d)+c_{4,6}(d-9),\\
m_I(d)&=c_{6,10}(d)+c_{6,10}(d-15).
\end{aligned}
\end{equation}

Let
\[
M_G(n):=\sum_{j=1}^n m_G(2j-1),\qquad M_G(0):=0.
\]
If $n=6k+r$, where $0\leq r<6$, then
\begin{align}
M_T(n)&=6k^2+(2r-1)k+\alpha_r,
& (\alpha_0,\ldots,\alpha_5)&=(0,0,1,1,2,4),
\label{eq:MT-closed}\\
M_O(n)&=3k^2+(r-2)k+\beta_r,
& (\beta_0,\ldots,\beta_5)&=(0,0,0,0,0,1).
\label{eq:MO-closed}
\end{align}
If $n=15k+r$, where $0\leq r<15$, then
\begin{equation}
\label{eq:MI-closed}
M_I(n)=\frac{15k^2-7k}{2}+rk+\gamma_r,
\qquad
(\gamma_0,\ldots,\gamma_{14})
=(0,0,0,0,0,0,0,0,1,1,1,2,2,3,4).
\end{equation}
\end{prop}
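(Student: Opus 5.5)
The plan is to derive the series \eqref{eq:polyhedral-molien-series} from the classical invariant theory of the binary polyhedral groups, read off the coefficient formulas \eqref{eq:polyhedral-coefficient-formulas}, and then prove the closed forms by a periodicity argument plus a finite check.

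\textbf{Step 1: reduction to binary invariants.} Let $\widetilde G\subset\operatorname{SU}(2)$ be the binary group over $G$. It contains $-I$, which acts on $\operatorname{Sym}^{m}(\CC^2)$ by $(-1)^m$. Hence $\operatorname{Sym}^{2d}(\CC^2)^G$ is the degree-$2d$ part of $\CC[x,y]^{\widetilde G}$, and all odd-degree invariants vanish.

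\textbf{Step 2: the series.} I invoke Klein's classical description of $\CC[x,y]^{\widetilde G}$. It is generated by three forms: the vertex form, the face form and the edge form. Their degrees are $(6,8,12)$ for $\mathsf T$, $(8,12,18)$ for $\mathsf O$ and $(12,20,30)$ for $\mathsf I$. The only relation is one in degree twice the top degree. The first two generators are algebraically independent, and the ring is free of rank two over the subring they generate, with basis $1$ and the top form. This gives the Hilbert series
\[
\frac{1+t^{12}}{(1-t^6)(1-t^8)},\qquad \frac{1+t^{18}}{(1-t^8)(1-t^{12})},\qquad \frac{1+t^{30}}{(1-t^{12})(1-t^{20})}.
\]
All degrees are even, so substituting $t^2\mapsto t$ turns these into the series \eqref{eq:polyhedral-molien-series}. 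If one prefers not to quote Klein, the same series follows from Molien's formula. One averages $\det(1-t\,a)^{-1}$ over $\widetilde G$, grouping the elements by conjugacy class using their eigenvalues $e^{\pm\ii\theta}$, and the resulting rational function simplifies to the displayed one. Expanding $1/((1-t^a)(1-t^b))=\sum_\ell c_{a,b}(\ell)t^\ell$ then gives \eqref{eq:polyhedral-coefficient-formulas} immediately.

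\textbf{Step 3: closed forms via periodicity.} Set $L=\operatorname{lcm}(a,b)$, so $L=12$ for $\mathsf T$ and $\mathsf O$, and $L=30$ for $\mathsf I$.
\begin{itemize}
\item For $\ell\geq0$, a solution $(u,w)$ of $au+bw=\ell+L$ either has $u\ge L/a$ or it does not, and the second alternative forces $w\ge L/b$. A counting argument along these lines gives $c_{a,b}(\ell+L)=c_{a,b}(\ell)+1$ whenever $\ell$ lies in the subgroup $\gcd(a,b)\ZZ$. For $\ell\notin\gcd(a,b)\ZZ$ both sides are $0$.
\item Consequently $m_G(d+L)-m_G(d)$ is an explicit periodic function of $d$, equal to $0$, $1$ or $2$.
\item As $j$ runs over $j_0+1,\dots,j_0+L/2$, the odd numbers $d=2j-1$ run over a full period of length $L$. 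Therefore $M_G(n+L/2)-M_G(n)$ is an affine function of $n$, obtained by summing the increments over one period of odd residues.
\item The claimed formulas use the periods $6$, $6$, $15$ in $n$, matching $L/2$. Their differences are $12k+2r+5$ for $\mathsf T$, $6k+r+1$ for $\mathsf O$, and $15k+4+r$ for $\mathsf I$. So it suffices to show that both sides have the same first difference over one period, and that they agree for $0\le n<L/2$.
\item The base values are a finite computation of $m_G(2j-1)$ for $2j-1<L$, and these produce the tables $\alpha_r$, $\beta_r$, $\gamma_r$. For example, $m_{\mathsf I}$ vanishes at the odd values $1,3,\dots,13$ and equals $1$ at $15$; this matches $\gamma_8=1$, because $2\cdot 8-1=15$.
\end{itemize}

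\textbf{Main obstacle.} I expect the main difficulty to be bookkeeping rather than ideas. Two things need care: checking the increment identity for $c_{a,b}$ on the correct residue classes, which involves the shift $-6$, $-9$, $-15$ in the second term and the parity of $d$; and verifying that the increment over one period exactly equals the stated linear term. I would handle this by computing the sums for $0\le n<2L$ directly and checking both the base values and the difference identity against them.
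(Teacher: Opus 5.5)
Your route is the paper's. You obtain the Molien series, extract coefficients, establish the period shifts $m_{\mathsf{T}}(d+12)=m_{\mathsf{T}}(d)+2$, $m_{\mathsf{O}}(d+12)=m_{\mathsf{O}}(d)+1$ and $m_{\mathsf{I}}(d+30)=m_{\mathsf{I}}(d)+1$ on odd $d$, then sum complete periods and add base values.

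The one real difference is in Step~2. The paper averages the $\operatorname{SO}(3)$ character generating function $(1+t)/(1-2t\cos\theta+t^2)$ over the table of rotation classes. You instead quote Klein's invariant ring of $\widetilde G$ (or Molien over $\widetilde G$) and substitute $t^2\mapsto t$; the two are equivalent. One small correction there: for $\mathsf{T}$ and $\mathsf{O}$ the generators are not literally the vertex, face and edge forms.
For $\mathsf{T}$, the vertex and face forms have degree $4$ and are only relative invariants; the degree-$8$ invariant is their product. For $\mathsf{O}$, the vertex and edge forms carry the sign character, and the invariants of degrees $12$ and $18$ are the square of the vertex form and the product of the vertex and edge forms. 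Your degrees and Hilbert series are nevertheless correct.

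\emph{The step that fails as written is the justification of the increment lemma in Step~3.} It is false that a solution of $au+bw=\ell+L$ with $u<L/a$ must have $w\ge L/b$. For $(a,b)=(3,4)$ and $\ell=1$, the solution $3\cdot 3+4\cdot 1=13$ has $u=3<4$ and $w=1<3$. The lemma itself is true, but needs a different count:
\begin{itemize}
\item All solutions have $u$ in a single residue class modulo $L/a=b/\gcd(a,b)$.
\item Those with $u\ge L/a$ correspond bijectively to solutions for $\ell$ via $u\mapsto u-L/a$.
\item Exactly one solution has $0\le u<L/a$, and its $w$ is automatically nonnegative because $au<L\le \ell+L$.
\end{itemize}

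\emph{There is also an uncovered boundary range.} Your lemma needs $\ell\ge 0$, but in $m_G(d)=c_{a,b}(d)+c_{a,b}(d-s)$, with $s=6,9,15$, the second argument is negative for $d<s$. There the increment is $c_{a,b}(d-s+L)$, which can vanish in general because of Frobenius gaps. Here it equals $1$:
\begin{itemize}
\item for $(3,4)$, $d+6\ge 6>5$;
\item for $(4,6)$ and odd $d$, $d+3\ge 4>2$;
\item for $(6,10)$ and odd $d$, $d+15\ge 16>14$.
\end{itemize}
In each case the argument is below $L$, so it has at most one representation.

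Consequently the increment $m_G(d+L)-m_G(d)$ is \emph{constant} on odd $d\ge 1$, equal to $\Delta_G=2,1,1$ respectively, not merely periodic. Constancy is what you need. It gives $M_G(n+L/2)-M_G(n)=M_G(L/2)+\Delta_G n$. Since $M_{\mathsf{T}}(6)=5$, $M_{\mathsf{O}}(6)=1$ and $M_{\mathsf{I}}(15)=4$, this equals $2n+5$, $n+1$ and $n+4$, matching the differences you derived from the closed forms. With these two repairs and your base-value computation, the argument is complete.
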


\begin{proof}
For a rotation through angle $\theta$, the character of $\operatorname{Sym}^{2d}(\CC^2)$ is
\[
\chi_{2d}(\theta)
=\frac{\sin((2d+1)\theta/2)}{\sin(\theta/2)},
\qquad
\sum_{d\geq0}\chi_{2d}(\theta)t^d
=\frac{1+t}{1-2t\cos\theta+t^2}.
\]
Here $\theta$ is the usual rotation angle in $\operatorname{SO}(3)$, and $\chi_{2d}(0)$ is understood by continuity as $2d+1$. Combining inverse-angle classes, the relevant conjugacy-class data are
\[
\begin{array}{c|rrrrrr}
G&0&\pi/2&2\pi/3&\pi&2\pi/5&4\pi/5\\ \hline
\mathsf T\simeq A_4&1&0&8&3&0&0\\
\mathsf O\simeq S_4&1&6&8&9&0&0\\
\mathsf I\simeq A_5&1&0&20&15&12&12
\end{array}
\]
where each entry is the number of rotations with that angle. Thus
\[
\begin{aligned}
m_T(d)&=\frac{\chi_{2d}(0)+8\chi_{2d}(2\pi/3)+3\chi_{2d}(\pi)}{12},\\
m_O(d)&=\frac{\chi_{2d}(0)+6\chi_{2d}(\pi/2)+8\chi_{2d}(2\pi/3)+9\chi_{2d}(\pi)}{24},\\
m_I(d)&=\frac{\chi_{2d}(0)+12\chi_{2d}(2\pi/5)+12\chi_{2d}(4\pi/5)+20\chi_{2d}(2\pi/3)+15\chi_{2d}(\pi)}{60}.
\end{aligned}
\]
These are the character projectors for the rotation groups (see also Meyer \cite{Meyer1954}); averaging the generating-function identity over the displayed rows gives \eqref{eq:polyhedral-molien-series}, and coefficient extraction gives \eqref{eq:polyhedral-coefficient-formulas}.

Reading the odd coefficients in \eqref{eq:polyhedral-coefficient-formulas} gives the period shifts
$m_T(2(j+6)-1)=m_T(2j-1)+2$, $m_O(2(j+6)-1)=m_O(2j-1)+1$, and
$m_I(2(j+15)-1)=m_I(2j-1)+1$. Summing complete periods and the remaining terms gives
\eqref{eq:MT-closed}--\eqref{eq:MI-closed}.
\end{proof}

{
\begin{prop}
\label{prop:fixed-space-dictionary}
For $d\geq1$, the even symmetric power $\operatorname{Sym}^{2d}(\CC^2)$ descends to $\operatorname{PSU}(2)$. For a closed subgroup $H\subset\operatorname{PSU}(2)$, denote
\[
\mu_d(H):=\dim_{\CC}\left(\operatorname{Sym}^{2d}(\CC^2)\right)^H.
\]
In the projective normalization used here, the weights are $-d,-d+1,\ldots,d$, and
\[
\begin{array}{c|c}
H&\mu_d(H)\\ \hline
\operatorname{PSU}(2)&0\\
T_0&1\\
N(T_0)&\dfrac{1+(-1)^d}{2}\\
C_q&2\left\lfloor\dfrac d q\right\rfloor+1\\
D_q&\left\lfloor\dfrac d q\right\rfloor+\dfrac{1+(-1)^d}{2}\\
P\in\{\mathsf T,\mathsf O,\mathsf I\}&m_P(d)
\end{array}
\]
where $m_P(d)$ is given by Proposition~\ref{prop:explicit-polyhedral-dimensions}.
\end{prop}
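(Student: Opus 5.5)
The plan is to compute everything from the weight decomposition of $\operatorname{Sym}^{2d}(\CC^2)$ under the torus $T_0$. First the descent: $-I\in\operatorname{SU}(2)$ acts on $\operatorname{Sym}^{2d}(\CC^2)$ by $(-1)^{2d}=1$, so the representation factors through $\operatorname{PSU}(2)$. Choose the basis of monomials $x^{d+k}y^{d-k}$, $k=-d,\ldots,d$, for the standard basis $x,y$ adapted to $B_2$. The lifted torus element $\operatorname{diag}(e^{\ii\theta/2},e^{-\ii\theta/2})$ then acts on $x^{d+k}y^{d-k}$ by $e^{\ii k\theta}$. Its image in $T_0\cong\operatorname{U}(1)$ is rotation by $\theta$, so the projective weights are $k=-d,\ldots,d$ as claimed.

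The rows then follow in turn.

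\textbf{Row $\operatorname{PSU}(2)$.} The module is irreducible and nontrivial for $d\ge1$, so there are no invariants.

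\textbf{Row $T_0$.} Only the weight $k=0$ is fixed, so $\mu_d(T_0)=1$.

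\textbf{Row $N(T_0)$.} The invariants lie in the zero-weight line spanned by $x^dy^d$. A representative of the nonidentity component is $w=\begin{pmatrix}0&1\\-1&0\end{pmatrix}$, which sends $x\mapsto -y$ and $y\mapsto x$ up to convention. Hence $x^dy^d\mapsto(-1)^dx^dy^d$, which gives $(1+(-1)^d)/2$. This agrees with the argument already used in the proof of Theorem~\ref{thm:classification-reduced-solutions}(iii).

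\textbf{Row $C_q$.} After conjugation, $C_q\subset T_0$ is generated by projective rotation through $2\pi/q$, acting on weight $k$ by $e^{2\pi\ii k/q}$. The fixed weights are the multiples of $q$ in $[-d,d]$, and there are $2\lfloor d/q\rfloor+1$ of them.

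\textbf{Row $D_q$.} Place $D_q=C_q\cup wC_q\subset N(T_0)$. The $C_q$-invariants are spanned by the weight vectors with $q\mid k$. The element $w$ exchanges the weights $k$ and $-k$ (up to sign), so each pair $\{k,-k\}$ with $k>0$ contributes exactly one invariant line. There are $\lfloor d/q\rfloor$ such pairs. The zero weight contributes $(1+(-1)^d)/2$ by the $N(T_0)$ computation.

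\textbf{Polyhedral rows.} These are simply the definition of $m_P(d)$, computed in Proposition~\ref{prop:explicit-polyhedral-dimensions}.

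The only delicate point is the normalization. One must check that the conventions make $C_q$ and $D_q$ of orders $q$ and $2q$ inside $\operatorname{PSU}(2)$, not inside $\operatorname{SU}(2)$, so that the generator is a rotation by $2\pi/q$ acting with integer weights $k$ rather than half-integers. I would verify this by noting that the lift $\operatorname{diag}(e^{\ii\pi/q},e^{-\ii\pi/q})$ has projective order $q$. I would also confirm that the sign by which $w$ acts on $x^dy^d$ is $(-1)^d$ irrespective of the chosen lift. Both lifts $\pm w$ give the same sign, since the module descends.
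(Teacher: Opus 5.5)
Your proposal is correct and follows the paper's own argument: irreducibility kills the $\operatorname{PSU}(2)$ invariants, weight counting handles $T_0$ and $C_q$, the Weyl element acts on the zero-weight line by $(-1)^d$, and pairing the weights $\pm k$ gives the $D_q$ row. The polyhedral row holds by the definition of $m_P(d)$. The paper states these steps in a single sentence, while you supply the explicit monomial basis, the lift-independence of the sign, and the check that the weights are integers.
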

}

\begin{proof}
The $\operatorname{PSU}(2)$ row uses irreducibility; the $T_0$ and $C_q$ rows count weights; the Weyl involution acts on the zero-weight line by $(-1)^d$, which explains the normalizer row; and the $D_q$ row additionally pairs opposite weights. The last row is the character average for the projective rotation group.
\end{proof}

Consequently, for every simple type $X$, the dimension is obtained by the single substitution
{\begin{equation}
\label{eq:section3-universal-dimension}
\dim_{\RR}\Delta_v^X=\sum_{d\in\mathcal E_X}\mu_d(G_v).
\end{equation}}

\subsection{Type \texorpdfstring{$B_n,C_n$}{B-n,C-n}}
\begin{thm}
\label{thm:Bn-classification}
Under the standing notation above, for $X\in\{B_n,C_n\}$ with $n\geq2$, we have
{\begin{equation}\label{eq:Bn-basic-dimension}
\dim_{\RR}\Delta_v^{X}
=\sum_{j=1}^n\mu_{2j-1}(G_v).
\end{equation}}
The dictionary gives $\Delta_v^X=\{e\}$ for full or normalizer monodromy, and $\Delta_v^X=A$ with dimension $n$ for toral monodromy. For the finite cases,
{\begin{align}
\dim_{\RR}\Delta_v^X
&=\sum_{j=1}^n
\left(2\left\lfloor\frac{2j-1}{q}\right\rfloor+1\right),
&&G_v=C_q,                              \label{eq:Bn-cyclic}\\
\dim_{\RR}\Delta_v^X
&=\sum_{j=1}^n
\left\lfloor\frac{2j-1}{q}\right\rfloor,
&&G_v=D_q.                              \label{eq:Bn-dihedral}
\end{align}}
For $G_v=\mathsf T,\mathsf O,\mathsf I$, respectively, the dimensions are
{\begin{align}
\dim_{\RR}\Delta_v^X
&=M_T(n),
  &&G_v=\mathsf T,                              \label{eq:Bn-T}\\
\dim_{\RR}\Delta_v^X
&=M_O(n),
  &&G_v=\mathsf O,                              \label{eq:Bn-O}\\
\dim_{\RR}\Delta_v^X
&=M_I(n),
  &&G_v=\mathsf I.                              \label{eq:Bn-I}
\end{align}}
\end{thm}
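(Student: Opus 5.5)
The plan is to specialize the universal formula \eqref{eq:section3-universal-dimension}, i.e.\ Theorem~\ref{thm:classification-reduced-solutions}, to the exponent multiset $\mathcal E_{B_n}=\mathcal E_{C_n}=\{1,3,\ldots,2n-1\}$ from \eqref{eq:section3-exponents}. Writing $d=2j-1$, this gives \eqref{eq:Bn-basic-dimension} immediately. Every remaining assertion then comes from reading off the corresponding row of Proposition~\ref{prop:fixed-space-dictionary} at odd $d$.

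I will take the closed subgroups in order.

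\emph{Full and normalizer monodromy.} For $G_v=\operatorname{PSU}(2)$, $\mu_d=0$. For $G_v=N(T_0)$, $\mu_d=(1+(-1)^d)/2=0$ because every exponent is odd. In both cases the dimension is zero. To conclude $\Delta_v^X=\{e\}$ rather than merely a zero-dimensional set, I will reuse the argument from Theorem~\ref{thm:classification-reduced-solutions}(i): by Lemma~\ref{lem:dimension-delta-v}, $\Delta_v^X$ is diffeomorphic to $\exp(\ii\mathfrak z_{\mathfrak k_{\mathrm{ad}}}(H_v))$. That set is connected and contains $e$, and it reduces to $\{e\}$ when the centralizer algebra vanishes.

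\emph{Toral monodromy.} The identification $\Delta_v^X=A$ with dimension $n$ is exactly Theorem~\ref{thm:classification-reduced-solutions}(ii).

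\emph{Cyclic and dihedral monodromy.} After conjugating so that $C_q\subset T_0$ and $D_q\subset N(T_0)$, which Lemma~\ref{lem:unitary-transformations} permits, the dictionary gives $\mu_d(C_q)=2\lfloor d/q\rfloor+1$ and $\mu_d(D_q)=\lfloor d/q\rfloor+(1+(-1)^d)/2=\lfloor d/q\rfloor$ for odd $d$. Summing over $d=2j-1$ yields \eqref{eq:Bn-cyclic} and \eqref{eq:Bn-dihedral}.

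\emph{Polyhedral monodromy.} Here $\mu_d(P)=m_P(d)$, so the sum is $\sum_{j=1}^n m_P(2j-1)=M_P(n)$ by the definition in Proposition~\ref{prop:explicit-polyhedral-dimensions}. This gives \eqref{eq:Bn-T}--\eqref{eq:Bn-I}.

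The only step needing any care is justifying the dictionary entries at odd $d$, especially the parity cancellation in the dihedral row. Let $s$ be the Weyl involution, which acts on the weight line of weight $k$ by $(-1)^d$ times the map to weight $-k$. The $s$-invariants in the span of the weights $\pm k$ with $k\neq0$ then contribute exactly one dimension per $C_q$-invariant pair, and the zero weight line contributes $(1+(-1)^d)/2$. Since this is already recorded in Proposition~\ref{prop:fixed-space-dictionary}, I expect no real obstacle: the proof is a bookkeeping substitution.
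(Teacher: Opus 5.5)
Your proposal is correct and matches the paper's proof. Both substitute the common exponent list $1,3,\ldots,2n-1$ of $B_n$ and $C_n$ into the fixed-space dictionary of Proposition~\ref{prop:fixed-space-dictionary}, and both obtain $\Delta_v^X=\{e\}$ and $\Delta_v^X=A$ from the arguments of Theorem~\ref{thm:classification-reduced-solutions}(i),(ii). One harmless slip: in the monomial basis the Weyl element sends the weight-$k$ line to the weight-$(-k)$ line with sign $(-1)^{d+k}$, not $(-1)^d$, but only the zero-weight sign $(-1)^d$ enters the invariant count.
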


\begin{proof}
Kostant's exponent list for $B_n$ is $1,3,\ldots,2n-1$ \cite{Kostant1959}. Substitution in the fixed-space dictionary proves \eqref{eq:Bn-basic-dimension} and all of the displayed finite-group formulas.

The exponent multiset of $C_n$ is identical to that of $B_n$, so the common dictionary proves all displayed formulas. For $n\geq3$ the ambient groups differ; for $n=2$ the accidental isomorphism $B_2=C_2$ identifies the constructions after interchanging the Dynkin nodes.
\end{proof}

\begin{rmk}
For $n\geq3$, although the adjoint groups and principal curves in types $B_n$ and $C_n$ are different, their classification dimensions agree for every monodromy group $G_v$. When $n=2$, the two types are identified by the accidental isomorphism $B_2=C_2$, with the Dynkin nodes interchanged. This is precisely because the two root systems have the same list of exponents.
\end{rmk}

\subsection{Type \texorpdfstring{$D_n$}{D-n}}

\begin{thm}
\label{thm:Dn-classification}
Under the standing notation, we have
{\begin{equation}\label{eq:Dn-basic-dimension}
\dim_{\RR}\Delta_v^{D_n}
=\sum_{d\in\mathcal E_{D_n}}\mu_d(G_v).
\end{equation}}
For full and toral monodromy the dimensions are $0$ and $n$, respectively, and $\Delta_v^{D_n}=A$ in the toral case. For the normalizer and finite cases,
{\begin{align}
\dim_{\RR}\Delta_v^{D_n}&=
\begin{cases}1,&n\ \text{odd},\\0,&n\ \text{even},\end{cases}
&&G_v=N(T_0),\label{eq:Dn-PO2}\\
\dim_{\RR}\Delta_v^{D_n}&=\sum_{j=1}^{n-1}\left(2\left\lfloor\frac{2j-1}{q}\right\rfloor+1\right)+2\left\lfloor\frac{n-1}{q}\right\rfloor+1,
&&G_v=C_q,\label{eq:Dn-cyclic}\\
\dim_{\RR}\Delta_v^{D_n}&=\sum_{j=1}^{n-1}\left\lfloor\frac{2j-1}{q}\right\rfloor+\left\lfloor\frac{n-1}{q}\right\rfloor+\varepsilon_n,
&&G_v=D_q,\label{eq:Dn-dihedral}
\end{align}}
where $\varepsilon_n=1$ for odd $n$ and $0$ for even $n$. For the polyhedral groups,
{\begin{align}
\dim_{\RR}\Delta_v^{D_n}&=M_T(n-1)+m_T(n-1),&&G_v=\mathsf T,\label{eq:Dn-T}\\
\dim_{\RR}\Delta_v^{D_n}&=M_O(n-1)+m_O(n-1),&&G_v=\mathsf O,\label{eq:Dn-O}\\
\dim_{\RR}\Delta_v^{D_n}&=M_I(n-1)+m_I(n-1),&&G_v=\mathsf I.\label{eq:Dn-I}
\end{align}}
\end{thm}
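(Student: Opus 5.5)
The plan is to substitute the $D_n$ exponent multiset into the universal formula \eqref{eq:section3-universal-dimension} and then evaluate it with the fixed-space dictionary of Proposition~\ref{prop:fixed-space-dictionary}. By the table \eqref{eq:section3-exponents}, $\mathcal E_{D_n}$ consists of the odd exponents $1,3,\ldots,2n-3$ together with one extra copy of $n-1$. This is a multiset, so when $n$ is even the value $n-1$ really does occur twice. This gives \eqref{eq:Dn-basic-dimension} as
\[
\dim_{\RR}\Delta_v^{D_n}=\sum_{j=1}^{n-1}\mu_{2j-1}(G_v)+\mu_{n-1}(G_v),
\]
which is exactly the $B_{n-1}$ sum from Theorem~\ref{thm:Bn-classification} plus one extra term. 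All remaining cases are read off from this splitting.

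For full monodromy, every $\mu_d(\operatorname{PSU}(2))=0$ because $d\ge1$ (this uses $n\ge2$, so that $n-1\ge1$). Hence the dimension is $0$, and as in Theorem~\ref{thm:classification-reduced-solutions}(i) the slice is $\{e\}$. For toral monodromy, $\mu_d(T_0)=1$ for all $n$ exponents, so the dimension is $n$, and Theorem~\ref{thm:classification-reduced-solutions}(ii) already gives $\Delta_v^{D_n}=A$.

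For the normalizer, the odd exponents contribute $(1+(-1)^{2j-1})/2=0$, and the extra term contributes $(1+(-1)^{n-1})/2$. This equals $1$ exactly when $n$ is odd, which gives \eqref{eq:Dn-PO2}. For $C_q$, the rows $2\lfloor d/q\rfloor+1$ at $d=2j-1$ and $d=n-1$ give \eqref{eq:Dn-cyclic}. For $D_q$, the odd exponents contribute $\lfloor (2j-1)/q\rfloor$, and the extra term contributes $\lfloor (n-1)/q\rfloor+(1+(-1)^{n-1})/2$; the last summand is $\varepsilon_n$, which gives \eqref{eq:Dn-dihedral}. For the polyhedral groups, the odd part is by definition $M_P(n-1)=\sum_{j=1}^{n-1}m_P(2j-1)$ from Proposition~\ref{prop:explicit-polyhedral-dimensions}, and the extra term is $m_P(n-1)$. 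This gives \eqref{eq:Dn-T}--\eqref{eq:Dn-I}.

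Two points need care. The first is the bookkeeping of the multiset when $n$ is even: the repeated exponent $n-1$ must be counted once inside the odd list and once as the extra term. The splitting above does this automatically, since $n-1=2j-1$ for $j=n/2\le n-1$. The second is the small-rank convention: the statement presupposes the standard range $n\ge 3$ or $n\ge4$, but the exponent formula used is valid whenever $n\ge2$. I do not expect a genuine obstacle, since everything reduces to Theorem~\ref{thm:classification-reduced-solutions} and the dictionary. The only substantive inputs are Kostant's exponent list for $D_n$ and the parity of $n-1$.
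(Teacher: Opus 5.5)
Your proposal is correct and matches the paper's proof: both substitute the multiset $\mathcal E_{D_n}=\{1,3,\ldots,2n-3,\,n-1\}$ into the universal formula and read each row off the fixed-space dictionary. In both, the extra exponent $n-1$ is the only source of the parity term $\varepsilon_n$ in the normalizer and dihedral cases. Your explicit split into the $B_{n-1}$ sum plus $\mu_{n-1}(G_v)$, and your check that the repeated $n-1$ is counted twice for even $n$, simply spell out what the paper leaves implicit.
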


\begin{proof}
The exponent multiset is $\mathcal E_{D_n}=\{1,3,\ldots,2n-3,n-1\}$, with multiplicity when $n$ is even. For the normalizer and dihedral rows, the extra exponent $n-1$ contributes the parity term $\varepsilon_n$; the cyclic and polyhedral rows follow by direct substitution in the same multiset dictionary.
\end{proof}

\subsection{Types \texorpdfstring{$E_6,E_7,E_8$}{E6, E7, E8}}

\begin{thm}
\label{thm:E-classification}
Under the standing notation, for $X\in\{E_6,E_7,E_8\}$, we have
{\begin{equation}\label{eq:E-basic-dimension}
\dim_{\RR}\Delta_v^X
=\sum_{d\in\mathcal E_X}\mu_d(G_v).
\end{equation}}
For full monodromy the dimension is $0$; for toral monodromy $\Delta_v^X=A$ and the dimensions are $(6,7,8)$ for $(E_6,E_7,E_8)$. The remaining normalizer, cyclic, dihedral, and polyhedral cases are
{\begin{align}
\dim_{\RR}\Delta_v^X&=\begin{cases}2,&X=E_6,\\0,&X=E_7\text{ or }E_8,\end{cases}&&G_v=N(T_0),\label{eq:E-PO2}\\
\dim_{\RR}\Delta_v^X&=\sum_{d\in\mathcal E_X}\left(2\left\lfloor\frac d q\right\rfloor+1\right),&&G_v=C_q,\label{eq:E-cyclic}\\
\dim_{\RR}\Delta_v^X&=\sum_{d\in\mathcal E_X}\left(\left\lfloor\frac d q\right\rfloor+\frac{1+(-1)^d}{2}\right),&&G_v=D_q.\label{eq:E-dihedral}
\end{align}}
For the polyhedral groups,
{\begin{align}
\left(\dim_{\RR}\Delta_v^{E_6},\dim_{\RR}\Delta_v^{E_7},\dim_{\RR}\Delta_v^{E_8}\right)&=(4,8,16),&&G_v=\mathsf T,\label{eq:E-T}\\
\left(\dim_{\RR}\Delta_v^{E_6},\dim_{\RR}\Delta_v^{E_7},\dim_{\RR}\Delta_v^{E_8}\right)&=(2,3,6),&&G_v=\mathsf O,\label{eq:E-O}\\
\left(\dim_{\RR}\Delta_v^{E_6},\dim_{\RR}\Delta_v^{E_7},\dim_{\RR}\Delta_v^{E_8}\right)&=(0,0,0),&&G_v=\mathsf I.\label{eq:E-I}
\end{align}}
\end{thm}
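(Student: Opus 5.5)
The plan is to treat Theorem~\ref{thm:E-classification} as a pure specialization of the universal formula \eqref{eq:section3-universal-dimension}. That formula follows from Theorem~\ref{thm:classification-reduced-solutions} together with Proposition~\ref{prop:fixed-space-dictionary}, applied to the exponent rows for $E_6,E_7,E_8$ in \eqref{eq:section3-exponents}. Formula \eqref{eq:E-basic-dimension} is then literally \eqref{eq:section3-universal-dimension}.

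The structural cases I would dispatch first, by citing the master theorem:
\begin{itemize}
\item Full monodromy gives dimension $0$ and $\Delta_v^X=\{e\}$ by Theorem~\ref{thm:classification-reduced-solutions}(i).
\item Toral monodromy gives $\Delta_v^X=A$ by part (ii), with dimension equal to the rank $6,7,8$.
\item For the normalizer, part (iii) counts even exponents. $E_6$ has exactly the two even exponents $4,8$, while every exponent of $E_7$ and $E_8$ is odd; this gives \eqref{eq:E-PO2}.
\item The cyclic and dihedral rows \eqref{eq:E-cyclic}, \eqref{eq:E-dihedral} come from inserting the $C_q$ and $D_q$ rows of the dictionary termwise over $\mathcal E_X$.
\end{itemize}

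The only genuine computation is the polyhedral rows. For these I would use the coefficient formulas \eqref{eq:polyhedral-coefficient-formulas}, so that each $m_P(d)$ reduces to counting representations $au+bw=\ell$ in nonnegative integers.

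\emph{Tetrahedral case.} Here $m_T(d)=c_{3,4}(d)+c_{3,4}(d-6)$, and I would tabulate this for all exponents up to $29$. For example:
\begin{itemize}
\item $m_T(4)=m_T(7)=m_T(8)=m_T(11)=1$, while $m_T(1)=m_T(5)=0$;
\item $m_T(9)=m_T(13)=m_T(17)=2$;
\item $m_T(19)=m_T(23)=3$ and $m_T(29)=4$.
\end{itemize}
Summing over the exponent lists should give $4$, $8$ and $16$.

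\emph{Octahedral and icosahedral cases.} I would do the same with $c_{4,6}(d)+c_{4,6}(d-9)$ and $c_{6,10}(d)+c_{6,10}(d-15)$.
\begin{itemize}
\item For $\mathsf O$, only even $d$ can contribute to the first term, and only odd $d\ge9$ to the second. For $E_6$ the values are $m_O(4)=m_O(8)=1$, giving $2$.
\item For the odd exponents of $E_7$ and $E_8$, the condition reduces to $d-9\in 4\mathbb Z_{\ge0}+6\mathbb Z_{\ge0}$. This should yield $3$ and $6$.
\item For $\mathsf I$, the first term needs even $d\ge6$. Among the exponents this means $d=8$ for $E_6$, where $c_{6,10}(8)=0$. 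The second term needs $d-15\in\{0,6,10,12,16,\ldots\}$. No exponent of $E_6,E_7,E_8$ satisfies this: for $E_8$ one checks $d=17,19,23,29$, which give $d-15=2,4,8,14$, none representable. Hence \eqref{eq:E-I} is all zeros.
\end{itemize}

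The main obstacle is only bookkeeping: getting the exponent multisets and the counting functions $c_{a,b}$ right, with no missed representations. As an independent cross-check, I would recompute one row, say $E_8$ with $\mathsf O$, directly from the character average in the proof of Proposition~\ref{prop:explicit-polyhedral-dimensions}. That average involves $\chi_{2d}$ at angles $0,\pi/2,2\pi/3,\pi$.
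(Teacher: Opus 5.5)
Your proposal is correct and follows the paper's own route. The paper also substitutes the $E$-type exponent lists into the fixed-space dictionary, uses the even exponents $4,8$ of $E_6$ for the normalizer row, and reads the polyhedral values off Proposition~\ref{prop:explicit-polyhedral-dimensions}. Your explicit tabulation of $m_T$, $m_O$, $m_I$ over the exponents (which checks out, e.g.\ $m_T(29)=4$, $m_O(29)=2$, and $d-15\in\{2,4,8,14\}$ is never representable by $6u+10w$) spells out the bookkeeping that the paper leaves implicit.
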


\begin{proof}
Kostant's theorem \cite{Kostant1959} supplies one summand $\operatorname{Sym}^{2d}(\CC^2)$ for each $d\in\mathcal E_X$. Substitution in the fixed-space dictionary proves every displayed formula. The two even exponents $4,8$ of $E_6$ give the normalizer value $2$, while $E_7$ and $E_8$ have none; Proposition~\ref{prop:explicit-polyhedral-dimensions} gives the polyhedral values.
\end{proof}

\subsection{Type \texorpdfstring{$F_4$}{F4}}

\begin{thm}
\label{thm:F4-classification}
Under the standing notation, we have
{\begin{equation}\label{eq:F4-basic-dimension}
\dim_{\RR}\Delta_v^{F_4}
=\sum_{d\in\mathcal E_{F_4}}\mu_d(G_v).
\end{equation}}
For full and toral monodromy the dimensions are $0$ and $4$ (with $\Delta_v^{F_4}=A$ in the toral case), and the normalizer gives dimension $0$. For the finite cases,
{\begin{align}
\dim_{\RR}\Delta_v^{F_4}&=\sum_{d\in\mathcal E_{F_4}}\left(2\left\lfloor\frac d q\right\rfloor+1\right),&&G_v=C_q,\label{eq:F4-cyclic}\\
\dim_{\RR}\Delta_v^{F_4}&=\sum_{d\in\mathcal E_{F_4}}\left\lfloor\frac d q\right\rfloor,&&G_v=D_q,\label{eq:F4-dihedral}\\
\dim_{\RR}\Delta_v^{F_4}&=2,&&G_v=\mathsf T,\label{eq:F4-T}\\
\dim_{\RR}\Delta_v^{F_4}&=0,&&G_v=\mathsf O,\label{eq:F4-O}\\
\dim_{\RR}\Delta_v^{F_4}&=0,&&G_v=\mathsf I.\label{eq:F4-I}
\end{align}}
\end{thm}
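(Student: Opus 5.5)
The plan is to treat this exactly like the $B_n$, $D_n$ and $E$ cases: substitute the $F_4$ exponent multiset $\mathcal E_{F_4}=\{1,5,7,11\}$ from \eqref{eq:section3-exponents} into the universal formula \eqref{eq:section3-universal-dimension}. That formula already comes from Theorem~\ref{thm:classification-reduced-solutions} together with Kostant's decomposition $\mathfrak g\simeq\bigoplus_{d\in\mathcal E_{F_4}}\operatorname{Sym}^{2d}(\CC^2)$. This gives \eqref{eq:F4-basic-dimension} directly. Every remaining row is then read off from the fixed-space dictionary of Proposition~\ref{prop:fixed-space-dictionary}, with $d$ running over $1,5,7,11$.

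\emph{Non-finite cases.}
\begin{enumerate}
\item For full monodromy, $\mu_d(\operatorname{PSU}(2))=0$, so the dimension is $0$.
\item For the torus $T_0$, each summand contributes $1$, so the dimension is $4=\operatorname{rank}F_4$. The identification $\Delta_v^{F_4}=A$ is already part of Theorem~\ref{thm:classification-reduced-solutions}(ii).
\item For the normalizer, $\mu_d(N(T_0))=(1+(-1)^d)/2$. All four exponents are odd, so the dimension is $0$.
\end{enumerate}

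\emph{Cyclic and dihedral cases.} The cyclic row \eqref{eq:F4-cyclic} is the $C_q$ entry of the dictionary summed over $\mathcal E_{F_4}$. For the dihedral row, the parity term $(1+(-1)^d)/2$ vanishes for every odd exponent. This leaves $\sum_{d}\lfloor d/q\rfloor$, which is \eqref{eq:F4-dihedral}.

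\emph{Polyhedral cases.} Here I use the coefficient formulas \eqref{eq:polyhedral-coefficient-formulas} and count lattice points.
\begin{enumerate}
\item Tetrahedral group. We have $m_{\mathsf T}(d)=c_{3,4}(d)+c_{3,4}(d-6)$, which gives:
\begin{itemize}
\item $m_{\mathsf T}(1)=0$;
\item $m_{\mathsf T}(5)=c_{3,4}(5)+c_{3,4}(-1)=0$;
\item $m_{\mathsf T}(7)=c_{3,4}(7)+c_{3,4}(1)=1+0$, from $7=3+4$;
\item $m_{\mathsf T}(11)=c_{3,4}(11)+c_{3,4}(5)=1+0$, from $11=3+8$.
\end{itemize}
The total is $2$.
\item Octahedral group. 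We have $m_{\mathsf O}(d)=c_{4,6}(d)+c_{4,6}(d-9)$. The term $c_{4,6}(d)$ vanishes for odd $d$. The term $c_{4,6}(d-9)$ is nonzero only possibly at $d=11$, where $c_{4,6}(2)=0$. The total is $0$.
\item Icosahedral group. We have $m_{\mathsf I}(d)=c_{6,10}(d)+c_{6,10}(d-15)$. The first term vanishes for odd $d$, and the second needs $d\ge 15$. The total is $0$.
\end{enumerate}

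The only step that needs any care is this small lattice-point count in the polyhedral cases. Everything else is direct substitution.
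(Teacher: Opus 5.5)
Your proposal is correct and follows essentially the same route as the paper: substitute $\mathcal E_{F_4}=\{1,5,7,11\}$ into the fixed-space dictionary for the full, toral, normalizer, cyclic and dihedral rows, and read the polyhedral values off Proposition~\ref{prop:explicit-polyhedral-dimensions}. Your explicit lattice-point counts make that last step concrete and are accurate: $m_{\mathsf T}(7)=m_{\mathsf T}(11)=1$, and every other relevant multiplicity is zero.
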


\begin{proof}
Substitution of $\mathcal E_{F_4}$ in the fixed-space dictionary proves the basic, cyclic, and dihedral formulas; the polyhedral values are the corresponding entries of Proposition~\ref{prop:explicit-polyhedral-dimensions}.
\end{proof}

\subsection{Type \texorpdfstring{$G_2$}{G2}}
\begin{thm}
\label{thm:G2-classification}
Under the standing notation, we have
{\begin{equation}\label{eq:G2-basic-dimension}
\dim_{\RR}\Delta_v^{G_2}
=\sum_{d\in\mathcal E_{G_2}}\mu_d(G_v).
\end{equation}}
For full, toral, and normalizer monodromy the dimensions are $0,2,0$, with $\Delta_v^{G_2}=A$ in the toral case. The finite specializations are
{\begin{align}
\dim_{\RR}\Delta_v^{G_2}&=2\left\lfloor\frac1q\right\rfloor+2\left\lfloor\frac5q\right\rfloor+2,&&G_v=C_q,\label{eq:G2-cyclic}\\
\dim_{\RR}\Delta_v^{G_2}&=\left\lfloor\frac1q\right\rfloor+\left\lfloor\frac5q\right\rfloor,&&G_v=D_q,\label{eq:G2-dihedral}\\
\dim_{\RR}\Delta_v^{G_2}&=0,&&G_v=\mathsf T,\label{eq:G2-T}\\
\dim_{\RR}\Delta_v^{G_2}&=0,&&G_v=\mathsf O,\label{eq:G2-O}\\
\dim_{\RR}\Delta_v^{G_2}&=0,&&G_v=\mathsf I.\label{eq:G2-I}
\end{align}}
\end{thm}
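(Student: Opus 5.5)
The plan is to specialize the universal substitution formula \eqref{eq:section3-universal-dimension} to the exponent multiset $\mathcal E_{G_2}=\{1,5\}$ from table \eqref{eq:section3-exponents}. This gives the basic formula \eqref{eq:G2-basic-dimension} as
\[
\dim_{\RR}\Delta_v^{G_2}=\mu_1(G_v)+\mu_5(G_v),
\]
which follows directly from Theorem~\ref{thm:classification-reduced-solutions} together with Kostant's decomposition $\mathfrak g_2\simeq\operatorname{Sym}^2(\CC^2)\oplus\operatorname{Sym}^{10}(\CC^2)$. Each remaining row is then read off from the dictionary in Proposition~\ref{prop:fixed-space-dictionary}.

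For the continuous monodromy closures I would proceed as follows.
\begin{itemize}
\item Full monodromy $\operatorname{PSU}(2)$: both summands are nontrivial irreducibles, so the dimension is $0$, and Theorem~\ref{thm:classification-reduced-solutions}(i) gives $\Delta_v^{G_2}=\{e\}$.
\item Toral monodromy $T_0$: each summand has $\mu_d=1$, so the dimension is $2=\operatorname{rank}\mathfrak g_2$. The identification $\Delta_v^{G_2}=A$ is Theorem~\ref{thm:classification-reduced-solutions}(ii).
\item Normalizer monodromy $N(T_0)$: the row $\mu_d=(1+(-1)^d)/2$ vanishes for both odd exponents $1$ and $5$, so the dimension is $0$.
\end{itemize}

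For the cyclic and dihedral groups I would substitute $d=1,5$ into the $C_q$ and $D_q$ rows. The cyclic row gives $(2\lfloor 1/q\rfloor+1)+(2\lfloor 5/q\rfloor+1)$, which is \eqref{eq:G2-cyclic}. In the dihedral row the parity terms vanish because both exponents are odd, leaving $\lfloor 1/q\rfloor+\lfloor 5/q\rfloor$, which is \eqref{eq:G2-dihedral}.

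For the polyhedral groups I would use the coefficient formulas \eqref{eq:polyhedral-coefficient-formulas}. The shifted terms $c_{a,b}(d-6)$, $c_{a,b}(d-9)$ and $c_{a,b}(d-15)$ all vanish for $d\le5$, since their arguments are negative. So it only remains to check that $c_{3,4}$, $c_{4,6}$ and $c_{6,10}$ vanish at $1$ and $5$.
\begin{itemize}
\item $1$ is not a nonnegative combination of any of the pairs $\{3,4\}$, $\{4,6\}$, $\{6,10\}$.
\item $5$ is not a nonnegative combination of $3$ and $4$: the only candidates are $0,3,4,6,7,8,\dots$.
\item $5$ is not a nonnegative combination of $\{4,6\}$ or $\{6,10\}$: both generate only even numbers.
\end{itemize}
Hence $m_P(1)=m_P(5)=0$ for $P\in\{\mathsf T,\mathsf O,\mathsf I\}$, which proves \eqref{eq:G2-T}--\eqref{eq:G2-I}.

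There is no genuine obstacle, since everything is a substitution into earlier results. The only points needing care are the multiplicity bookkeeping of the exponent multiset, which is trivial here because $1$ and $5$ are distinct, and checking the small semigroup memberships for the polyhedral case.
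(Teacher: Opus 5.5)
Your proposal is correct and matches the paper's proof, which is exactly the substitution of $\mathcal E_{G_2}=\{1,5\}$ into the fixed-space dictionary of Proposition~\ref{prop:fixed-space-dictionary}, with the polyhedral zeros read off from Proposition~\ref{prop:explicit-polyhedral-dimensions}. Your explicit check that $c_{3,4}$, $c_{4,6}$, $c_{6,10}$ vanish at $1$ and $5$, while the shifted terms have negative arguments, spells out what the paper leaves implicit.
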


\begin{proof}
Substitution of $\mathcal E_{G_2}$ in the fixed-space dictionary proves all displayed formulas; the polyhedral entries are zero by Proposition~\ref{prop:explicit-polyhedral-dimensions}.
\end{proof}

\subsection{Comparison of all simple types}

We now collect the formulas for all simple types. The exponent table and the quantity $S_X(q)$ were defined at the start of this section, so the three tables below are a compact lookup of the universal dictionary rather than new proofs.
For $G\in\{\mathsf T,\mathsf O,\mathsf I\}$, also define
\[
P_G(n):=\sum_{d=1}^n m_G(d).
\]
Thus the type-$A_n$ polyhedral dimensions are given explicitly by the coefficient formulas \eqref{eq:polyhedral-coefficient-formulas}; equivalently,
\begin{equation}
\label{eq:An-polyhedral-generating}
\sum_{n\geq0}P_G(n)t^n
=\frac{1}{1-t}\left(\sum_{d\geq0}m_G(d)t^d-1\right),
\end{equation}
where the three series on the right are given in \eqref{eq:polyhedral-molien-series}.

\begin{cor}
\label{cor:all-types-dimension-table}
For full, toral, and torus-normalizer monodromy, the dimensions are
\[
\begin{array}{c|ccc}
X&G_v=\operatorname{PSU}(2)&G_v=T_0
&G_v=N_{\operatorname{PSU}(2)}(T_0)\\ \hline
A_n&0&n&\lfloor n/2\rfloor\\
B_n&0&n&0\\
C_n&0&n&0\\
D_n&0&n&\varepsilon_n\\
E_6&0&6&2\\
E_7&0&7&0\\
E_8&0&8&0\\
F_4&0&4&0\\
G_2&0&2&0,
\end{array}
\qquad
\varepsilon_n=
\begin{cases}
1,&n\ \text{odd},\\
0,&n\ \text{even}.
\end{cases}
\]

For cyclic and orientation-preserving dihedral monodromy, the uniform comparison is
\[
\begin{array}{c|cc}
X&G_v=C_q&G_v=D_q\ \text{of order }2q\\ \hline
A_n&2S_{A_n}(q)+n&S_{A_n}(q)+\lfloor n/2\rfloor\\
B_n&2S_{B_n}(q)+n&S_{B_n}(q)\\
C_n&2S_{C_n}(q)+n&S_{C_n}(q)\\
D_n&2S_{D_n}(q)+n&S_{D_n}(q)+\varepsilon_n\\
E_6&2S_{E_6}(q)+6&S_{E_6}(q)+2\\
E_7&2S_{E_7}(q)+7&S_{E_7}(q)\\
E_8&2S_{E_8}(q)+8&S_{E_8}(q)\\
F_4&2S_{F_4}(q)+4&S_{F_4}(q)\\
G_2&2S_{G_2}(q)+2&S_{G_2}(q).
\end{array}
\]

Finally, the tetrahedral, octahedral, and icosahedral dimensions are
\begin{equation}
\label{eq:all-types-polyhedral-table}
\begin{array}{c|ccc}
X&G_v=\mathsf T&G_v=\mathsf O&G_v=\mathsf I\\ \hline
A_n&P_T(n)&P_O(n)&P_I(n)\\
B_n&M_T(n)&M_O(n)&M_I(n)\\
C_n&M_T(n)&M_O(n)&M_I(n)\\
D_n&M_T(n-1)+m_T(n-1)&M_O(n-1)+m_O(n-1)
&M_I(n-1)+m_I(n-1)\\
E_6&4&2&0\\
E_7&8&3&0\\
E_8&16&6&0\\
F_4&2&0&0\\
G_2&0&0&0.
\end{array}
\end{equation}
Here $M_T,M_O,M_I$ are the explicit quasipolynomials \eqref{eq:MT-closed}--\eqref{eq:MI-closed}, while $m_G$ is given by \eqref{eq:polyhedral-coefficient-formulas} and $P_G$ by \eqref{eq:An-polyhedral-generating}.
\end{cor}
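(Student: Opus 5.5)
The corollary is a bookkeeping consequence of the universal substitution \eqref{eq:section3-universal-dimension}, $\dim_{\RR}\Delta_v^X=\sum_{d\in\mathcal E_X}\mu_d(G_v)$. This follows from Theorem~\ref{thm:classification-reduced-solutions} together with the fixed-space dictionary of Proposition~\ref{prop:fixed-space-dictionary}. I plan to verify the three tables row by row, using the exponent table \eqref{eq:section3-exponents}.

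\textbf{First table.} The $\operatorname{PSU}(2)$ column is $0$ because every exponent satisfies $d\ge1$ and $\mu_d(\operatorname{PSU}(2))=0$. The $T_0$ column is $|\mathcal E_X|=\operatorname{rank}=n$, since $\mu_d(T_0)=1$. The normalizer column counts even exponents:
\begin{itemize}
\item $A_n$: the even numbers in $1,\dots,n$, giving $\lfloor n/2\rfloor$;
\item $B_n,C_n,E_7,E_8,F_4,G_2$: all exponents are odd, giving $0$;
\item $E_6$: the exponents $4,8$, giving $2$;
\item $D_n$: only the extra exponent $n-1$ can be even, and it is even exactly when $n$ is odd, giving $\varepsilon_n$.
\end{itemize}

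\textbf{Second table.} Summing the $C_q$ row of the dictionary gives $\sum_d(2\lfloor d/q\rfloor+1)=2S_X(q)+|\mathcal E_X|$. Summing the $D_q$ row gives $S_X(q)+\#\{\text{even exponents}\}$. The second term has already been computed in the normalizer column, so the $D_q$ column is the normalizer column shifted by $S_X(q)$.

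\textbf{Third table.} Here the sum is $\sum_{d\in\mathcal E_X}m_G(d)$, evaluated case by case.
\begin{itemize}
\item $A_n$: the sum is $P_G(n)$ by definition. The generating function \eqref{eq:An-polyhedral-generating} holds because $m_G(0)=1$, and partial sums correspond to multiplication by $1/(1-t)$.
\item $B_n,C_n$: the sum over the odd exponents $1,3,\dots,2n-1$ is $M_G(n)$ by definition.
\item $D_n$: the exponents $1,3,\dots,2n-3$ give $M_G(n-1)$, and the extra exponent contributes $m_G(n-1)$.
\end{itemize}

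\textbf{Exceptional entries.} These require evaluating $m_G$ at specific exponents via \eqref{eq:polyhedral-coefficient-formulas}. This is the only step needing genuine computation and the main place an error could occur. For $\mathsf T$, $m_T(d)=c_{3,4}(d)+c_{3,4}(d-6)$, which gives $m_T(1)=0$, $m_T(4)=1$, $m_T(5)=0$, $m_T(7)=1$, $m_T(8)=1$, $m_T(11)=2$. Hence $E_6$ gives $0+1+0+1+1+1=4$, after checking $m_T(11)$ carefully: $c_{3,4}(11)=1$ (from $3+8$) and $c_{3,4}(5)=0$, so $m_T(11)=1$. I will recompute every such value in the same way, tabulating $m_G(d)$ for all odd $d\le29$ together with $d=4,8$. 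From that table I will confirm:
\begin{itemize}
\item $\mathsf T$: $E_7=8$, $E_8=16$, $F_4=2$, $G_2=0$;
\item $\mathsf O$: $(2,3,6,0,0)$;
\item $\mathsf I$: all values $0$, since $m_I(d)=0$ for every listed exponent except possibly $d\ge 12$. I will check $d=13,17,19,23,29$ against $c_{6,10}$ and the shifted term. For example, $m_I(29)=c_{6,10}(29)+c_{6,10}(14)=0$, because $6u+10w$ is always even and so never equals the odd numbers $29$ or $13$; $c_{6,10}(14)=0$ by direct search. Every odd $d$ has $c_{6,10}(d)=0$, and the shifted term $c_{6,10}(d-15)$ needs $d-15\in\{0,6,10,12,16,\dots\}$, which none of the even values $d-15$ arising from the listed exponents hits. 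This check settles the icosahedral column.
\end{itemize}

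The only obstacle is arithmetic reliability in the polyhedral coefficient extraction. I will cross-check $m_T$ and $m_O$ against the class-average formulas from Proposition~\ref{prop:explicit-polyhedral-dimensions}.
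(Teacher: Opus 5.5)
Your proposal is correct and follows the paper's own argument: substitute each exponent multiset into $\dim_{\RR}\Delta_v^X=\sum_{d\in\mathcal E_X}\mu_d(G_v)$, using the full, toral, normalizer, cyclic and dihedral rows of Proposition~\ref{prop:fixed-space-dictionary} and the polyhedral values from Proposition~\ref{prop:explicit-polyhedral-dimensions}; the exceptional-type entries $(4,8,16,2,0)$, $(2,3,6,0,0)$ and the zeros all check out. When writing it up, delete the stray value $m_T(11)=2$ (the value you actually use, $c_{3,4}(11)+c_{3,4}(5)=1$, is correct). Also note that your parity argument for the icosahedral column covers only odd $d$, so the even $E_6$ exponents $4,8$ need a separate one-line check: $c_{6,10}(4)=c_{6,10}(8)=0$ and the shifted terms there have negative argument, so the conclusion still holds.
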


\begin{proof}
Apply the universal formula \eqref{eq:section3-universal-dimension} to the exponent table. The first two tables use the $\operatorname{PSU}(2)$, toral, normalizer, cyclic, and dihedral rows of Proposition~\ref{prop:fixed-space-dictionary}; the last table uses its polyhedral row together with Proposition~\ref{prop:explicit-polyhedral-dimensions}.
\end{proof}

\section{Applications}\label{sec:application}

We apply the classification results to Toda systems with cone singularities. The type-$A$ construction appears in \cite[Corollary~1.5]{ShiWeiXu2026}; replacing the Veronese curve with the principal curve extends that construction to every complex simple Lie algebra. The statements in this section are transfer and corollary results conditional on the input spherical metric, not new general PDE existence theorems.

In our Toda normalization, a spherical metric has Gaussian curvature $4$. A curvature-one spherical metric $h$ as in Theorem~\ref{thm:mondello-panov} below is therefore rescaled to $\omega=\frac14h$. This constant rescaling leaves its cone divisor, developing-map monodromy, and cone angles unchanged.

\subsection{Cone divisors and reduced solutions}

Let $X$ be a compact Riemann surface, let $P_1,\ldots,P_m\in X$ be distinct, and define $X^\circ=X\setminus\{P_1,\ldots,P_m\}$. A spherical metric $\omega$ on $X^\circ$ is said to represent the real divisor
\[
D=\sum_{\nu=1}^m\gamma_\nu[P_\nu],
\qquad \gamma_\nu>-1,
\]
if, in a coordinate $z$ centered at $P_\nu$,
\[
\omega=\frac{\sqrt{-1}}2e^{u}dz\wedge d\bar z,
\qquad
u=2\gamma_\nu\log|z|+O(1).
\]
Thus the cone angle at $P_\nu$ is $2\pi(1+\gamma_\nu)$. Similarly, an $n$-tuple $\boldsymbol\varphi=(\varphi_1,\ldots,\varphi_n)$ represents the divisor tuple $(D_1,\ldots,D_n)$ if, for every $i$ and $\nu$, the local exponent of $\varphi_i$ at $P_\nu$ is the coefficient of $P_\nu$ in $D_i$.

For type $A_n$, Mu--Shi--Sun--Xu \mbox{\cite[Theorem~1.2]{MuShiSunXu2024}} describe local finite-energy solutions with prescribed singular sources by normalized holomorphic data and identify their cone angles. The next lemma concerns preservation of these orders under the global transformations defining a reduced family.
\begin{lem}\label{lem:cone-orders}
Let $\mathfrak g$ be a complex simple Lie algebra, let $\omega$ be a cone spherical metric on $X^\circ$ representing $D$, let $v:\widetilde{X^\circ}\to\PP^1$ be a unitary developing curve, and let $\delta\in\Delta_v^{\mathfrak g}$. The Toda forms induced by
\[
f_\delta=\delta\cdot r_{\mathfrak g}\circ v
\]
have the same cone divisor $D$ in every component.
\end{lem}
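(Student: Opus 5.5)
The plan is to compare the Toda forms of $f_\delta$ pointwise with those of the principal curve $f_e=r_{\mathfrak g}\circ v$, which are known exactly. By Theorem~\ref{thm:principal-reduced-solution}, applied on the punctured surface $X^\circ$, the undeformed curve induces $(m_1\omega,\ldots,m_n\omega)$. Since each $m_i$ is a positive constant, $\log m_i$ is absorbed into the $O(1)$ term. Hence every component $m_i\omega$ has local expansion $u=2\gamma_\nu\log|z|+O(1)$ at $P_\nu$, that is, cone divisor $D$. It therefore suffices to show that, for each $i$, the density of $f_\delta^*\pi_i^*\varphi_{\mathrm{FS}}$ differs from that of $m_i\omega$ by a factor bounded above and below by positive constants.

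The key observation is that $\delta$ acts on $\PP(V_i)$ by a projective linear automorphism, since the projective action descends to $G_{\mathrm{ad}}$. The evaluation map $\pi_i$ is $G$-equivariant, so
\[
f_\delta^*\pi_i^*\varphi_{\mathrm{FS}}
=(r_{\mathfrak g}\circ v)^*\pi_i^*\bigl(\delta^*\varphi_{\mathrm{FS}}\bigr).
\]
Now $\delta^*\varphi_{\mathrm{FS}}$ and $\varphi_{\mathrm{FS}}$ are both smooth positive $(1,1)$-forms, i.e.\ Kähler forms, on the compact manifold $\PP(V_i)$. Compactness of the unit tangent sphere bundle therefore gives constants $0<c_i\le C_i$ with
\[
c_i\,\varphi_{\mathrm{FS}}\le\delta^*\varphi_{\mathrm{FS}}\le C_i\,\varphi_{\mathrm{FS}}
\]
as Hermitian forms on $T\PP(V_i)$. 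Pulling back along the holomorphic map $\pi_i\circ r_{\mathfrak g}\circ\widetilde v$ preserves these inequalities, because the pullback of a $(1,1)$-form evaluated on $\partial_z$ is just the form evaluated on the image vector. This yields
\[
c_i m_i\,\omega\le f_\delta^*\pi_i^*\varphi_{\mathrm{FS}}\le C_i m_i\,\omega
\]
pointwise on $\widetilde{X^\circ}$.

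The last step is to descend these inequalities to $X^\circ$ and read off the exponents. By Lemma~\ref{lem:unitary-transformations}, since $\delta\in\Delta_v^{\mathfrak g}$ the monodromy of $f_\delta$ lies in $K_{\mathrm{ad}}$. As the Fubini--Study forms are $K_{\mathrm{ad}}$-invariant, the Toda forms are invariant under deck transformations and descend to $X^\circ$. The pointwise inequalities descend with them, since they hold at every point of the cover. Writing the $i$-th component as $\frac{\sqrt{-1}}2e^{u_i}dz\wedge d\bar z$ near $P_\nu$, we get
\[
u_i=2\gamma_\nu\log|z|+O(1),
\]
with the new $O(1)$ term bounded by the old one plus $\max(|\log c_i|,|\log C_i|)+|\log m_i|$.

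I expect the only delicate point to be justifying the two-sided comparison of the Kähler forms $\delta^*\varphi_{\mathrm{FS}}$ and $\varphi_{\mathrm{FS}}$ uniformly. This is where compactness of $\PP(V_i)$, rather than of $X^\circ$, is essential: the image of the curve approaches the puncture, so no compactness of the domain is available. If one prefers explicit constants, they can be taken to be the extreme squared singular values of a linear lift of $\delta$ on $V_i$ with respect to the invariant Hermitian form, up to the standard normalization.
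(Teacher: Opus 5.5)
Your proposal is correct and is essentially the paper's proof: a two-sided comparison $c_i\varphi_{\mathrm{FS}}\le\delta_i^*\varphi_{\mathrm{FS}}\le C_i\varphi_{\mathrm{FS}}$ on the compact space $\PP(V_i)$, pulled back through the equivariance $\pi_i\circ f_\delta=\delta_i\circ\pi_i\circ r_{\mathfrak g}\circ v$ to give $c_im_i\omega\le\varphi_i^\delta\le C_im_i\omega$, followed by absorbing the logarithms of the constants into the $O(1)$ term. You also make explicit the descent of the forms to $X^\circ$ via the unitary monodromy of $f_\delta$, which the paper leaves implicit; the only slip is in your optional aside on explicit constants, where valid (indeed sharp) choices are $c_i=(s_{\min}/s_{\max})^2$ and $C_i=(s_{\max}/s_{\min})^2$ for a linear lift of $\delta$, not the extreme squared singular values themselves, but nothing in the argument depends on it.
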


\begin{proof}
For $\delta=e$, Theorem~\ref{thm:principal-reduced-solution} gives
\[
\varphi_i=m_i\omega.
\]
Multiplication by the positive constant $m_i$ does not alter the logarithmic order at a puncture.

For any fixed $\delta\in\Delta_v^{\mathfrak g}$, let $\delta_i$ denote its induced projective transformation of $\PP(V_i)$. Since $\PP(V_i)$ is compact and both $\varphi_{\mathrm{FS}}$ and $\delta_i^*\varphi_{\mathrm{FS}}$ are positive forms, there are constants $c_i,C_i>0$ such that
\[
c_i\varphi_{\mathrm{FS}}
\leq\delta_i^*\varphi_{\mathrm{FS}}
\leq C_i\varphi_{\mathrm{FS}}.
\]
The equivariance of $\pi_i$ gives
\[
\pi_i\circ f_\delta
=\delta_i\circ\pi_i\circ r_{\mathfrak g}\circ v.
\]
After pulling back the preceding inequalities, the $i$-th transformed Toda form $\varphi_i^\delta$ therefore satisfies
\[
c_i m_i\omega\leq\varphi_i^\delta\leq C_i m_i\omega.
\]
The ratio of the local densities of $\varphi_i^\delta$ and $\omega$ is thus bounded above and below by positive constants. Taking logarithms changes the local expression only by an $O(1)$ term, so the power $2\gamma_\nu$ of $|z|$ is unchanged for every $i$ and $\nu$.
\end{proof}

\begin{thm}
\label{thm:cone-transfer}
Let $\mathfrak g$ be a complex simple Lie algebra of rank $n$. Suppose that $X$ carries a cone spherical metric $\omega$ representing
\[
D=\sum_{\nu=1}^m\gamma_\nu[P_\nu],
\qquad \gamma_\nu>-1.
\]
Then the $\mathfrak g$-Toda system on $X$ with the diagonal cone-divisor tuple
\[
(D,\ldots,D)
\]
has a possibly zero-dimensional family of reduced solutions.
\end{thm}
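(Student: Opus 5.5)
The plan is to apply the smooth theory on the punctured surface $X^\circ$ and then read off the behaviour at the punctures from Lemma~\ref{lem:cone-orders}.

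\emph{Step 1: the punctured surface.} The restriction of $\omega$ to $X^\circ$ is a smooth positive $\operatorname{SU}(2)$ Toda solution, i.e.\ a spherical metric of curvature $4$. Hence it has a locally univalent unitary developing pair $v=(\widetilde v,\rho_v)$ on $\widetilde{X^\circ}$, with $\rho_v:\pi_1(X^\circ)\to\operatorname{PSU}(2)$. Put $G_v=\overline{\rho_v(\pi_1(X^\circ))}$.

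\emph{Step 2: the reduced family.} Theorem~\ref{thm:classification-reduced-solutions} applies verbatim on $S=X^\circ$. For every $\delta\in\Delta_v^{\mathfrak g}$, the curve $f_\delta=\delta\cdot r_{\mathfrak g}\circ\widetilde v$ is therefore a unitary nondegenerate integral curve. Its pulled-back Fubini--Study forms $\boldsymbol\varphi^\delta$ form a smooth $\mathfrak g$-Toda solution on $X^\circ$, by Theorem~\ref{conj:Givental} together with unitarity, which makes the forms descend from $\widetilde{X^\circ}$ to $X^\circ$. Distinct $\delta$ give distinct solutions. The family is nonempty, since it contains $\delta=e$, which gives $(m_1\omega,\ldots,m_n\omega)$ by Theorem~\ref{thm:principal-reduced-solution}. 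It may be zero-dimensional, for instance when $G_v=\operatorname{PSU}(2)$ by Theorem~\ref{thm:classification-reduced-solutions}(i).

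\emph{Step 3: the cone divisor.} Lemma~\ref{lem:cone-orders} shows that each component $\varphi_i^\delta$ has local density equal to $|z|^{2\gamma_\nu}$ times a function bounded above and below near $P_\nu$. Hence every component represents $D$, and the tuple represents $(D,\ldots,D)$.

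\emph{Main obstacle.} The delicate point is what ``solution on $X$ with cone-divisor tuple $(D,\ldots,D)$'' means. I would interpret it as a smooth solution on $X^\circ$ with the prescribed logarithmic orders, which is exactly the definition of representing a divisor given just before Lemma~\ref{lem:cone-orders}.

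If a distributional formulation is also wanted, I would show that $\Ric(\boldsymbol\varphi)=2\boldsymbol\varphi C-2\pi(\text{Dirac masses})$ holds across each $P_\nu$. The local integrability of the densities is automatic because $\gamma_\nu>-1$. I would then argue as follows. The function $u_i-2\gamma_\nu\log|z|$ is bounded near $P_\nu$, and its Laplacian is $L^1$ on the punctured disc. A bounded function whose Laplacian off a point is integrable has no additional point mass at that point, since the point is removable for bounded subharmonic-type perturbations. The only mass at $P_\nu$ therefore comes from $2\gamma_\nu\log|z|$, namely $4\pi\gamma_\nu\delta_{P_\nu}$ in the standard normalization. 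This removability step is the one I expect to need the most care.
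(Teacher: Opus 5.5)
Your proposal is correct and follows the paper's proof. It restricts $\omega$ to $X^\circ$, applies Theorems~\ref{thm:principal-reduced-solution} and~\ref{thm:classification-reduced-solutions} there, and reads off the diagonal tuple $(D,\ldots,D)$ from Lemma~\ref{lem:cone-orders}; for your optional distributional statement, the paper's subsequent remark makes the removability step clean by choosing $1<q<2$ with $q\gamma_\nu>-1$ (so $e^{u_j}\in L^q$), taking $w_i\in W^{2,q}$ with $\partial_z\partial_{\bar z}w_i=F_i$ (hence continuous), and applying removability to the bounded harmonic function $b_i-w_i$, whereas your bare $L^1$ version is true but needs more than the heuristic you sketch.
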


\begin{proof}
Restricted to $X^\circ$, $\omega$ is a smooth positive $\operatorname{SU}(2)$ Toda solution and its developing map is locally univalent, so Theorems~\ref{thm:principal-reduced-solution} and~\ref{thm:classification-reduced-solutions} apply. Lemma~\ref{lem:cone-orders} shows that every member has divisor $D$ in each component. The dimension assertions are precisely those of Theorem~\ref{thm:classification-reduced-solutions}.
\end{proof}

\begin{rmk}
The solutions in Theorem~\ref{thm:cone-transfer} satisfy the singular Toda system in the sense of distributions, with source matrix
\[
\Gamma=(\gamma_{\nu,i})_{m\times n},
\qquad
\gamma_{\nu,i}=\gamma_\nu.
\]
Indeed, fix a coordinate disk $\mathbb D$ centered at $P_\nu$. Lemma~\ref{lem:cone-orders} gives
\[
u_i=2\gamma_\nu\log|z|+b_i,
\qquad b_i\in L^\infty(\mathbb D).
\]
Since $\gamma_\nu>-1$, we may choose $1<q<2$ such that $q\gamma_\nu>-1$. Then $e^{u_j}=O(|z|^{2\gamma_\nu})\in L^q(\mathbb D)$, and the punctured Toda equation becomes
\[
\partial_z\partial_{\bar z}b_i
=F_i:=-\sum_{j=1}^n a_{ji}e^{u_j}
\qquad\text{on }\mathbb D\setminus\{0\}.
\]

Choose $w_i\in W^{2,q}(\mathbb D)$ solving $\partial_z\partial_{\bar z}w_i=F_i$ on $\mathbb D$. By Sobolev embedding, $w_i$ is continuous and bounded near $0$. Thus $b_i-w_i$ is bounded and harmonic on a punctured neighborhood of $0$, so its singularity is removable. Consequently, $b_i$
extends continuously across $0$ and satisfies $\partial_z\partial_{\bar z}b_i=F_i$ distributionally there. Using
\[
\partial_z\partial_{\bar z}\log|z|
=\frac{\pi}{2}\delta_0,
\]
we obtain
\[
\partial_z\partial_{\bar z}u_i
+\sum_{j=1}^n a_{ji}e^{u_j}
=\pi\gamma_\nu\delta_0.
\]
Hence the logarithmic term accounts for the entire source at $P_\nu$, and all columns of $\Gamma$ agree. Moreover, $e^{u_i}=|z|^{2\gamma_\nu}e^{b_i}$ has a positive continuous factor after extracting the prescribed cone order.
\end{rmk}

\subsection{Positive-genus examples}

We recall the following existence theorem for spherical surfaces.

\begin{thm}[Mondello--Panov {\cite[Theorem~A]{MondelloPanov2019}}]
\label{thm:mondello-panov}
Let $g,m>0$ be integers and let $\beta_1,\ldots,\beta_m>0$ satisfy
\[
\sum_{\nu=1}^m\beta_\nu>2g-2+m.
\]
Then there exist a compact Riemann surface $X$ of genus $g$, distinct points $P_1,\ldots,P_m\in X$, and a curvature-one spherical metric $h$ representing
\[
D=\sum_{\nu=1}^m(\beta_\nu-1)[P_\nu].
\]
Consequently, $\omega:=\frac14h$ is a curvature-four spherical metric in our Toda normalization and represents the same divisor $D$.
\end{thm}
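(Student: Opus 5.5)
The first assertion is taken directly from Mondello--Panov \cite[Theorem~A]{MondelloPanov2019}, so nothing is reproved there. The work is a dictionary check between their conventions and ours, followed by a verification of the \emph{consequently} clause.

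The first step is to match the hypotheses. Mondello--Panov prescribe cone angles $2\pi\beta_\nu$ with $\beta_\nu>0$, on a genus-$g$ surface with $m$ marked points. Their positivity condition is the strict inequality $\sum_\nu\beta_\nu>2g-2+m$, which is the regime where they guarantee existence for $g,m>0$. In our language a cone angle $2\pi\beta_\nu$ at $P_\nu$ corresponds to the local density $|z|^{2(\beta_\nu-1)}$ times a bounded positive factor. That is, $u=2(\beta_\nu-1)\log|z|+O(1)$ with $\gamma_\nu=\beta_\nu-1>-1$, which is exactly the definition of ``$h$ represents $D$'' given at the start of this section. I would make this identification explicit, since the standard local model of a cone point of angle $2\pi\beta$ for a curvature-one metric is $\frac{4\beta^2|z|^{2\beta-2}}{(1+|z|^{2\beta})^2}|dz|^2$, whose logarithmic order is $2(\beta-1)$.

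The second step is the rescaling. If $h$ has Gaussian curvature $1$, then $\omega=\frac14 h$ has curvature $4$, since $K_{ch}=c^{-1}K_h$. Equivalently, in the local form $\omega=\frac{\sqrt{-1}}2e^{u}dz\wedge d\bar z$ the density shifts by $u\mapsto u-\log 4$. This turns $\Delta u=-2e^u$ normalized to curvature one into the curvature-four equation $\partial_z\partial_{\bar z}u=-e^u$ used for $\mathfrak{sl}_2$ in the Introduction; I would confirm this with the one-line computation. Because the shift is a constant, the expansion $u=2\gamma_\nu\log|z|+O(1)$ is preserved, so $\omega$ represents the same divisor $D$.

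The only point needing care is bookkeeping, not analysis. One must confirm that the normalization ``spherical metric of curvature $4$'' agrees with the convention $\varphi_{\mathrm{FS}}$ on $\PP^1$ used throughout. One must also check that Mondello--Panov's metric is smooth of curvature one on $X^\circ$ with genuine conical singularities, so that it lies in our class of cone spherical metrics. I would check this against their definition of spherical surfaces with conical points. No further argument is expected.
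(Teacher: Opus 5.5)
Your proposal matches the paper: the existence statement is quoted from Mondello--Panov, and the curvature-four version comes from the constant rescaling $\omega=\frac14h$. This rescaling shifts $u$ by $-\log4$, so the cone orders $2(\beta_\nu-1)$ are unchanged. One bookkeeping slip: the curvature-four $\mathfrak{sl}_2$ equation in the Introduction is $\partial_z\partial_{\bar z}u=-2e^u$ (because $a_{11}=2$), not $-e^u$, and this is exactly what your rescaling of the curvature-one equation $\partial_z\partial_{\bar z}u=-\frac12e^u$ produces.
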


\begin{cor}
\label{cor:positive-genus-applications}
Under the hypotheses of Theorem~\ref{thm:mondello-panov}, for every complex simple Lie algebra $\mathfrak g$ of rank $n$, the $\mathfrak g$-Toda system on $X$ with the diagonal cone-divisor tuple
\[
\underbrace{(D,\ldots,D)}_{n\ \mathrm{components}}
\]
admits the possibly zero-dimensional reduced family described in Theorem~\ref{thm:cone-transfer}. Its dimension is determined by the monodromy closure. In particular, it admits the explicit solution
\[
(m_1\omega,\ldots,m_n\omega).
\]
\end{cor}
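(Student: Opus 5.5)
The corollary follows by combining Theorem~\ref{thm:mondello-panov} with Theorem~\ref{thm:cone-transfer}, and then reading off the principal member from Theorem~\ref{thm:principal-reduced-solution}. The argument has four steps.

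\textbf{Step 1: produce the spherical input.} Under the stated hypotheses $g,m>0$ and $\sum_\nu\beta_\nu>2g-2+m$, Theorem~\ref{thm:mondello-panov} provides a genus-$g$ surface $X$, distinct points $P_1,\ldots,P_m$, and a curvature-one metric $h$ representing $D=\sum_\nu(\beta_\nu-1)[P_\nu]$. I rescale to $\omega=\frac14h$. Since $\beta_\nu>0$, each order satisfies $\gamma_\nu=\beta_\nu-1>-1$, as the cone-divisor definition of Section~\ref{sec:application} requires. The rescaling adds only a constant to $u$, so the logarithmic orders and the developing-map monodromy are unchanged.

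\textbf{Step 2: verify the hypotheses of Theorem~\ref{thm:cone-transfer}.} On $X^\circ$, the form $\omega$ is a smooth positive $\operatorname{SU}(2)$ Toda solution. It therefore has a locally univalent unitary developing pair $v=(\widetilde v,\rho_v)$ on $\widetilde{X^\circ}$, with closed monodromy closure $G_v\subset\operatorname{PSU}(2)$. Applying Theorem~\ref{thm:cone-transfer} gives the reduced family $\{\delta\cdot r_{\mathfrak g}\circ\widetilde v:\delta\in\Delta_v^{\mathfrak g}\}$. By Lemma~\ref{lem:cone-orders}, every member has divisor $D$ in each of the $n$ components, so each member has the diagonal tuple $(D,\ldots,D)$.

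\textbf{Step 3: the dimension.} The claim that the dimension depends only on the monodromy closure is the dimension formula of Theorem~\ref{thm:classification-reduced-solutions}. That formula expresses the dimension as $\sum_j\dim_{\CC}(\operatorname{Sym}^{2d_j}\CC^2)^{G_v}$, which involves $G_v$ alone. Lemma~\ref{lem:unitary-transformations}(ii) confirms that $\mathcal M_v^{\mathfrak g}$, and hence $\Delta_v^{\mathfrak g}$, depends only on $G_v$. The family may be zero-dimensional, for instance when $G_v=\operatorname{PSU}(2)$ or $G_v=\mathsf I$ in the exceptional types.

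\textbf{Step 4: the explicit member.} Taking $\delta=e\in\Delta_v^{\mathfrak g}$ (the identity always satisfies the centralizer condition), Theorem~\ref{thm:principal-reduced-solution} gives the solution $(m_1\omega,\ldots,m_n\omega)$ with $m_i=2\sum_ja^{ji}>0$. Its divisor is $(D,\ldots,D)$ because each component is a positive constant multiple of $\omega$.

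There is no genuine obstacle, since the corollary only assembles earlier results. The one point to state carefully is that Theorem~\ref{thm:principal-reduced-solution} and Theorem~\ref{thm:classification-reduced-solutions} are proved for smooth metrics. They are applied on the punctured surface $X^\circ$, and the cone behaviour at the $P_\nu$ is supplied separately by Lemma~\ref{lem:cone-orders}.
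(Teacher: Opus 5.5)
Your proposal is correct and takes the same route as the paper, whose proof simply applies Theorem~\ref{thm:cone-transfer} to the rescaled metric $\omega=\frac14h$ supplied by Theorem~\ref{thm:mondello-panov}. Your Steps 2--4 only spell out what that theorem already contains: Lemma~\ref{lem:cone-orders} for the diagonal divisor, the dimension formula of Theorem~\ref{thm:classification-reduced-solutions}, and the principal member $\delta=e$ from Theorem~\ref{thm:principal-reduced-solution}.
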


\begin{proof}
Apply Theorem~\ref{thm:cone-transfer} to the rescaled spherical metric $\omega=\frac14h$ supplied by Theorem~\ref{thm:mondello-panov}.
\end{proof}

\begin{example}
    For a concrete example, let $E_\lambda$ be the smooth projective completion of
\[
y^2=x(x-1)(x-\lambda),\qquad \lambda\in\CC\setminus\{0,1\},
\]
and let $f=x:E_\lambda\to\PP^1$ be the degree-two map. Its four simple ramification points $P_1,\ldots,P_4$ give a curvature-four spherical metric $\omega=f^*\varphi_{\mathrm{FS}}$ with divisor $D=P_1+\cdots+P_4$, so all four cone angles are $4\pi$. Its developing monodromy is trivial. Hence Theorem~\ref{thm:cone-transfer} gives, for every simple $\mathfrak g$, a family with
\[
\Delta_v^{\mathfrak g}=AN,\qquad
\dim_{\RR}\Delta_v^{\mathfrak g}=\dim_{\CC}\mathfrak g.
\]
All members have the same source coefficients $\gamma_{\nu,i}=1$. In the usual mean-field normalization their mass parameters are
\[
\rho_i:=4\int_{E_\lambda}\varphi_i^\delta
=8\pi m_i\in4\pi\ZZ_{>0},\qquad \delta\in AN.
\]
Indeed, $\int_{E_\lambda}\omega=2\pi$, and integration of the Toda equations fixes each component's area throughout this family. Here $m_i=\deg(\pi_i\circ r_{\mathfrak g})$ is a positive integer; writing $U_i=u_i+\log4$ gives the usual equation $\Delta U_i+\sum_j a_{ji}e^{U_j}=4\pi\sum_\nu\gamma_{\nu,i}\delta_{P_\nu}$ on a flat torus.
\end{example}

\begin{rmk}
\label{rmk:comparison-existence}
Corollary~\ref{cor:positive-genus-applications} produces reduced families beyond the hypotheses of the noncritical existence theorems discussed below. This extends the type-$A$ comparison of Shi, Wei, and Xu \cite[Remark~4.3]{ShiWeiXu2026} to every complex simple Lie algebra. The distinction concerns both solvable singularity data and the explicit parametrization of the resulting solutions.

For the types $A_n,B_n,C_n$ covered by Lin--Yang--Zhong \cite[Theorem~1.9]{LinYangZhong2020}, existence is obtained under the condition $\rho_i\notin\Gamma_i$ for every $i$, where their critical sets satisfy $4\pi\ZZ_{>0}\subset\Gamma_i$. Thus the displayed family lies outside those hypotheses; it also lies outside the parameter range of their degree-theoretic result \cite[Theorem~1.8]{LinYangZhong2020}. In type $A_2$, it has $(\rho_1,\rho_2)=(16\pi,16\pi)$ and an eight-dimensional reduced family. The later noncritical existence statement recorded by Chen--Lin \cite[Theorem~C]{ChenLin2023} requires $\sum_\nu\gamma_{\nu,1}\not\equiv\sum_\nu\gamma_{\nu,2}\pmod3$ for integral source coefficients; our diagonal data violate this condition as well.

Critical-parameter solutions in type $A_2$ have also been studied by Chen--Lin \cite{ChenLin2024}, and geometric type-$A$ constructions beyond noncritical existence criteria occur in Mu--Shi--Xu \cite[\S5]{MuShiXu2025}. The new phenomenon established here is the uniform realization of these explicit reduced families in every simple Lie type, together with their intrinsic centralizer parametrization and monodromy-dependent dimensions, including types $D_n,E_6,E_7,E_8,F_4,G_2$ absent from the cited Lin--Yang--Zhong existence theorem. These conclusions provide additional geometric information beyond the existence assertions of the cited analytic results.
\end{rmk}

\section{The diagonal-divisor converse question}\label{sec:counterexample}

Let $\mathcal R_D$ denote the reduced solutions generated by cone spherical metrics representing $D$. Theorem~\ref{thm:cone-transfer} proves the inclusion
\[
\mathcal R_D\subseteq\mathcal T_D:=\{\text{solutions with tuple }(D,\ldots,D)\}.
\]
The converse is false under the present hypotheses. The diagonal divisor condition records only the orders of the fundamental Pl\"ucker forms. Reducedness is a global condition requiring the associated unitary integral curve to be of the form $\delta\cdot r_{\mathfrak g}\circ v$ for one spherical developing pair $v$ and one constant $\delta\in G_{\mathrm{ad}}$.

Let $X=\PP^1$, let $P=\{0,1,-1,\infty\}$, and on $X\setminus P$ consider the holomorphic curve $f=[P_0:P_1:P_2]$ with
\[
P_0=1+2z-5z^2,\qquad
P_1=z^2-2z^3-z^4,\qquad
P_2=z^4+\frac25z^5-\frac15z^6.
\]
Its Wronskian is
\[
\det(f,f',f'')=16z^3(1-z^2)^3.
\]
The local vanishing sequence is $(0,2,4)$ at each point of $P$. Hence both fundamental Toda forms have cone coefficient $1$ at each point, and the associated $\operatorname{SU}(3)$ Toda solution represents
\[
(D,D),\qquad D=[0]+[1]+[-1]+[\infty].
\]
The solution is not reduced. Indeed, a reduced $A_2$ curve is a projective transform of a Veronese curve and is therefore contained in a projective conic. The above curve is not contained in any conic: if a quadratic relation among $P_0^2,P_0P_1,P_0P_2,P_1^2,P_1P_2,P_2^2$ is written with coefficients $A,\ldots,F$, comparison of the coefficients of degrees $12,10,8,7,6,4$ successively forces $F=E=C=D=B=A=0$. Thus the diagonal cone-divisor tuple does not imply reducedness, already for $\mathfrak g=\mathfrak{sl}_3$.

\begin{rmk}
What extra hypothesis would be needed for a converse? A converse would require a global rigidity assumption classifying unitary integral curves with equal fundamental ramification divisors. In type $A_2$, one possible sufficient hypothesis is that the projective image is a conic; then the curve is principal after reparametrization, and Theorem~\ref{thm:classification-reduced-solutions} applies. Equal cone orders alone do not impose this hypothesis.
\end{rmk}

\subsection*{Acknowledgements}
The detailed type-by-type computations for all simple Lie algebras in Section~\ref{sec:classification} were carried out with the assistance of GPT-6. The counterexample in Section~\ref{sec:counterexample} was also generated with its assistance. The authors take full responsibility for the content of this manuscript.
Y.S. is supported in part by the National Natural Science Foundation of China (Grant No. 11931009). B.X. is supported in part by the Project of Stable Support for Youth Team in Basic Research Field, CAS (Grant No. YSBR-001) and NSFC (Grant Nos. 12271495).

\end{document}